\documentclass[12pt]{article}
\usepackage[T1]{fontenc}
\usepackage{lmodern}

\usepackage{sectsty}
\allsectionsfont{\textsf}
\sectionfont{\centering\sffamily}
\subsectionfont{\centering\sffamily}
\subsubsectionfont{\centering\sffamily}

\usepackage{etoolbox}
\usepackage{ascmac}
\patchcmd{\abstract}{\sfshape\abstractname}{{\sf \abstractname}}{}{}

\usepackage{amssymb}
\usepackage{bm}
\usepackage[cmex10]{amsmath}
\usepackage{empheq}
\usepackage{amsthm}
\usepackage{mathtools}
\usepackage[longnamesfirst]{natbib}
\usepackage{xcolor}
\usepackage[%
 setpagesize=false,%
 bookmarks=true,%
 bookmarksdepth=tocdepth,%
 bookmarksnumbered=true,%
 colorlinks=true,
 citecolor=black,
 urlcolor=blue,
 linkcolor=blue,%
 pagebackref=true, %
 backref=page,
 pdftitle={The Uniqueness of Exponential Second-Order Expected Utility},%
 pdfsubject={Decision Theory},%
 pdfauthor={Yosuke Hashidate},%
 pdfkeywords={Relative Entropy, Kullback-Leibler divergence, Ambiguity Aversion, Rational Inattention, Model Misspecification}%
]{hyperref}
\usepackage{hypernat}
\usepackage[margin= 1in]{geometry}
\usepackage{latexsym}
\usepackage[hiresbb]{graphicx}
\usepackage[utf8]{inputenc}
\usepackage{booktabs}
\usepackage{multirow}

\theoremstyle{plain}
\newtheorem{theorem}{Theorem}[section]
\theoremstyle{plain}
\newtheorem{proposition}{Proposition}[section]
\theoremstyle{plain}
\newtheorem{lemma}{Lemma}[section]
\theoremstyle{plain}

\theoremstyle{plain}

\theoremstyle{plain}

\theoremstyle{definition}
\newtheorem{remark}{Remark}[section]
\theoremstyle{definition}

\theoremstyle{definition}

\theoremstyle{definition}

\theoremstyle{definition}

\theoremstyle{definition}

\theoremstyle{plain}

\renewcommand*{\backref}[1]{}
\renewcommand*{\backrefalt}[4]{[%
    \ifcase #1 Not cited.%
          \or Cited on pp.~#2.%
          \else Cited on pp. #2.%
    \fi%
    ]}

\DeclareMathOperator*{\argmax}{arg\,max}
\DeclareMathOperator*{\argmin}{arg\,min}

\usepackage{tikz}
\usetikzlibrary{positioning,shapes.geometric,arrows.meta}

\usepackage{enumitem}

\begin{document}
\title{{\sf The Uniqueness of Exponential Second-Order Expected Utility}
}
\author{\textsc{Yosuke Hashidate\thanks{Affiliation: Faculty of Economics, Sophia University; Address: 7-1, Kioi-cho, Chiyoda-ku, Tokyo 102-8554, Japan; Email: {\tt hashidate@sophia.ac.jp}}}
}
\date{First Draft: August 29, 2026; Current Draft: \today}
\maketitle

\begin{abstract}
Exponential Second-Order Expected Utility (SOEU) underlies the entropic approach to model uncertainty. This paper explores in what sense that functional form is essential. In the misspecification-robust Smooth Ambiguity criterion, let a single parameter govern both the model-level robustness and the ambiguity-averse aggregation across models \citep{CVHMM_2026}. The two-layer criterion then equals Exponential SOEU for every compact set of models and every second-order prior, with the Bayesian predictive measure as the baseline. The main results are converses. On the aggregator side, matched curvature is \emph{necessary}: at a fixed curvature no other continuous, strictly increasing aggregator delivers the reduction, and with mismatched curvature there are a model set and a prior for which no single-layer entropic value, at any curvature and any baseline, reproduces the criterion. On the cost side, within the power-divergence family, which contains chi-squared and reverse Kullback--Leibler (KL), only KL has a dual of the log-sum-exp form, and the other members first depart from it at the third cumulant. Finally, the value dual to Exponential SOEU is a robust-control value: it is motivated by Rational Inattention, but no Bayes-plausible information-acquisition problem about a fixed act generates it.
\end{abstract}
\emph{Keywords}: Relative Entropy; Kullback--Leibler divergence; Ambiguity Aversion; Rational Inattention; Multiplier Preferences; Model Misspecification.
\\
\emph{JEL Classification Numbers}: D81, D83.

\section{Introduction}

The paradigms of Ambiguity Aversion and Rational Inattention have largely developed along parallel but separate tracks in economic theory. 
The Smooth Ambiguity model \\
\citep{KMM_2005} and its predictive representation \citep{DP_2022} characterize how decision-makers aggregate uncertainty over multiple statistical models. Independently, the Rational Inattention literature \citep{S_2003, PST_2023} formalizes how agents optimally acquire costly information to form posterior beliefs.

Despite their distinct origins, a striking mathematical parallel exists: both frameworks frequently center around the Kullback--Leibler (KL) divergence, i.e., relative entropy, to penalize deviations from a baseline prior. This observation motivates the central question of this paper: 
\begin{quote}
    \textit{Does there exist a common preference representation that simultaneously generates both the aggregation of smooth ambiguity and the optimal attention allocation of rational inattention?}
\end{quote}

The main contribution of this paper is to answer this question affirmatively, and \emph{exactly} rather than asymptotically. The key technical device is the misspecification-robust Smooth Ambiguity criterion recently axiomatized by \citet{CVHMM_2026}, which nests the classical Smooth Ambiguity model of \citet{KMM_2005} as a special case but additionally confronts each posited model with its own entropic robustness concern. We show that the two-layer criterion collapses to a single-layer Exponential SOEU value when the \emph{same} curvature parameter $\theta$ governs this model-level robustness and the second-order aggregation. The collapse holds for \emph{any} compact, non-degenerate set of models $\Pi$ and \emph{any} prior $\mu$ over it, with no large-sample limit. The baseline prior is the Bayesian predictive measure induced by $(\Pi,\mu)$. 

Both sides of that identity have precedents in the literature, and locating them precisely is what isolates this paper's contribution. The single-layer object on the right is a multiplier preference whose reference measure is the Bayesian predictive measure. \citet{HM_2018} solve it directly as the first of their two robustness problems, and \citet{C_2020} recommends it in his concluding section, where the likelihood and the prior are both subjected to one sensitivity analysis rather than the prior alone. The two-layer criterion on the left is \citeauthor{CVHMM_2026}'s. 
\citeauthor{CVHMM_2026} note this coincidence at the close of their Section 6.2 --- ``when the two relative entropy penalty parameters are equal, we have the preferences suggested by \citet{C_2020} in his concluding section'' --- and leave ``a full-fledged analysis of these criteria and of their relationships to future research''. This paper takes up that analysis.
What is new, and what the rest of the paper is about, is the pair of converses that this literature does not raise. Matching is not one convenient configuration among many; it is the only one under which the collapse occurs at all.

Moreover, we also show this matched-$\theta$ CARA aggregator is the \emph{only} one that works: fixing the degree of model uncertainty, i.e., $\theta$, no other continuous, strictly increasing second-order aggregator delivers an exact reduction (Theorem \ref{thm:necessity}). As illustrated in Figure \ref{fig:unification}, this identity, together with the exact Legendre duality between Exponential SOEU and a KL-proportional robust-control value (motivated by, but --- Section \ref{sec:ri-gap} shows --- formally distinct from, Rational Inattention), places Exponential SOEU exactly at the intersection of the three paradigms. It is worth saying at the outset what this identity does \emph{not} rest on. The natural first guess is to reach the entropic anchor by letting the decision-maker learn $\pi$ from data and sending the sample size to infinity. This does not work. A well-specified Bayesian learner eventually knows the truth, and a misspecified one settles on its information projection; either way the limit is a plain expected utility, and $\theta$ drops out (Proposition \ref{prop:degeneracy}). Sequential learning is, in this precise sense, a substitute for entropic ambiguity aversion rather than a complement, which is why no sample size is sent to infinity anywhere in this paper.

\begin{figure}[htbp]
    \centering
    \resizebox{1\textwidth}{!}{%
    \begin{tikzpicture}[
        conceptbox/.style={
            draw=gray!50, fill=gray!5, rounded corners=5pt, inner sep=8pt,
            align=center, text width=8cm, font=\sffamily\small
        },
        centerbox/.style={
            draw=black!70, fill=gray!12, rounded corners=5pt, inner sep=9pt,
            align=center, text width=7.2cm, font=\sffamily\small, line width=0.8pt
        },
        relationarrow/.style={
            -{Stealth[scale=1.1]}, thick, shorten >=2pt, shorten <=2pt,
            font=\sffamily\scriptsize, text centered, text width=6.4cm
        }
    ]
    \node (kmm) [conceptbox] {
        \textbf{Misspecification-Robust Smooth Ambiguity} \\
        \vspace{0.25cm}
        $\displaystyle V_\Pi(f) = \phi^{-1}\Big(\mathbb{E}_\mu\big[\phi(\min_p\{\mathbb{E}_p[u]+\theta D_{\mathrm{KL}}(p\|\pi)\})\big]\Big)$ \\
        \vspace{0.1cm}
        \citep{KMM_2005, CVHMM_2026}
    };
    
    \node (soeu) [centerbox, below=1.5cm of kmm] {
        \textbf{Exponential SOEU} \\
        \vspace{0.25cm}
        $\displaystyle V(f) = -\theta \log \mathbb{E}_q \left[\exp(-u/\theta)\right]$ \\
        \vspace{0.1cm}
        \citep{N_1993, N_2010, KMM_2005, S_2011}
    };

    \node (smooth) [conceptbox, text width=5cm, left=1.6cm of kmm] {
        \textbf{Smooth Ambiguity} \\
        \vspace{0.25cm}
        $\displaystyle V_{\mathrm{KMM}}(f) = \mathbb{E}_\mu \big[ \phi(\mathbb{E}_\pi [u(f)]) \big]$ \\
        \vspace{0.1cm}
        \citep{KMM_2005}
    };

    \node (pst) [conceptbox, below=1.5cm of soeu] {
        \textbf{Robust Control (KL Cost)} \\
        \vspace{0.25cm}
        $\displaystyle W_{\mathrm{RI}}(f) = \max_p \left\{\mathbb{E}_p[u] - \theta D_{\mathrm{KL}}(p\|q)\right\}$ \\
        \vspace{0.1cm}
        \citet{HS_2001, S_2011}; motivated by \citet{S_2003, PST_2023}
    };
    
    \draw[relationarrow] (kmm.south) -- (soeu.north)
        node[midway, right=0.15cm] {$\lambda=\xi=\theta$ --- \textbf{iff} exact reduction\\(Thm.~\ref{thm:uniqueness}(b), Thm.~\ref{thm:necessity}, Thm.~\ref{thm:mismatch-impossibility})};
        
    \draw[relationarrow] (pst.north) -- (soeu.south)
        node[midway, right=0.15cm] {KL cost --- \textbf{iff} log-sum-exp dual\\ among power divergences (Thm.~\ref{thm:uniqueness}(c), Prop.~\ref{prop:kl-uniqueness});\\ \emph{not} a genuine information-acquisition value (Prop.~\ref{prop:ri-gap})};

    \draw[relationarrow] (smooth.east) -- (kmm.west)
        node[midway, above=0.15cm, text width=3.4cm] {nested as $\theta\to\infty$\\ in the inner problem};

    \draw[-{Stealth[scale=1.1]}, thick, dashed] (smooth.south) |- (soeu.west)
        node[pos=0.72, above=0.1cm, font=\sffamily\scriptsize, align=center, text width=3.6cm]
        {\emph{no} route on its own\\(Prop.~\ref{prop:degeneracy})};

    \end{tikzpicture}%
    }
    \caption{Exponential SOEU as the unique entropic anchor. The two solid arrows into the centre box are \emph{iff} results (the lower one within the power-divergence family of Proposition \ref{prop:kl-uniqueness}), not merely sufficient conditions. Plain Smooth Ambiguity enters only as the boundary case that the robust criterion nests; on its own it has no route to the anchor.}
    \label{fig:unification}
\end{figure}

\section{Setup}

\subsection{Primitives and Regularity Conditions}
\label{sec:primitives}

Throughout, $\Omega$ is a \emph{finite} state space and $X$ is a separable metric consequence space, with $\Delta(X)$ its set of Borel probability measures. Let 
\begin{align*}
    B(X) := \{u: X \to \mathbb{R} \mid u \text{ bounded and Borel measurable}\}
\end{align*}
denote the space of bounded utility indices on $X$, and let $\mathcal{F}$ denote the set of all (Anscombe--Aumann) acts ($f: \Omega \rightarrow \Delta(X)$). We extend $u\in B(X)$ to $\Delta(X)$ by $u(\ell):=\int_X u\,d\ell$, so that $u(f):=u\circ f\in\mathbb{R}^\Omega$ for every $f\in\mathcal{F}$, and we write
\begin{align*}
    \mathcal{W} \;:=\; \{u(f) \,:\, f\in\mathcal{F}\} \;\subseteq\; \mathbb{R}^\Omega
\end{align*}
for the induced set of state-contingent payoff profiles. Write $\underline u:=\inf_X u$ and $\bar u:=\sup_X u$, and assume throughout that $u$ is non-constant, so that $\underline u<\bar u$ (otherwise every result below is vacuous). Mixtures in $\Delta(X)$ sweep out $(\underline u,\bar u)$ pointwise and independently across states, so $\mathcal{W}\supseteq(\underline u,\bar u)^\Omega$; in particular $\mathcal{W}$ has nonempty interior in $\mathbb{R}^\Omega$ and contains the constant profiles $(c,\dots,c)$ for every $c\in(\underline u,\bar u)$, facts used in Theorem \ref{thm:necessity} and Proposition \ref{prop:kl-uniqueness}. Boundedness of $u$ (so $\underline u,\bar u$ are finite) guarantees that $\mathbb{E}_p[u(f)]$ and $\mathbb{E}_q[\exp(\pm u(f)/\theta)]$ are finite for every $p,q \in \Delta(\Omega)$, every $f\in\mathcal{F}$, and every $\theta > 0$. Restricting $\Omega$ to be finite keeps the exposition self-contained, and it is worth saying at the outset where that restriction is a convenience and where it is not. Lemmas \ref{lem:dv} and \ref{lem:predictive-reduction} extend to a compact Polish $\Omega$ without change, via the general apparatus of \citet{DE_1997}. Theorems \ref{thm:necessity} and \ref{thm:mismatch-impossibility} are stated for finite $\Omega$ because their proofs use linear algebra on $\mathbb{R}^{|\Omega|}$ to sweep out an open set of acts. Remark \ref{rem:binary-richness} shows that both extend to an arbitrary measurable state space under a mild richness condition on the model set, by restricting attention to a two-point sub-$\sigma$-algebra. Proposition \ref{prop:degeneracy} is the genuine exception: finiteness is essential to it, not cosmetic, for reasons Appendix \ref{app:degeneracy} makes precise. Let $\Delta_{++}(\Omega) \subset \Delta(\Omega)$ denote the relative interior of the simplex (full-support distributions). Whenever $D_{\mathrm{KL}}(p\|q)$ appears below, $p, q \in \Delta_{++}(\Omega)$, so that $p \ll q$ automatically and $D_{\mathrm{KL}}(p\|q) < \infty$. The set of first-order models $\Pi \subset \Delta_{++}(\Omega)$ is a compact set (playing the role of \citeauthor{CVHMM_2026}'s (\citeyear{CVHMM_2026}) set $Q$ of structured models), and second-order priors $\mu \in \Delta(\Pi)$ are Borel probability measures. Unlike in Section \ref{sec:kmm-plain} below, $\Pi$ need \emph{not} be convex and need \emph{not} contain the baseline $q$: Section \ref{sec:kmm-robust} shows that $q$ instead \emph{emerges} from $(\Pi,\mu)$ as the induced predictive measure.

\subsection{Second-Order Expected Utility}
\label{sec:soeu-def}
The Second-Order Expected Utility (SOEU) representation was first proposed by \citet{N_1993} (see also \citealp{N_2010}). It coincides in functional form with the Smooth Ambiguity representation of \citet[Thm.~1, eq.~(2)]{KMM_2005}, who note that \citeauthor{N_1993}'s model has ``a functional form identical to ours''. SOEU evaluates an act via a double expectation separated by a nonlinear function $\phi$:
\begin{equation}
    V_{\mathrm{SOEU}}(f) = \phi^{-1} \left( \mathbb{E}_q [ \phi(\mathbb{E}_p[u(f)]) ] \right),
    \label{eq:soeu-def}
\end{equation}
where $q \in \Delta_{++}(\Omega)$ is a baseline prior.\footnote{Equation \eqref{eq:soeu-def} uses the \emph{certainty-equivalent} normalization of \citet{S_2011}: the outer $\phi^{-1}$ puts the value in utils, which is what lets it coincide with $V(f)$ below when $\phi$ is exponential. \citet{N_1993, N_2010} and \citet{KMM_2005} state SOEU without the outer $\phi^{-1}$; the two forms are ordinally equivalent.}

\subsection{Smooth Ambiguity}
\label{sec:kmm-plain}
Following \citet{KMM_2005} and the predictive formulation by \citet{DP_2022}, the smooth ambiguity model evaluates acts over the set of first-order models $\Pi$ of Section \ref{sec:primitives}:
\begin{equation}
    V_{\mathrm{KMM}}(f) = \mathbb{E}_\mu \left[ \phi(\mathbb{E}_\pi[u(f)]) \right],
\end{equation}
where $\mu \in \Delta(\Pi)$ is a second-order prior. As Proposition \ref{prop:degeneracy} below shows, letting the decision-maker learn $\pi$ from data drives $V_{\mathrm{KMM}}$ to plain expected utility: on its own, $V_{\mathrm{KMM}}$ has no route to the entropic anchor. Section \ref{sec:kmm-robust} introduces the richer criterion that does.

\subsection{Misspecification-Robust Smooth Ambiguity}
\label{sec:kmm-robust}

Each model $\pi\in\Pi$ that the decision-maker posits is itself only a simplification of the truth. \citet{CVHMM_2026} formalize this concern by additionally penalizing departures from each posited $\pi$: rather than evaluating $\pi$ through the raw expectation $\mathbb{E}_\pi[u(f)]$, the decision-maker evaluates it through the \emph{robust control}, 
\begin{equation}
    v_\pi(f) := \min_{p\in\Delta_{++}(\Omega)} \big\{ \mathbb{E}_p[u(f)] + \theta\, D_{\mathrm{KL}}(p\|\pi) \big\},
    \label{eq:robustified-model-value}
\end{equation}
and then aggregates $v_\pi(f)$ across $\pi\in\Pi$ exactly as in Smooth Ambiguity, using the exponential $\phi$ of Assumption E1, with the \emph{same} $\theta$ that disciplines the model-level robustness in \eqref{eq:robustified-model-value}:
\begin{equation}
    V_\Pi(f) := \phi^{-1}\Big( \mathbb{E}_\mu\big[ \phi(v_\pi(f)) \big] \Big).
    \label{eq:robust-kmm}
\end{equation}
In full generality, \eqref{eq:robustified-model-value} and \eqref{eq:robust-kmm} carry \emph{two} independent curvature parameters, which it is convenient to name once and for all. Write $\lambda>0$ for the parameter disciplining the model-level robustification and $\xi>0$ for the parameter disciplining the outer aggregation through $\phi_\xi(x):=-\exp(-x/\xi)$, so that
\begin{align}
    v^\lambda_\pi(f) &:= \min_{p\in\Delta_{++}(\Omega)}\big\{\mathbb{E}_p[u(f)]+\lambda D_{\mathrm{KL}}(p\|\pi)\big\},
    &
    V^{\lambda,\xi}_\Pi(f) &:= \phi_\xi^{-1}\Big(\mathbb{E}_\mu\big[\phi_\xi\big(v^\lambda_\pi(f)\big)\big]\Big).
    \label{eq:two-parameter}
\end{align}
We say that the decision-maker's curvature is \emph{matched} if $\lambda=\xi$ and \emph{mismatched} otherwise, and we write $\theta$ for the common value under matching, so that $v_\pi=v^\theta_\pi$ and $V_\Pi=V^{\theta,\theta}_\Pi$ recover \eqref{eq:robustified-model-value} and \eqref{eq:robust-kmm}. Equations \eqref{eq:robustified-model-value}--\eqref{eq:two-parameter} are thus \citet{CVHMM_2026}'s criterion (33) --- the entropic specialization of their general quasi-arithmetic criterion (32), itself an instance of the aggregator representation of their Proposition 7 --- with their $\lambda$ disciplining \eqref{eq:robustified-model-value} and their $\xi$ the outer $\phi_\xi(t)=-\exp(-t/\xi)$.\footnote{In their own words, ``the parameter $\xi>0$ captures aversion to prior uncertainty, while the parameter $\lambda>0$ is a fear of model misspecification index''; note that each is an entropic penalty parameter rather than a curvature, so the associated Arrow--Pratt index is $1/\xi$.} The criterion nests plain Smooth Ambiguity as $\theta\to\infty$ in \eqref{eq:robustified-model-value} alone (so that $v_\pi(f)\to\mathbb{E}_\pi[u(f)]$, recovering $V_{\mathrm{KMM}}$), and nests \citeauthor{GS_1989}'s (\citeyear{GS_1989}) max-min criterion as $\theta\to0$ in the outer aggregation. For $\theta\in(0,\infty)$ throughout, \eqref{eq:robust-kmm} is a genuine two-layer, non-degenerate smooth-ambiguity criterion over \emph{any} compact $\Pi$ and \emph{any} $\mu\in\Delta(\Pi)$ --- and Lemma \ref{lem:predictive-reduction} below shows it equals the Exponential SOEU value exactly.\footnote{We use their Section 6.1 criterion (33), with a \emph{single} second-order prior $\mu$. Their Section 6.2 criterion (37), a ``robust Bayesian'' variant, instead minimizes over priors $\nu$ with a penalty $d(\nu)$. For the entropic penalty $d(\nu)=\xi D_{\mathrm{KL}}(\nu\|\mu)$ the two coincide, since (33) is the reduced form of (37) by Donsker--Varadhan duality at the prior level. They differ only for a generic $d$, which we do not use.}

\subsection{A Robust-Control Value} 
\label{sec:wri}
In the tradition of \citet{HS_2001}'s robust control and \citet{S_2011}'s axiomatization of multiplier preferences, fix a baseline $q\in\Delta_{++}(\Omega)$ and define, for an information cost $C(p,q)$ and any payoff profile $w\in\mathbb{R}^\Omega$,
\begin{equation}
    W_{\mathrm{RI}}(w) = \sup_{p\in\Delta(\Omega)} \left\{ \mathbb{E}_p[w] - C(p, q) \right\},
    \label{eq:wri}
\end{equation}
abbreviating $W_{\mathrm{RI}}(f):=W_{\mathrm{RI}}(u(f))$ for $f\in\mathcal{F}$. Defining $W_{\mathrm{RI}}$ on all of $\mathbb{R}^\Omega$ rather than on $\mathcal{F}$ matters below, since $-u(f)$ need not itself be the payoff profile of any act. Under Assumption E3 the supremum is attained, by Lemma \ref{lem:dv}, and we write $\max$ accordingly.
The subscript is a deliberate nod to the Rational Inattention literature of \citet{S_2003} and \citet{PST_2023}. The cost $C(p,q)=\theta D_{\mathrm{KL}}(p\|q)$ (Assumption E3 below) is the one that appears there in full information-acquisition problems, and \eqref{eq:wri} looks like the natural conjecture for the rationally inattentive value of a \emph{single, exogenously fixed} act $f$. Section \ref{sec:ri-gap} shows this conjecture is not literally correct --- \eqref{eq:wri} is not the value of any Bayes-plausible information-acquisition problem about $f$ --- and explains what \eqref{eq:wri} is instead.

\subsubsection{Why Not a Genuine Information-Acquisition Problem?}
\label{sec:ri-gap}

Equation \eqref{eq:wri} invites a natural reading. A decision-maker with prior $q$ acquires costly information about $\omega$ before evaluating a fixed act $f$. She chooses an information structure, that is, a Bayes-plausible distribution over posteriors $\tau\in\Delta(\Delta_{++}(\Omega))$ with $\mathbb{E}_\tau[p]=q$, and trades off the value of what she learns against the posterior-separable cost $\theta\,\mathbb{E}_\tau[D_{\mathrm{KL}}(p\|q)]$, as in \citet{S_2003} and \citet{PST_2023}.\footnote{The cost is \emph{uniformly posterior separable}: the same function $\theta D_{\mathrm{KL}}(p\|q)$ is evaluated at each realized posterior $p$ and averaged under $\tau$. This is the standard specification in the Rational Inattention literature following \citet{S_2003}, and it makes \eqref{eq:ri-degenerate} a genuine information-acquisition problem, although a degenerate one (Proposition \ref{prop:ri-gap}). The distribution $\tau$ over first-order beliefs should not be confused with the second-order prior $\mu\in\Delta(\Pi)$.} Call this genuine, Bayes-plausible value\footnote{The superscript records that, since $f$ is a single fixed act rather than a menu of actions to be matched to the signal, this is the information-acquisition problem for the \emph{degenerate}, one-action decision problem induced by $f$.}
\begin{equation}
    \mathcal{V}^{\deg}_{\mathrm{RI}}(q) := \sup_{\substack{\tau\in\Delta(\Delta_{++}(\Omega))\\ \mathbb{E}_\tau[p]=q}} \Big\{ \mathbb{E}_\tau\big[\mathbb{E}_p[u(f)]\big] - \theta\,\mathbb{E}_\tau\big[D_{\mathrm{KL}}(p\|q)\big] \Big\}.
    \label{eq:ri-degenerate}
\end{equation}
The next result shows $\mathcal{V}^{\deg}_{\mathrm{RI}}(q)$ is neither \eqref{eq:wri} nor $V(f)$: it is a third, genuinely different quantity.

\begin{proposition}[Single-Act Information Design Is Trivial]
\label{prop:ri-gap}
Fix $q\in\Delta_{++}(\Omega)$, $\theta>0$, and $f\in\mathcal{F}$. Then
\begin{enumerate}[label=(\roman*)]
    \item $\mathcal{V}^{\deg}_{\mathrm{RI}}(q) = \mathbb{E}_q[u(f)]$, attained uniquely at $\tau=\delta_q$ (acquire no information).
    \item Moreover, we have
    $$
    V(f) \;\le\; \mathbb{E}_q[u(f)] \;=\; \mathcal{V}^{\deg}_{\mathrm{RI}}(q) \;\le\; W_{\mathrm{RI}}(f),
    $$
    with \emph{both} inequalities strict whenever $u(f)$ is non-constant on $\mathrm{supp}(q)$. 
\end{enumerate}
\end{proposition}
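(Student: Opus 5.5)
Part (i) comes down to a pointwise collapse of the objective in \eqref{eq:ri-degenerate}. Let $\tau$ be any Bayes-plausible distribution, so $\mathbb{E}_\tau[p]=q$. The map $p\mapsto \mathbb{E}_p[u(f)]$ is linear in $p$ and bounded, since $u$ is bounded and $\Omega$ is finite. Exchanging the integrals therefore gives
\[
\mathbb{E}_\tau\big[\mathbb{E}_p[u(f)]\big]=\mathbb{E}_{\mathbb{E}_\tau[p]}[u(f)]=\mathbb{E}_q[u(f)].
\]
The first term of the objective thus equals $\mathbb{E}_q[u(f)]$ for every feasible $\tau$. The objective becomes $\mathbb{E}_q[u(f)]-\theta\,\mathbb{E}_\tau[D_{\mathrm{KL}}(p\|q)]$. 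Because $D_{\mathrm{KL}}(p\|q)\ge 0$ with equality iff $p=q$ (Gibbs' inequality), the penalty is nonnegative, and it vanishes iff $\tau(\{p\neq q\})=0$, i.e.\ iff $\tau=\delta_q$. The point $\delta_q$ is feasible, so the supremum equals $\mathbb{E}_q[u(f)]$ and is attained uniquely there.

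There is one measurability detail. $D_{\mathrm{KL}}(\cdot\|q)$ is continuous and nonnegative on $\Delta_{++}(\Omega)$, so its $\tau$-integral is well defined, possibly $+\infty$. If it is $+\infty$ for some $\tau$, that $\tau$ is simply worse than $\delta_q$. For uniqueness I will argue directly: if $\tau\neq\delta_q$, then $\tau$ puts positive mass on some set $\{p: D_{\mathrm{KL}}(p\|q)>\varepsilon\}$, which makes the penalty strictly positive.

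For part (ii) I will use the closed forms from Lemma \ref{lem:dv} (Donsker--Varadhan):
\[
V(f)=-\theta\log\mathbb{E}_q[e^{-u(f)/\theta}],\qquad W_{\mathrm{RI}}(f)=\theta\log\mathbb{E}_q[e^{u(f)/\theta}].
\]
The first inequality then follows from Jensen's inequality for the strictly convex function $\exp$: $\mathbb{E}_q[e^{-u/\theta}]\ge e^{-\mathbb{E}_q[u]/\theta}$, and taking $-\theta\log$ gives $V(f)\le\mathbb{E}_q[u(f)]$. The last inequality is the same argument applied to $e^{u/\theta}$. Alternatively, both follow variationally: set $p=q$ in the min defining $V$ and in the max defining $W_{\mathrm{RI}}$. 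The middle equality is part (i).

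Strictness is the only step needing care, and I expect it to be the main (mild) obstacle. Jensen's inequality is strict iff $u(f)$ is not $q$-a.s.\ constant. Since $q$ has full support, this means exactly that $u(f)$ is non-constant on $\mathrm{supp}(q)=\Omega$. The variational route gives the same conclusion. The Gibbs optimizers $p^*\propto q\,e^{\mp u/\theta}$ differ from $q$ exactly when $u(f)$ is non-constant. Uniqueness of the optimizer, by strict convexity of KL, then forces the value at $p=q$ to be strictly worse than the optimum. I will state strictness via the Jensen route, since it needs nothing beyond full support of $q$.
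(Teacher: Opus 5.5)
Your proof is correct, and part (i) follows the paper's argument: linearity plus Bayes plausibility makes the payoff term constant, and Gibbs' inequality does the rest. Your countable-union argument for uniqueness is more careful than the paper's one-line claim. The worry that $\mathbb{E}_\tau[D_{\mathrm{KL}}(p\|q)]$ might be $+\infty$ is moot, though, since $D_{\mathrm{KL}}(p\|q)\le-\log\min_\omega q(\omega)$ for full-support $q$.

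For (ii) your main route differs from the paper's. You apply strict Jensen to the closed forms $-\theta\log\mathbb{E}_q[e^{-u(f)/\theta}]$ and $\theta\log\mathbb{E}_q[e^{u(f)/\theta}]$. The paper explicitly avoids a convexity argument. It evaluates the identity from the proof of Lemma \ref{lem:dv} at $p=q$, which gives the exact gaps $\mathbb{E}_q[u(f)]-V(f)=\theta D_{\mathrm{KL}}(q\|p^*)$ and $W_{\mathrm{RI}}(f)-\mathbb{E}_q[u(f)]=\theta D_{\mathrm{KL}}(q\|p^\circ)$. Strictness is then read off from $q\ne p^*$ (respectively $q\ne p^\circ$).

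The two routes measure the same quantity. A direct computation with $p^*\propto q\,e^{-u(f)/\theta}$ gives $\theta D_{\mathrm{KL}}(q\|p^*)=\mathbb{E}_q[u(f)]+\theta\log\mathbb{E}_q[e^{-u(f)/\theta}]$, which is exactly your Jensen gap. So the paper's formula is a quantitative version of your step.

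Your route is more elementary and self-contained once the closed forms are in hand. The paper's reuses an identity it has already proved and names the gap as a divergence to the Gibbs tilt. Your secondary ``uniqueness of the optimizer'' argument sits closer to the paper, but it still rests on strict convexity rather than the exact gap formula.
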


In particular, $W_{\mathrm{RI}}(f)$ --- and hence $V(f)=-W_{\mathrm{RI}}(-u(f))$ (Lemma \ref{lem:legendre}) --- is never equal to the value of a Bayes-plausible information-acquisition problem about the fixed act $f$, except in the vacuous case where $f$ is already state-independent (up to $q$-null states).

\begin{proof}
For any Bayes-plausible $\tau$ (i.e.\ $\mathbb{E}_\tau[p]=q$), the map $p\mapsto\mathbb{E}_p[u(f)]$ is linear, so
\begin{align*}
    \mathbb{E}_\tau\big[\mathbb{E}_p[u(f)]\big] \;=\; \mathbb{E}_{\mathbb{E}_\tau[p]}[u(f)] \;=\; \mathbb{E}_q[u(f)]:
\end{align*}
the objective's first term is the \emph{same constant for every Bayes-plausible $\tau$}. Hence \eqref{eq:ri-degenerate} reduces to $\mathbb{E}_q[u(f)] - \theta\min_\tau \mathbb{E}_\tau[D_{\mathrm{KL}}(p\|q)]$, and since $D_{\mathrm{KL}}(p\|q)\ge0$ with equality iff $p=q$, the minimum over Bayes-plausible $\tau$ is $0$, uniquely at $\tau=\delta_q$; this proves the first claim.

For the inequalities, Lemma \ref{lem:dv} gives
\begin{align*}
    V(f) \;=\; \min_p\big\{\mathbb{E}_p[u(f)]+\theta D_{\mathrm{KL}}(p\|q)\big\}
    \;\le\; \mathbb{E}_q[u(f)]+\theta D_{\mathrm{KL}}(q\|q)
    \;=\; \mathbb{E}_q[u(f)].
\end{align*}
The gap is exactly computable, so no separate convexity argument is needed. The proof of Lemma \ref{lem:dv} establishes the identity $\mathbb{E}_p[u(f)]+\theta D_{\mathrm{KL}}(p\|q)-V(f)=\theta D_{\mathrm{KL}}(p\|p^*)$ for every $p\ll q$, where $p^*(\omega)\propto q(\omega)\exp(-u(f(\omega))/\theta)$ is the Gibbs minimizer; evaluating it at $p=q$ gives
\begin{align*}
    \mathbb{E}_q[u(f)] - V(f) \;=\; \theta D_{\mathrm{KL}}(q\|p^*),
\end{align*}
which is strictly positive unless $q=p^*$, that is, unless $u(f)$ is constant on $\mathrm{supp}(q)$. Symmetrically, applying the same argument to $-u(f)$ and using
\begin{align*}
    W_{\mathrm{RI}}(f) \;=\; -\Big(\min_p\big\{\mathbb{E}_p[-u(f)]+\theta D_{\mathrm{KL}}(p\|q)\big\}\Big)
\end{align*}
gives
\begin{align*}
    W_{\mathrm{RI}}(f)-\mathbb{E}_q[u(f)] \;=\; \theta D_{\mathrm{KL}}(q\|p^\circ),
    \qquad p^\circ(\omega)\propto q(\omega)\exp\big(u(f(\omega))/\theta\big),
\end{align*}
so $W_{\mathrm{RI}}(f)\ge \mathbb{E}_q[u(f)]$, strictly whenever $u(f)$ is non-constant on $\mathrm{supp}(q)$.
\end{proof}

Proposition \ref{prop:ri-gap} shows where the informal reading of \eqref{eq:wri} as ``the rationally inattentive value of act $f$'' breaks down, and why. With a single fixed act and no action to adapt, $p\mapsto\mathbb{E}_p[u(f)]$ is linear. By the martingale property of Bayes-plausible beliefs, \emph{any} information structure then yields the same expected payoff $\mathbb{E}_q[u(f)]$. Costly information can only add cost, so the genuine information-acquisition problem \eqref{eq:ri-degenerate} is solved by acquiring no information at all. This is the same economic force as Proposition \ref{prop:degeneracy} below, from the opposite direction: there, learning the true model exhausts ambiguity because there is nothing left to be ambiguous about; here, acquiring information about a fixed act's payoff is worthless because there is no action left to inform. Both are instances of one principle. In this paper, entropic terms $\theta D_{\mathrm{KL}}(p\|q)$ do their work only through \emph{robustification}, an adversarial re-weighting of a fixed evaluation as in \eqref{eq:robustified-model-value} and in $V(f)$ itself, or through \emph{aggregation} across distinct models as in \eqref{eq:robust-kmm}. They never enter through Bayesian information acquisition about a single act.

$W_{\mathrm{RI}}(f)$ is, instead, exactly what Lemma \ref{lem:dv} makes it: the Donsker--Varadhan dual evaluated at $-u(f)$, and this has a precise name in the literature. Extend \eqref{eq:wri} to a genuine menu of acts $A$, so that $\max_{a\in A}W_{\mathrm{RI}}(a) = \max_{a\in A,\,p}\{\mathbb{E}_p[u(a)]-\theta D_{\mathrm{KL}}(p\|q)\}$. This is behaviorally \emph{identical} to expected-utility maximization under the convex index $\exp(u/\theta)$. It is the \emph{Wishful Thinking} representation of \citet{CL_2019}: the decision-maker optimistically distorts her own belief, at a cost they take to be exactly $\theta D_{\mathrm{KL}}$ following \citet{HS_2001}, to feel better about a chosen act rather than to acquire a signal about $\omega$. They derive the exponential transformation themselves, and \citet[Prop.~1]{RSS_2023} reprove it with the Donsker--Varadhan argument used in this paper. Their Footnote 6 records the mirror-image fact, due to \citet{S_2011}: the corresponding \emph{minimization} $\min_p\{\mathbb{E}_p[u(a)]+\theta D_{\mathrm{KL}}(p\|q)\}$, which is $V$ itself (Lemma \ref{lem:dv}), is behaviorally identical over $a\in A$ to expected-utility maximization under the \emph{concave} CARA index $-\exp(-u/\theta)$. This index is exactly the $\phi$ of Assumption E1. $V$ and $W_{\mathrm{RI}}$ are thus the two mirror-image faces of one Donsker--Varadhan identity: pessimistic, self-protective robust control on the one side; optimistic, self-serving wishful thinking on the other. 

Neither is Rational Inattention in the information-acquisition sense of \citet{S_2003} and \citet{PST_2023}. As \citet[\S II.B]{RSS_2023} make precise, genuine Rational Inattention (\citealp{MM_2015}) requires a real menu of acts and a \emph{state-contingent stochastic choice rule}. Its value has the ``expected-log'' (Kelly-type) form
\begin{align*}
    \max_{\alpha\in\Delta(A)}\mathbb{E}_q\big[\log\mathbb{E}_\alpha[\exp(u(a)/\theta)]\big],
\end{align*}
not the ``log-of-expectation'' (CARA-type) form of $V$ and $W_{\mathrm{RI}}$. Appendix~\ref{app:ri-menu} makes this comparison precise and shows that, once a menu is present, $V$ and $W_{\mathrm{RI}}$ bound the genuine value from either side. Assumption E3 should accordingly be read as directly postulating that robust-control cost (as Section \ref{sec:main-characterization} already does), not as derived from, or equivalent to, Rational Inattention; Remark \ref{rem:pst-gap} below sharpens this further at the level of \citet{PST_2023}'s own experiment-based cost function.

\subsection{Sequential Learning}
\label{sec:learning-env}

Before turning to the main results, it is worth explaining why the natural first guess---reaching the entropic anchor by letting the decision-maker \emph{learn} $\pi$ from data and sending the sample size to infinity---does not work. This failure explains why the rest of the paper adopts the approach of Section \ref{sec:kmm-robust} (matched-$\theta$ robustification, no limit at all).

Suppose there is a true model $\pi^\dagger$ generating an i.i.d.\ sequence $\omega_1,\omega_2,\ldots\sim\pi^\dagger$, and a Bayesian with prior $\mu_0\in\Delta(\Pi)$ forms the posterior $\mu_N(\cdot\mid\omega^N)$ after observing $\omega^N=(\omega_1,\ldots,\omega_N)$. Bayes' rule then reduces to an exact tilting of $\mu_0$ by the empirical Kullback--Leibler divergence, displayed as \eqref{eq:bayes-gibbs} in Appendix \ref{app:degeneracy}, and the empirical distribution converges to $\pi^\dagger$ almost surely.

\begin{proposition}[Learning Degenerates the Smooth-Ambiguity Value]
\label{prop:degeneracy}
Let $\Pi\subset\Delta_{++}(\Omega)$ be compact, let $\mu_0\in\Delta(\Pi)$, and let $\phi$ be continuous. Suppose $\pi^\dagger\in\mathrm{supp}\,\mu_0$ (which in particular forces $\pi^\dagger\in\Pi$). Then $\mu_N(\cdot\mid\omega^N)\Rightarrow\delta_{\pi^\dagger}$ weakly, $(\pi^\dagger)^{\otimes\infty}$-almost surely, at an exponential rate made explicit in Appendix \ref{app:degeneracy}. Consequently, for every $f\in\mathcal{F}$,
\begin{align*}
    \lim_{N\to\infty} \int_\Pi \phi\big(\mathbb{E}_\pi[u(f)]\big)\,\mu_N(d\pi\mid\omega^N) \;=\; \phi\big(\mathbb{E}_{\pi^\dagger}[u(f)]\big),
    \qquad (\pi^\dagger)^{\otimes\infty}\text{-a.s.}
\end{align*}
Conversely, if $\pi^\dagger\notin\mathrm{supp}\,\mu_0$, then $\mu_N(\mathrm{supp}\,\mu_0)=1$ for every $N$ and every realization of the sample path, so the posterior can never place mass near $\pi^\dagger$ and does not converge weakly to $\delta_{\pi^\dagger}$. The support condition is thus necessary as well as sufficient.
\end{proposition}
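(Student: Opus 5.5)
The plan is to use Bayes' rule in its Gibbs form. Since $\Omega$ is finite, the likelihood of $\omega^N$ under $\pi$ is $\prod_\omega \pi(\omega)^{N\hat p_N(\omega)}$, where $\hat p_N$ is the empirical distribution. Subtracting the $\pi$-independent term $\sum_\omega \hat p_N(\omega)\log\hat p_N(\omega)$ gives
\begin{align*}
\mu_N(d\pi\mid\omega^N)\;\propto\;\exp\big(-N\,D_{\mathrm{KL}}(\hat p_N\|\pi)\big)\,\mu_0(d\pi),
\end{align*}
which is the tilting \eqref{eq:bayes-gibbs}. By the strong law of large numbers on the finite alphabet, $\hat p_N\to\pi^\dagger$ almost surely. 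I fix a sample path in this probability-one event and argue deterministically from there.

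For weak convergence, I fix an open neighbourhood $U$ of $\pi^\dagger$ in $\Pi$ and show that $\mu_N(\Pi\setminus U)\to0$.

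\textbf{Uniform control of the divergence.} Compactness of $\Pi\subset\Delta_{++}(\Omega)$ gives $\min_{\pi\in\Pi,\omega}\pi(\omega)=:\kappa>0$. Hence $(p,\pi)\mapsto D_{\mathrm{KL}}(p\|\pi)=\sum_\omega p\log p-\sum_\omega p\log\pi$ is jointly continuous on $\Delta(\Omega)\times\Pi$, hence uniformly continuous there. This joint continuity is the step where the finiteness and full-support assumptions do their work, and I expect it to be the main technical point. It yields two things:
\begin{enumerate}[label=(\roman*)]
\item $\sup_{\pi\in\Pi}|D_{\mathrm{KL}}(\hat p_N\|\pi)-D_{\mathrm{KL}}(\pi^\dagger\|\pi)|\to0$;
\item $\eta:=\min_{\pi\in\Pi\setminus U}D_{\mathrm{KL}}(\pi^\dagger\|\pi)>0$, by compactness of $\Pi\setminus U$ and Gibbs' inequality.
\end{enumerate}

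\textbf{Numerator and denominator.} Choose a smaller neighbourhood $U'\subset U$ on which $D_{\mathrm{KL}}(\pi^\dagger\|\pi)<\eta/4$. The support condition gives $\mu_0(U')>0$. For $N$ large, (i) gives:
\begin{itemize}
\item the normalizing integral is at least $\mu_0(U')e^{-N\eta/2}$;
\item the integral over $\Pi\setminus U$ is at most $e^{-3N\eta/4}$.
\end{itemize}
Therefore $\mu_N(\Pi\setminus U)\le e^{-N\eta/4}/\mu_0(U')$, which is the explicit exponential rate. Since $U$ was an arbitrary neighbourhood, this is weak convergence to $\delta_{\pi^\dagger}$.

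\textbf{Consequence for the value.} The map $\pi\mapsto\phi(\mathbb{E}_\pi[u(f)])$ is continuous and bounded on $\Pi$, because $\pi\mapsto\mathbb{E}_\pi[u(f)]$ is linear, $u$ is bounded, and $\phi$ is continuous on a compact range. Weak convergence then gives the stated limit directly.

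\textbf{Converse.} If $\pi^\dagger\notin\mathrm{supp}\,\mu_0$, then $\mu_N\ll\mu_0$ because the tilting density is finite and positive. Hence $\mu_N(\mathrm{supp}\,\mu_0)=1$ for every $N$ and every sample path. Since $\mathrm{supp}\,\mu_0$ is closed and misses $\pi^\dagger$, there is an open $U\ni\pi^\dagger$ with $\mu_N(U)=0$ for all $N$. A weak limit $\delta_{\pi^\dagger}$ would require $\liminf_N\mu_N(U)\ge1$ by the portmanteau theorem, so no such convergence can occur.
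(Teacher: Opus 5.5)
Your proposal is correct and follows essentially the same route as the paper: the Gibbs form of Bayes' rule, a uniform-in-$\pi$ approximation of the exponent by $D_{\mathrm{KL}}(\pi^\dagger\|\pi)$, a numerator/denominator bound in which the support condition gives the denominator positive mass (yielding the exponential rate), the portmanteau theorem for weak convergence, and $\mu_N\ll\mu_0$ for the converse. The only difference is cosmetic: you control $D_{\mathrm{KL}}(\hat p_N\|\pi)$ through uniform continuity on $\Delta(\Omega)\times\Pi$, whereas the paper first cancels the $\pi$-free term $D_{\mathrm{KL}}(L_N\|\pi^\dagger)$ and so obtains the explicit Lipschitz bound $\log(1/c)\,\|L_N-\pi^\dagger\|_1$.
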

\begin{proof}
See Appendix \ref{app:degeneracy}. The argument requires no Laplace expansion: neither a Lebesgue density for $\mu_0$ nor interiority of $\pi^\dagger$ in $\Pi$ is needed, only a uniform first-order comparison of exponential rates.
\end{proof}

If ``$N\to\infty$'' means literal sequential Bayesian learning about which $\pi\in\Pi$ is true, a well-specified learner eventually knows $\pi^\dagger$: there is nothing left to be ambiguous about, and $\theta$ correctly drops out of the limit. Misspecification does not rescue the route. When the data are generated by some $\pi^*$ outside $\mathrm{supp}\,\mu_0$, the posterior concentrates instead on the information projection $\argmin_{\pi\in\mathrm{supp}\,\mu_0} D_{\mathrm{KL}}(\pi^*\|\pi)$ of the true law onto the support of the prior, the classical fact of \citet{B_1966}. The limiting value is again a plain expected utility. Appendix \ref{app:degeneracy} derives this and gives conditions under which the projection is unique. Either way, this route provably converges to plain expected utility, never to the entropic anchor \eqref{eq:anchor}: sequential learning and entropic ambiguity aversion are, in this precise sense, substitutes rather than complements. This is why Section \ref{sec:kmm-robust} does not send any sample size to infinity, and instead builds the entropic penalty directly into how each \emph{fixed} model $\pi\in\Pi$ is confronted.

\section{Characterization}
\label{sec:main-characterization}

To establish the results below, we impose an entropic structure on the environment (Assumption E). E1 alone drives the definitional identity (a), E1+E2 drive the predictive-reduction claim (b), and E1+E3 drive the inattention dual (c). No single condition, and no pair, delivers all three.\footnote{The three conditions are heterogeneous — one fixes a functional form, one imposes a robustness discipline linking two layers of the ambiguity criterion, one posits a cost function — and each feeds a different, logically independent part of Theorem \ref{thm:uniqueness}.}

\vspace{1em}

\noindent
\textbf{Assumption E} (Entropic Structure).
The decision environment satisfies the following three conditions:
\begin{enumerate}
    \item[\textbf{E1.}] \textbf{Constant Ambiguity Aversion:} The aggregation function $\phi$ takes the exponential form $\phi(x) = -\exp(-x/\theta)$ for $\theta > 0$.
    \item[\textbf{E2.}] \textbf{Matched-$\theta$ Misspecification Robustness:} The decision-maker evaluates Smooth Ambiguity through the \citet{CVHMM_2026} criterion \eqref{eq:robust-kmm}, with the \emph{same} $\theta$ disciplining both the model-level robustness \eqref{eq:robustified-model-value} and the second-order aggregation.
    \item[\textbf{E3.}] \textbf{KL-Proportional Information Cost:} The information cost is $C(p,q) = \theta\, D_{\mathrm{KL}}(p\|q)$, for the same $\theta>0$ as in E1.
\end{enumerate}

Assumption E2's matched-$\theta$ restriction is the key identifying assumption behind claim (b) below, so its plausibility deserves comment before we use it. The \citet{CVHMM_2026} criterion \eqref{eq:robust-kmm} in principle carries \emph{two} free curvature parameters: $\lambda$ in \eqref{eq:robustified-model-value}, which governs how much the decision-maker distrusts any single posited model $\pi$, and $\xi$ (folded into $\phi$), which governs how pessimistically she aggregates across the set of entertained models $\Pi$. These answer different questions: distrust of a given model versus distrust of the model \emph{set}. When $\lambda\ne\xi$, the exact reduction in Theorem \ref{thm:uniqueness}(b) fails, because $\phi$ no longer inverts the $-\theta\log(\cdot)$ that Lemma \ref{lem:dv} produces at the first stage. Theorem \ref{thm:mismatch-impossibility} below proves this and shows that no other target curvature repairs the failure. E2 is thus not a normalization but a substantive restriction. It is also not an arbitrary one. Remark \ref{rem:chain-rule} shows that matched curvature is exactly the case in which the two-layer criterion is the value of a problem that penalizes a \emph{single} relative entropy on the joint law of model and state, as in Problem 2.1 of \citet{HM_2018}. One degree of caution toward model uncertainty, rather than two, then governs both layers, in the spirit of the single-parameter robust control of \citet{HS_2001} and \citet{S_2011}. 

Two results below sharpen this from a cautionary remark into a formal converse, each closing off a different escape route. Theorem \ref{thm:necessity} fixes $\lambda=\xi=\theta$ (matched, as required by E2) and shows that, at that stipulated $\theta$, \emph{no} functional form other than $\phi$ exponential can reproduce the exact reduction. Theorem \ref{thm:mismatch-impossibility} then rules out the remaining possibility that some \emph{other} choice of $\xi\ne\lambda$, paired with some \emph{other} $\theta'$ in the target functional $V$, might still deliver an exact single-layer entropic reduction: it does not, for any $\theta'$ whatsoever. Together, E2 is thus the unique restriction compatible with Theorem \ref{thm:uniqueness}(b), not merely a sufficient one. This still leaves a genuinely empirical question, since neither theorem is about whether real decision-makers' $\lambda$ and $\xi$ in fact coincide. The restriction is falsifiable in principle: it predicts that curvature parameters estimated from choice data that isolate model-level robustness (e.g., behavior toward misspecification of a single model) should coincide with those estimated from choice data that isolate cross-model aggregation (e.g., behavior toward an enlarged or shrunk model set). We do not pursue this identification exercise here; Section \ref{sec:concl} returns to it as a natural next step.

Assumption E3 postulates the cost function in \citet{HS_2001}, axiomatized by \citet{S_2011}. As Section \ref{sec:ri-gap} shows (Proposition \ref{prop:ri-gap}), it is \emph{not}, and cannot be, the value of a genuine Bayes-plausible information-acquisition problem about the fixed act $f$ in the sense of \citet{S_2003} or \citet{PST_2023}. Any such problem is trivial for a fixed act, by linearity of $p\mapsto\mathbb{E}_p[u(f)]$ and the martingale property of Bayes-plausible beliefs. The precise relationship to \citet{PST_2023}'s own experiment-level cost function is spelled out in Remark \ref{rem:pst-gap} of Section \ref{sec:discussion}, and it is more intricate than a one-line citation can convey.

Under Assumption E, our results establish the following three logically distinct claims, which we state together for compactness but prove separately.

\begin{theorem}[The Entropic Anchor]
\label{thm:uniqueness}
Let $V(f) := -\theta \log \mathbb{E}_q\!\left[\exp(-u(f)/\theta)\right] = \min_p\{\mathbb{E}_p[u(f)] + \theta D_{\mathrm{KL}}(p\|q)\}$ denote the Exponential SOEU functional. Under Assumption E:
\begin{enumerate}
    \item[(a)] \emph{(Definitional identity:  Lemma \ref{lem:dv}).} The two expressions defining $V(f)$ above coincide; this holds for any $\theta>0$ and requires no further assumption beyond E1.
    
    \item[(b)] \emph{(Predictive reduction: Lemma \ref{lem:predictive-reduction}).} For \emph{every} compact $\Pi$ (need not be convex, need not contain $q$) and \emph{every} $\mu\in\Delta(\Pi)$, $V_\Pi(f) = V(f)$ \emph{exactly}, with $q$ identified as the Bayesian predictive measure $\bar\pi := \int_\Pi \pi\, d\mu(\pi)$ induced by $(\Pi,\mu)$.\footnote{No sample size, and no large-sample limit, is involved; the reduction is an algebraic consequence of \citet{CVHMM_2026}'s criterion (33) at $\lambda=\xi=\theta$ (see Lemma \ref{lem:predictive-reduction} and the discussion following it). By contrast, literal sequential Bayesian learning about a fixed $\pi\in\Pi$ (Proposition \ref{prop:degeneracy}) converges to plain expected utility, \emph{not} to $V(f)$ --- the two constructions answer different questions.}
    
    \item[(c)] \emph{(Robust-control dual: Lemma \ref{lem:legendre}).} Under Assumption E3, $V(f) = -W_{\mathrm{RI}}(-u(f))$: $V$ is the adversarial Legendre dual of the robust-control value \eqref{eq:wri} with KL-proportional cost.
\end{enumerate}
\end{theorem}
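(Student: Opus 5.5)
The three claims are logically independent, and I will prove each separately by reducing it to one elementary identity: the Gibbs variational (Donsker--Varadhan) formula on the finite space $\Omega$. This formula is what Lemma \ref{lem:dv} records. Claims (b) and (c) will then be obtained by composing it with itself or reflecting it through a sign change.

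For (a), fix $q\in\Delta_{++}(\Omega)$ and write $w=u(f)\in\mathbb{R}^\Omega$, which is bounded. Define the Gibbs measure
\begin{align*}
    p^*(\omega)=\frac{q(\omega)e^{-w(\omega)/\theta}}{Z}, \qquad Z=\mathbb{E}_q\bigl[e^{-w/\theta}\bigr].
\end{align*}
Then $p^*$ lies in $\Delta_{++}(\Omega)$. For any $p\ll q$, substitute $\log(p/q)=\log(p/p^*)-w/\theta-\log Z$ into $\mathbb{E}_p[w]+\theta D_{\mathrm{KL}}(p\|q)$. This yields the exact identity
\begin{align*}
    \mathbb{E}_p[w]+\theta D_{\mathrm{KL}}(p\|q) = -\theta\log Z+\theta D_{\mathrm{KL}}(p\|p^*).
\end{align*}
Since $D_{\mathrm{KL}}\ge 0$, with equality iff $p=p^*$, the minimum equals $-\theta\log Z=V(f)$ and is attained uniquely at $p^*$. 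This is the identity already used in the proof of Proposition \ref{prop:ri-gap}, and it is the only place where real work happens.

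For (b), I apply (a) model by model with baseline $\pi$ in place of $q$. Each $\pi\in\Pi\subset\Delta_{++}(\Omega)$ qualifies, so
\begin{align*}
    v_\pi(f)=-\theta\log\mathbb{E}_\pi\bigl[e^{-u(f)/\theta}\bigr].
\end{align*}
Under E1 with the same $\theta$ (this is where E2's matching enters), $\phi(v_\pi(f))=-\exp(-v_\pi(f)/\theta)=-\mathbb{E}_\pi[e^{-u(f)/\theta}]$. In words, the outer exponential exactly undoes the inner $-\theta\log$. Integrating against $\mu$ and using Fubini on the finite sum over $\Omega$ gives
\begin{align*}
    \mathbb{E}_\mu[\phi(v_\pi(f))]=-\sum_\omega\Bigl(\int_\Pi\pi(\omega)\,d\mu\Bigr)e^{-u(f(\omega))/\theta}=-\mathbb{E}_{\bar\pi}\bigl[e^{-u(f)/\theta}\bigr].
\end{align*}
The map $\pi\mapsto\pi(\omega)$ is continuous and bounded on the compact set $\Pi$, so these integrals are well defined. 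Applying $\phi^{-1}(y)=-\theta\log(-y)$ then returns $V(f)$ with $q=\bar\pi$.

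I must also check that $\bar\pi\in\Delta_{++}(\Omega)$, so that (a) applies at the baseline $\bar\pi$. This follows because the continuous function $\pi\mapsto\min_\omega\pi(\omega)$ attains a strictly positive minimum on the compact set $\Pi$, and that lower bound survives averaging. Neither convexity of $\Pi$ nor $q\in\Pi$ is needed anywhere.

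For (c), under E3 I apply (a) to the profile $-w$. This gives
\begin{align*}
    \min_p\{\mathbb{E}_p[-w]+\theta D_{\mathrm{KL}}(p\|q)\}=-\theta\log\mathbb{E}_q\bigl[e^{w/\theta}\bigr].
\end{align*}
Negating both sides yields $W_{\mathrm{RI}}(w)=\theta\log\mathbb{E}_q[e^{w/\theta}]$, with the supremum attained at the tilted measure $p^\circ$. Evaluating at $w=-u(f)$ gives $W_{\mathrm{RI}}(-u(f))=\theta\log\mathbb{E}_q[e^{-u(f)/\theta}]=-V(f)$.

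One technicality arises here. The supremum in \eqref{eq:wri} runs over all of $\Delta(\Omega)$, not only $\Delta_{++}(\Omega)$. This is harmless because $q$ has full support, so every $p\in\Delta(\Omega)$ satisfies $p\ll q$ with finite divergence, and the identity from (a) holds for all such $p$. The main obstacle is therefore not conceptual. It is the bookkeeping of supports and the interchange of integrals in (b), and both are immediate on finite $\Omega$.
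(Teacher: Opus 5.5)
Your proposal is correct and follows the paper's proof essentially step for step. Part (a) rests on the exact identity $\mathbb{E}_p[w]+\theta D_{\mathrm{KL}}(p\|q)=-\theta\log Z+\theta D_{\mathrm{KL}}(p\|p^*)$. Part (b) applies that identity at each $\pi$, lets the matched exponential $\phi$ undo the $-\theta\log$, and passes to $\bar\pi$ by Fubini. Part (c) is the sign flip $w\mapsto -w$.

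The differences are cosmetic. Your compactness argument for $\bar\pi\in\Delta_{++}(\Omega)$ is more than needed: strict positivity of each $\pi(\omega)$ already forces $\bar\pi(\omega)>0$, which is what the paper uses. Your note that the supremum in \eqref{eq:wri} over all of $\Delta(\Omega)$ is covered, because the identity holds for every $p\ll q$, makes explicit a point the paper leaves implicit.
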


Claim (a) is an unconditional algebraic identity; it is what makes $V(f)$ well-posed at all, and does not by itself justify calling $V(f)$ a ``bridge'' between anything. 
Claims (b) and (c) are the substantive bridges. Three qualifications should be kept in view: in (b), $\Pi$'s criterion is the \emph{misspecification-robust} Smooth Ambiguity criterion of Section \ref{sec:kmm-robust}, not literally the textbook \citet{KMM_2005} functional $V_{\mathrm{KMM}}$ of Section \ref{sec:kmm-plain}; the baseline $q$ in (b) is not a free primitive but is \emph{derived} as $\bar\pi$.
In (c), the ``inattention'' in the name $W_{\mathrm{RI}}$ is a motivating analogy, not a literal description. Proposition \ref{prop:ri-gap} shows that $W_{\mathrm{RI}}$ is not the value of any genuine Bayes-plausible information-acquisition problem about $f$; it is a robust-control value in the sense of \citet{HS_2001} and \citet{S_2011}. 
None of these qualifications weakens the result. The robust criterion nests $V_{\mathrm{KMM}}$ as a boundary case, the identification $q=\bar\pi$ is itself economically informative (Section \ref{sec:discussion}), and the robust-control reading of (c) is exactly what the proof of Lemma \ref{lem:legendre} delivers. We state them here because all three are substantive.

Theorem \ref{thm:uniqueness}(b) shows that matched-$\theta$ CARA aggregation is \emph{sufficient} for the exact predictive reduction. The next result shows that it is also \emph{necessary}. Fix $\theta$ at the value used to robustify each model in \eqref{eq:robustified-model-value}. Then, no other continuous, strictly increasing aggregator $\phi$ delivers the same exact reduction to $V(f)$. Exponential SOEU is therefore not merely \emph{an} entropic anchor reachable from \citeauthor{CVHMM_2026}'s (\citeyear{CVHMM_2026}) criterion, but --- fixing $\theta$ --- \emph{the only one}.

\begin{theorem}[Necessity of CARA]
\label{thm:necessity}
Let $|\Omega|\ge2$ and fix $\theta>0$. Let $\phi:\mathbb{R}\to\mathbb{R}$ be continuous and strictly increasing, and let
\begin{align*}
    V^\phi_\Pi(f) \;:=\; \phi^{-1}\Big(\mathbb{E}_\mu\big[\phi\big(v_\pi(f)\big)\big]\Big)
\end{align*}
denote the criterion \eqref{eq:robust-kmm} with $\phi$ in the outer aggregation, where $v_\pi$ is given by \eqref{eq:robustified-model-value} at this $\theta$. If $V^\phi_\Pi(f)=V(f)$, with $V$ evaluated at the predictive baseline $q=\bar\pi$, for every compact $\Pi\subset\Delta_{++}(\Omega)$, every $\mu\in\Delta(\Pi)$ and every $f\in\mathcal{F}$, then there are $a<0$ and $b\in\mathbb{R}$ such that, for every $x \in (\underline u,\bar u)$,\footnote{The interval $(\underline u,\bar u)$ is the range of $v_\pi(f)$ over acts and models, that is, the domain on which the criterion uses $\phi$. The constants $a$ and $b$ reflect the positive-affine indeterminacy of any $\phi^{-1}(\mathbb{E}[\phi(\cdot)])$-type representation, which leaves $V^\phi_\Pi$ and $V$ unchanged; Assumption E1 fixes $a=-1$ and $b=0$.}
\[
    \phi(x) = a\exp(-x/\theta) + b.
\]
\end{theorem}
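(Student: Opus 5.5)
The plan is to linearize the criterion through the exponential change of variables that Lemma~\ref{lem:dv} already supplies, so that the required identity becomes a Jensen-type equality for a single one-variable function. That equality will force the function to be affine, and affinity in the new variable is exactly CARA form in the original one. For an act $f$ and a model $\pi$, write $m_\pi(f):=\mathbb{E}_\pi[\exp(-u(f)/\theta)]$. By Lemma~\ref{lem:dv} applied with baseline $\pi$, we have $v_\pi(f)=-\theta\log m_\pi(f)$. The map $\pi\mapsto m_\pi(f)$ is linear, so at the predictive baseline $q=\bar\pi$ we get $\exp(-V(f)/\theta)=\mathbb{E}_{\bar\pi}[e^{-u(f)/\theta}]=\mathbb{E}_\mu[m_\pi(f)]$, and hence $V(f)=-\theta\log\mathbb{E}_\mu[m_\pi(f)]$.

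Set $I:=(e^{-\bar u/\theta},e^{-\underline u/\theta})$ and define $\psi(t):=\phi(-\theta\log t)$ on $I$. Then $\psi$ is continuous and strictly decreasing. Apply the strictly increasing $\phi$ to both sides of $V^\phi_\Pi(f)=V(f)$. The hypothesis becomes
\begin{equation*}
\mathbb{E}_\mu\big[\psi(m_\pi(f))\big]=\psi\big(\mathbb{E}_\mu[m_\pi(f)]\big)
\end{equation*}
for all compact $\Pi$, all $\mu\in\Delta(\Pi)$ and all $f$. I will only use two-point sets $\Pi=\{\pi_1,\pi_2\}$, which are trivially compact, with $\mu=\alpha\delta_{\pi_1}+(1-\alpha)\delta_{\pi_2}$. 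For these the identity reads $\psi(\alpha s+(1-\alpha)t)=\alpha\psi(s)+(1-\alpha)\psi(t)$ whenever $s=m_{\pi_1}(f)$ and $t=m_{\pi_2}(f)$ for some common $f$.

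The step I expect to need the most care is the richness claim: every pair $s\neq t$ in $I$ arises as $(m_{\pi_1}(f),m_{\pi_2}(f))$ for one act $f$ and two distinct full-support models. This is where $|\Omega|\ge2$ and the fact $\mathcal{W}\supseteq(\underline u,\bar u)^\Omega$ from Section~\ref{sec:primitives} are used. Given $s<t$ in $I$, I choose a payoff profile $w\in(\underline u,\bar u)^\Omega$ that is close to $\underline u$ on one state $\omega_1$ and close to $\bar u$ on all the other states. I choose it so that
\begin{equation*}
\min_\omega e^{-w(\omega)/\theta}<s<t<\max_\omega e^{-w(\omega)/\theta}.
\end{equation*}
As $\pi$ ranges over $\Delta_{++}(\Omega)$, the value $m_\pi=\sum_\omega\pi(\omega)e^{-w(\omega)/\theta}$ sweeps out the open interval between that minimum and that maximum. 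So there are full-support $\pi_1,\pi_2$ with $m_{\pi_1}=s$ and $m_{\pi_2}=t$, and they are distinct because $s\ne t$. When $s=t$ the identity is trivial.

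It follows that $\psi$ satisfies Jensen's equation with every weight $\alpha\in[0,1]$ on the open interval $I$. Hence $\psi$ is affine on $I$; taking $\alpha=1/2$ together with continuity already suffices. Write $\psi(t)=ct+d$ with $c<0$, since $\psi$ is strictly decreasing. Substituting $t=e^{-x/\theta}$ for $x\in(\underline u,\bar u)$ gives $\phi(x)=\psi(e^{-x/\theta})=c\,e^{-x/\theta}+d$, which is the claim with $a=c<0$ and $b=d$.
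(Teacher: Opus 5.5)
Your proof is correct, and its core matches the paper's. Your $\psi=\phi\circ\Psi^{-1}$ with $\Psi(y)=\exp(-y/\theta)$ is exactly the paper's $h$, and both arguments reduce the hypothesis on two-point model sets to the affinity equation $h(mz_1+(1-m)z_2)=m\,h(z_1)+(1-m)\,h(z_2)$.

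The difference is in how you establish the domain on which that equation holds. The paper fixes an arbitrary pair of distinct models. It observes that $(L_1,L_2)$ maps the open box of exponentiated payoffs onto an open set containing the diagonal, via the constant acts. It then extracts local squares $I_c\times I_c$, applies Lemma~\ref{lem:affine-rigidity} on each, and glues the local affine pieces by connectedness. You instead let the model pair depend on the target pair $(s,t)$. Choosing a payoff profile with extreme values makes $\pi\mapsto m_\pi$ sweep an interval containing both $s$ and $t$. So every point of $I\times I$ is attained, the Jensen equation holds on the whole square at once, and affinity follows globally (midpoint equality plus continuity suffices). No open-mapping step and no gluing are needed.

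Your route is shorter and more elementary. The paper's route proves something stronger: a single arbitrary pair of distinct models, with all priors on it, already forces the CARA form. Remark~\ref{rem:binary-richness} and Appendix~\ref{app:binary-richness} rely on exactly this, since there only one pair of models is assumed available. For a fixed full-support pair, the attainable $(z_1,z_2)$ form only a region around the diagonal, never all of $I\times I$: if $z_1$ is near the lower end of $I$, full support forces $z_2$ there too. So your argument would not carry over to that extension.

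One cosmetic point: define $\psi$ on all of $(0,\infty)$, or restrict explicitly to acts with $u(f)\in(\underline u,\bar u)^\Omega$, which are the only ones you use. Otherwise, if $u$ attains $\underline u$ or $\bar u$, some $m_\pi(f)$ fall on the boundary of the open interval $I$, where $\psi$ is undefined.
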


\begin{remark}[Relation to \citeauthor{KMM_2005}'s (\citeyear{KMM_2005}) Proposition 2]
\label{rem:kmm-prop2}
The exponential form has two familiar derivations, and Theorem \ref{thm:necessity} reaches it by a third. \citet[Prop.~2]{KMM_2005} obtain it, without assuming differentiability, from \emph{constant ambiguity attitude} --- invariance of the ranking under a common constant shift in utility in every state. \citet[\S3.4.1]{S_2011} obtains it from the requirement that a variational preference also be a second-order expected utility one, multiplier preferences being precisely that intersection, with $\theta$ inherited from the robustness parameter rather than chosen independently. Both routes rest on translation invariance, which leads to the generalized Pexider equation $\phi(x+k)=\alpha(k)\phi(x)+\beta(k)$ \citep[p.~58]{S_2011}; for that equation on a restricted open domain, as in the proof of their Proposition 2, see \citet{A_2005}, and for its origin in the comparison of utility representations, \citet{GNA_2005}. Theorem \ref{thm:necessity} rests instead on \emph{mixture} invariance. It imposes no axiom on attitudes; it requires only that the two-layer criterion \eqref{eq:robust-kmm} agree with \emph{some} single-layer entropic value on the open set of acts that a two-point model set sweeps out. The resulting equation is the equality of quasi-arithmetic means (Remark \ref{rem:affine-rigidity-lit}). The curvature is pinned down differently too: their $\alpha$ is free, identified by ambiguity attitude, whereas here $\theta$ is the parameter that already disciplines the inner robustification \eqref{eq:robustified-model-value}.
\end{remark}

Theorem \ref{thm:necessity} holds $\lambda=\xi=\theta$ fixed throughout and asks which $\phi$ is compatible with the exact reduction at that common value. It leaves open a logically prior question: could a decision-maker with genuinely \emph{mismatched} curvature, $\lambda\ne\xi$, still reach some single-layer entropic value $V(f)=-\theta'\log\mathbb{E}_q[\exp(-u(f)/\theta')]$, for some other $\theta'$ and some baseline $q$, simply by using a different (possibly non-CARA) $\phi$ or a different identification of $q$? The next result shows this is impossible, and impossible already at a single, explicitly exhibited model set and prior --- not merely as a failure of uniformity across them. Assumption E2's matched-$\theta$ restriction is therefore necessary for an exact single-layer reduction to exist in the first place, not merely for that reduction to take the CARA/KL form.

\begin{theorem}[Impossibility of Reduction under Mismatched Curvature]
\label{thm:mismatch-impossibility}
Let $|\Omega|\ge2$ and let the curvature be mismatched: $\lambda,\xi>0$ with $\lambda\ne\xi$, and let $v^\lambda_\pi$ and $V^{\lambda,\xi}_\Pi$ be as in \eqref{eq:two-parameter}, so that by Lemma \ref{lem:dv},
\begin{align*}
    v_\pi^\lambda(f) &= -\lambda\log\mathbb{E}_\pi\big[\exp(-u(f)/\lambda)\big], \\
    V_\Pi^{\lambda,\xi}(f) &= -\xi\log\mathbb{E}_\mu\big[\exp(-v_\pi^\lambda(f)/\xi)\big].
\end{align*}
Then, there are a two-point model set $\Pi\subset\Delta_{++}(\Omega)$ and a prior $\mu\in\Delta(\Pi)$ for which \emph{no} pair $(\theta',q)\in(0,\infty)\times\Delta_{++}(\Omega)$ satisfies
\begin{align*}
    V_\Pi^{\lambda,\xi}(f) \;=\; -\theta'\log\mathbb{E}_q\big[\exp(-u(f)/\theta')\big]
\end{align*}
for every $f \in \mathcal{F}$. A fortiori, there is no $\theta'>0$ that holds for every compact $\Pi\subset\Delta_{++}(\Omega)$ and every $\mu\in\Delta(\Pi)$, even allowing $q$ to depend on $(\Pi,\mu)$.
\end{theorem}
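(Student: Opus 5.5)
The plan is to work with an explicit two-point model set on a two-element sub-$\sigma$-algebra and reduce everything to a comparison of two real-analytic functions of one real variable. Take $\Omega\supseteq\{\omega_1,\omega_2\}$ and restrict attention to acts whose payoff profile is constant off $\omega_2$: $u(f)=(c,\dots,c,c+d)$ with $c$ and $c+d$ in $(\underline u,\bar u)$. Such profiles lie in $\mathcal{W}\supseteq(\underline u,\bar u)^\Omega$, so every $d$ in a nondegenerate open interval $I\ni 0$ is available. Choose $\Pi=\{\pi_1,\pi_2\}$ with $\pi_1(\omega_2)\neq\pi_2(\omega_2)$ and $\mu=(\tfrac12,\tfrac12)$.

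Suppose, for contradiction, that some $(\theta',q)$ satisfies the identity for every act. Both sides are translation invariant: $V^{\lambda,\xi}_\Pi(w+c)=V^{\lambda,\xi}_\Pi(w)+c$, and the same holds for the target. Writing $a_i:=\pi_i(\omega_2)$ and $r:=q(\omega_2)$, the identity therefore collapses to $F(d)=G(d)$ for $d\in I$, where
\begin{align*}
F(d)&:=-\xi\log\Big(\tfrac12\textstyle\sum_{i}\big(1-a_i+a_ie^{-d/\lambda}\big)^{\lambda/\xi}\Big), &
G(d)&:=-\theta'\log\big(1-r+re^{-d/\theta'}\big).
\end{align*}
Both functions are real-analytic on all of $\mathbb{R}$, because the arguments of the logarithms and of the powers stay strictly positive. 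By the identity theorem, $F=G$ on $\mathbb{R}$. This is the step that escapes the bounded utility range, and I regard it as the main technical point.

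The first consequence is $\theta'=\lambda$, which I would read off from the $d\to+\infty$ asymptotics. $G(d)$ tends to $-\theta'\log(1-r)$, and the gap to this limit is asymptotic to $\theta'\tfrac{r}{1-r}e^{-d/\theta'}$. $F(d)$ tends to $-\xi\log\big(\tfrac12\sum_i(1-a_i)^{\lambda/\xi}\big)$, and a one-term expansion of $(1-a_i+a_ie^{-d/\lambda})^{\lambda/\xi}$ shows that its gap is a positive constant times $e^{-d/\lambda}$. Matching the exponential rates forces $\theta'=\lambda$.

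The contradiction then comes from the Taylor expansion at $d=0$. At first order, $F'(0)=\tfrac12(a_1+a_2)$ and $G'(0)=r$, so $q(\omega_2)=\bar\pi(\omega_2)$: the baseline is forced to be the predictive measure. At second order, the general principle is that a small perturbation $th$ of a constant profile gives
\begin{align*}
V^{\lambda,\xi}_\Pi &= t\,\mathbb{E}_{\bar\pi}h-\tfrac{t^2}{2}\Big(\tfrac1\lambda\mathbb{E}_\mu\mathrm{Var}_\pi h+\tfrac1\xi\mathrm{Var}_\mu\mathbb{E}_\pi h\Big)+O(t^3),\\
V_{\mathrm{target}} &= t\,\mathbb{E}_q h-\tfrac{t^2}{2\theta'}\mathrm{Var}_q h+O(t^3).
\end{align*}
With $q=\bar\pi$, the law of total variance $\mathrm{Var}_{\bar\pi}h=\mathbb{E}_\mu\mathrm{Var}_\pi h+\mathrm{Var}_\mu\mathbb{E}_\pi h$ turns the comparison into
\[
\Big(\tfrac1\lambda-\tfrac1{\theta'}\Big)\mathbb{E}_\mu\mathrm{Var}_\pi h+\Big(\tfrac1\xi-\tfrac1{\theta'}\Big)\mathrm{Var}_\mu\mathbb{E}_\pi h=0 .
\]
Take $h=\mathbf 1_{\omega_2}$. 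Since $\theta'=\lambda$, the first term vanishes, and the relation reduces to $(\tfrac1\xi-\tfrac1\lambda)\tfrac14(a_1-a_2)^2=0$. This is impossible because $\lambda\neq\xi$ and $a_1\neq a_2$.

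The second-order comparison alone cannot do the job on a two-point state space. There, $\mathbb{E}_\mu\mathrm{Var}_\pi h$ and $\mathrm{Var}_\mu\mathbb{E}_\pi h$ are proportional as $h$ varies, so some $\theta'$ would always satisfy the quadratic relation. This is why the asymptotic step pinning down $\theta'=\lambda$ is essential. The main obstacle is making that step rigorous, namely the analytic continuation plus the leading-exponent extraction for $F$. I would verify it by factoring out the limit and writing $F(d)-F(\infty)=-\xi\log(1+\kappa e^{-d/\lambda}+o(e^{-d/\lambda}))$ with $\kappa>0$.

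The final "a fortiori" clause is immediate. A $\theta'$ that worked uniformly over all $(\Pi,\mu)$, even with $q$ allowed to depend on $(\Pi,\mu)$, would in particular work for the exhibited pair, which is impossible.
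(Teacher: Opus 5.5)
Your proof is correct, and it differs from the paper's at the one step that decides the argument.

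The two proofs agree through second order at a constant act. The first derivative forces $q(\omega_2)=\bar\pi(\omega_2)$. The second gives your law-of-total-variance relation, which the paper solves as $s=\lambda/\theta'=1+(r-1)\sigma^2_\mu(p)/[\bar p(1-\bar p)]$. As you note, along a single binary direction this still leaves one candidate $\theta'$, and the proofs differ in how they rule it out.

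The paper stays local. It selects $\pi_1(\omega_1)=\tfrac34$, $\pi_2(\omega_1)=\tfrac14$ and $\mu=(\tfrac14,\tfrac34)$, then evaluates the third derivative at $t=0$ as $3r(r-1)^2(r+4)/1600\neq0$. That configuration is chosen deliberately: the third derivative vanishes identically when $m=\tfrac12$ and $p_1+p_2=1$, a case your choice $\mu=(\tfrac12,\tfrac12)$ admits, and it degenerates at other configurations too.

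You instead use that $F$ and $G$ are explicit real-analytic functions on all of $\mathbb{R}$. The identity therefore extends beyond the bounded interval that acts sweep out. The $e^{-d/\lambda}$ versus $e^{-d/\theta'}$ decay as $d\to+\infty$, both with nonzero coefficients, then pins down $\theta'=\lambda$. The second-order relation then collapses to $(1/\xi-1/\lambda)\,m(1-m)(a_1-a_2)^2=0$, which is impossible.

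Your route gains two things:
\begin{itemize}
\item It avoids the third-order computation, which the paper reports only at its chosen configuration, leaving $\Phi$ implicit.
\item It works for every pair with $\pi_1(\omega_2)\neq\pi_2(\omega_2)$ and every $\mu=(m,1-m)$ with $m\in(0,1)$. It therefore gives the single-prior conclusion for \emph{given} models, which Remark~\ref{rem:binary-richness} and Appendix~\ref{app:binary-richness} explicitly leave open.
\end{itemize}

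The paper's route, in return, uses only the germ of the identity at a constant act and never evaluates either formula at payoffs outside $(\underline u,\bar u)$. Yours relies on the hypothesis being an equality of two analytic expressions, which it is here.
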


\begin{remark}[Beyond a Finite State Space]
\label{rem:binary-richness}
Theorems \ref{thm:necessity} and \ref{thm:mismatch-impossibility} are stated for finite $\Omega$ because their proofs use linear algebra on $\mathbb{R}^{|\Omega|}$ to sweep out an open set of acts. Both extend to an arbitrary measurable state space $(\Omega,\mathcal{F})$ under a mild richness condition on the models the decision-maker entertains. Say that the model set is \emph{binary rich} if there are an event $A\in\mathcal{F}$ and two admissible models $\pi_1,\pi_2$ with
\begin{align*}
    \pi_1(A),\,\pi_2(A)\in(0,1)
    \qquad\text{and}\qquad
    \pi_1(A)\ne\pi_2(A).
\end{align*}
For finite $\Omega$ with $|\Omega|\ge2$ and $\Pi\subset\Delta_{++}(\Omega)$ this is automatic as soon as two distinct models are available: take $A=\{\omega\}$ for any state at which they differ. Under binary richness Theorem \ref{thm:necessity} holds verbatim, and Theorem \ref{thm:mismatch-impossibility} holds in the uniform form stated as its second conclusion.

Appendix \ref{app:binary-richness} carries out the reduction: restricting attention to acts measurable with respect to $\mathcal{G}:=\{\emptyset,A,A^c,\Omega\}$ turns the hypothesis of either theorem into the corresponding hypothesis on the two-point state space $\{A,A^c\}$, to which the proofs above apply verbatim. One qualification, also recorded there, applies to Theorem \ref{thm:mismatch-impossibility}. Binary richness supplies a single pair of models, whereas the proof of that theorem \emph{selects} the pair $\pi_1(\omega_1)=\tfrac34$, $\pi_2(\omega_1)=\tfrac14$ to make the third-order obstruction explicit. For given models, what survives verbatim is the uniform conclusion, and it survives at second order alone.
\end{remark}

Thus, matched curvature is necessary, not merely sufficient (Theorem \ref{thm:uniqueness}(b)), for \eqref{eq:robust-kmm} to collapse to \emph{any} single-layer entropic value. Because the failure occurs at a single $(\Pi,\mu)$, it cannot be read as an artefact of demanding one target curvature uniformly across model sets: for the exhibited pair, no target curvature works at all.

\section{Proof of the Main Theorems}

\begin{lemma}[Donsker-Varadhan Variational Identity]
\label{lem:dv}
For every $\theta>0$, every $q\in\Delta_{++}(\Omega)$ and every $f\in\mathcal{F}$,
\begin{equation}
    -\theta \log \mathbb{E}_q\!\left[\exp(-u(f)/\theta)\right] = \min_{p \in \Delta_{++}(\Omega)} \left\{ \mathbb{E}_p[u(f)] + \theta D_{\mathrm{KL}}(p \| q) \right\},
    \label{eq:anchor}
\end{equation}
with the minimum attained uniquely at the Gibbs measure
\begin{align*}
    p^*(\omega) \;=\; \frac{q(\omega)\exp(-u(f(\omega))/\theta)}{\mathbb{E}_q[\exp(-u(f)/\theta)]}.
\end{align*}
\end{lemma}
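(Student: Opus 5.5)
The approach is the standard Gibbs-variational argument: an exact "completing the square" identity for relative entropy, with no convex-duality machinery needed on a finite $\Omega$. Fix $\theta>0$, $q\in\Delta_{++}(\Omega)$ and $f\in\mathcal{F}$, and write $w:=u(f)\in\mathbb{R}^\Omega$. Define the normalizer
\[
Z:=\mathbb{E}_q[\exp(-w/\theta)].
\]
Because $u$ is bounded, $w$ takes values in $[\underline u,\bar u]$. Hence $Z\in(0,\infty)$, and the Gibbs measure $p^*(\omega)=q(\omega)e^{-w(\omega)/\theta}/Z$ is well defined. Since $q$ has full support and the exponential weights are strictly positive, $p^*\in\Delta_{++}(\Omega)$. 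This matters because the minimization in \eqref{eq:anchor} ranges over $\Delta_{++}(\Omega)$, so $p^*$ is admissible.

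The key step is to prove, for every $p\in\Delta_{++}(\Omega)$ (indeed every $p\ll q$), the identity
\[
\mathbb{E}_p[w]+\theta D_{\mathrm{KL}}(p\|q) \;=\; -\theta\log Z+\theta D_{\mathrm{KL}}(p\|p^*).
\]
This is also the identity invoked in the proof of Proposition \ref{prop:ri-gap}. To obtain it:
\begin{itemize}
\item Take logs in the definition of $p^*$: $\log\frac{p^*(\omega)}{q(\omega)}=-w(\omega)/\theta-\log Z$.
\item Write $\log\frac{p}{q}=\log\frac{p}{p^*}+\log\frac{p^*}{q}$.
\item Substitute and take the $p$-expectation. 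This gives $D_{\mathrm{KL}}(p\|q)=D_{\mathrm{KL}}(p\|p^*)-\mathbb{E}_p[w]/\theta-\log Z$.
\item Multiply by $\theta$ and rearrange.
\end{itemize}
All sums are finite, since $\Omega$ is finite and $p,q,p^*$ have full support, so no integrability issues arise.

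With the identity in hand, Gibbs' inequality gives $D_{\mathrm{KL}}(p\|p^*)\ge0$, with equality iff $p=p^*$. I would verify this via Jensen's inequality applied to the strictly concave $\log$: $-D_{\mathrm{KL}}(p\|p^*)=\mathbb{E}_p[\log(p^*/p)]\le\log\mathbb{E}_p[p^*/p]=0$, and strict concavity forces $p^*/p$ to be constant, hence equal to $1$. Therefore the objective is bounded below by $-\theta\log Z$, the bound is attained at $p^*$, and $p^*$ is the unique minimizer. This establishes both \eqref{eq:anchor} (with $\min$ rather than $\inf$ justified) and the uniqueness claim.

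There is no real obstacle. The only points needing care are bookkeeping:
\begin{itemize}
\item confirming that $p^*$ lies in the restricted domain $\Delta_{++}(\Omega)$, so the minimum is attained there and not merely approached;
\item stating the equality case of Gibbs' inequality precisely.
\end{itemize}
I would also note in passing that the extension to compact Polish $\Omega$ mentioned in Section \ref{sec:primitives} follows by the same identity with sums replaced by integrals, via \citet{DE_1997}, but I would not prove that here.
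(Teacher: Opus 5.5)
Your proof is correct and is essentially the paper's argument. The paper also verifies, by substituting $\log(dp^*/dq)=-u(f)/\theta-\log\mathbb{E}_q[\exp(-u(f)/\theta)]$, the exact identity $\mathbb{E}_p[u(f)]+\theta D_{\mathrm{KL}}(p\|q)+\theta\log\mathbb{E}_q[\exp(-u(f)/\theta)]=\theta D_{\mathrm{KL}}(p\|p^*)\ge 0$, with equality iff $p=p^*$, while citing Donsker--Varadhan for the general case. Your check that $p^*\in\Delta_{++}(\Omega)$ and your Jensen proof of the equality case of Gibbs' inequality make explicit what the paper leaves implicit.
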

The argument uses $u(f)$ only as a vector in $\mathbb{R}^\Omega$, so \eqref{eq:anchor} and the Gibbs form of the minimizer hold verbatim with $u(f)$ replaced by any payoff profile $w\in\mathbb{R}^\Omega$. This is the form used for $W_{\mathrm{RI}}$ in Section \ref{sec:ri-gap} and in Section \ref{sec:discussion}.

\begin{proof}
This is the Donsker--Varadhan variational formula; see \citet[Prop.~1.4.2 and Thm.~1.2.1]{DE_1997}, or, in the present decision-theoretic notation, equation (8) of \citet[Sec.~3.3]{S_2011}, where it is stated as ``the variational formula'' and attributed to \citeauthor{DE_1997}. For completeness, for any $p\ll q$,
\begin{align*}
    \mathbb{E}_p[u(f)] + \theta D_{\mathrm{KL}}(p\|q)
    - \Big(\!-\theta\log\mathbb{E}_q[\exp(-u(f)/\theta)]\Big)
    \;=\; \theta D_{\mathrm{KL}}(p\|p^*) \;\ge\; 0,
\end{align*}
with equality if and only if $p=p^*$, where $p^*$ is as displayed above; this is verified by direct substitution using $\log(dp^*/dq) = -u(f)/\theta - \log \mathbb{E}_q[\exp(-u(f)/\theta)]$.
\end{proof}

\begin{lemma}[Exact Predictive Reduction]
\label{lem:predictive-reduction}
Let $\Pi\subset\Delta_{++}(\Omega)$ be compact and $\mu\in\Delta(\Pi)$. Under Assumption E1--E2 (matched $\theta$), for every act $f \in \mathcal{F}$,
\begin{equation}
    V_\Pi(f) = -\theta\log\mathbb{E}_{\bar\pi}\big[\exp(-u(f)/\theta)\big] = V(f), \qquad \bar\pi(\cdot) := \int_\Pi \pi(\cdot)\,d\mu(\pi).
    \label{eq:predictive-reduction}
\end{equation}
\end{lemma}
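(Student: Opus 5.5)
The plan is a direct computation in two stages, using Lemma \ref{lem:dv} twice. First I rewrite the inner, model-level value in closed form. Fix $f\in\mathcal{F}$ and $\pi\in\Pi$. Since $\Pi\subset\Delta_{++}(\Omega)$, Lemma \ref{lem:dv} applies with $q=\pi$ and gives
\begin{align*}
    v_\pi(f) \;=\; -\theta\log\mathbb{E}_\pi\big[\exp(-u(f)/\theta)\big].
\end{align*}
Under E1, $\phi(x)=-\exp(-x/\theta)$, so $\phi(v_\pi(f))=-\mathbb{E}_\pi[\exp(-u(f)/\theta)]$. This is the crux of matching (E2): the outer exponential at curvature $\theta$ exactly undoes the $-\theta\log$ produced at the inner stage. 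The cancellation is therefore exact, and the inner value enters linearly in $\pi$.

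Second, I aggregate over models. Integrating against $\mu$ gives
\begin{align*}
    \mathbb{E}_\mu\big[\phi(v_\pi(f))\big] \;=\; -\int_\Pi\sum_{\omega\in\Omega}\pi(\omega)\,e^{-u(f(\omega))/\theta}\,d\mu(\pi) \;=\; -\sum_{\omega}\bar\pi(\omega)\,e^{-u(f(\omega))/\theta} \;=\; -\mathbb{E}_{\bar\pi}\big[\exp(-u(f)/\theta)\big].
\end{align*}
Because $\Omega$ is finite, interchanging the finite sum with the integral is trivial. The only measurability point to check is that $\pi\mapsto\pi(\omega)$ is continuous on the compact $\Pi$, hence Borel and bounded, so that $\bar\pi$ is a well-defined element of $\Delta(\Omega)$. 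Applying $\phi^{-1}(y)=-\theta\log(-y)$ to this expression then yields the first equality in \eqref{eq:predictive-reduction}.

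Third, to identify the result with $V(f)$ I need $\bar\pi\in\Delta_{++}(\Omega)$, so that $V$ at baseline $q=\bar\pi$ is well posed and the second equality follows from Lemma \ref{lem:dv}. This is where compactness enters. For each $\omega$, the map $\pi\mapsto\pi(\omega)$ is continuous and strictly positive on the compact set $\Pi$, so it attains a positive minimum $m_\omega>0$. Hence $\bar\pi(\omega)\ge m_\omega>0$. Applying Lemma \ref{lem:dv} with $q=\bar\pi$ then converts $-\theta\log\mathbb{E}_{\bar\pi}[\exp(-u(f)/\theta)]$ into the minimum form of $V(f)$.

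The main obstacle is only this full-support and well-definedness bookkeeping, which ensures that the minimization over $\Delta_{++}(\Omega)$ and the Gibbs minimizer are legitimate at $q=\bar\pi$. The finiteness of $\mathbb{E}_\pi[\exp(-u(f)/\theta)]$, which is bounded between $e^{-\bar u/\theta}$ and $e^{-\underline u/\theta}$, keeps every logarithm finite and nonzero. No convexity of $\Pi$ and no condition $q\in\Pi$ is used anywhere, which matches the statement.
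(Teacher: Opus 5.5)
Your proposal is correct and follows the paper's proof essentially step for step: Lemma~\ref{lem:dv} at each fixed $\pi$, the exact cancellation of $-\theta\log$ by the matched exponential $\phi$, the finite-sum Fubini interchange producing $\bar\pi$, and then $\phi^{-1}$, with $\bar\pi\in\Delta_{++}(\Omega)$ checked so that $V$ is well posed at $q=\bar\pi$. The only difference is minor. You invoke compactness to bound $\bar\pi(\omega)$ away from zero, whereas the paper simply notes that integrating the strictly positive function $\pi\mapsto\pi(\omega)$ against a probability measure already gives $\bar\pi(\omega)>0$, so compactness is not needed at that step.
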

\begin{proof}
By Lemma \ref{lem:dv} applied at each fixed $\pi\in\Pi$, $v_\pi(f) = -\theta\log\mathbb{E}_\pi[\exp(-u(f)/\theta)]$. Substituting into \eqref{eq:robust-kmm} and using $\phi(x)=-\exp(-x/\theta)$,
\begin{align*}
    \phi(v_\pi(f))
    &= -\exp\!\Big(-\tfrac{1}{\theta}\big(-\theta\log\mathbb{E}_\pi[\exp(-u(f)/\theta)]\big)\Big) \\
    &= -\exp\!\big(\log\mathbb{E}_\pi[\exp(-u(f)/\theta)]\big)
     \;=\; -\mathbb{E}_\pi[\exp(-u(f)/\theta)],
\end{align*}
i.e.\ $\phi$ exactly undoes the $-\theta\log(\cdot)$ produced by Lemma \ref{lem:dv} --- the defining self-conjugacy of the CARA form (Assumption E1) under matched parameters. Hence
\begin{align*}
    \mathbb{E}_\mu\big[\phi(v_\pi(f))\big]
    &= -\int_\Pi \mathbb{E}_\pi[\exp(-u(f)/\theta)]\,d\mu(\pi) \\
    &= -\int_\Pi\int_\Omega \exp(-u(f(\omega))/\theta)\,\pi(d\omega)\,d\mu(\pi) \\
    &= -\int_\Omega \exp(-u(f(\omega))/\theta)\,\bar\pi(d\omega)
\end{align*}
by Fubini--Tonelli (the integrand is nonnegative and $\Omega$ is finite, so no measurability or integrability issue arises), where $\bar\pi(\cdot)=\int_\Pi\pi(\cdot)\,d\mu(\pi)$ is a well-defined element of $\Delta(\Omega)$: for each $\omega$ the map $\pi\mapsto\pi(\omega)$ is the $\omega$-th coordinate function on $\Pi\subset\mathbb{R}^{|\Omega|}$, hence continuous and Borel. Moreover $\bar\pi\in\Delta_{++}(\Omega)$, since $\pi(\omega)>0$ for every $\pi\in\Pi$ forces $\bar\pi(\omega)>0$; this is what makes $\bar\pi$ an admissible baseline in Lemma \ref{lem:dv} and hence in $V$. Applying $\phi^{-1}(y)=-\theta\log(-y)$ gives $V_\Pi(f) = -\theta\log\mathbb{E}_{\bar\pi}[\exp(-u(f)/\theta)]$, which is $V(f)$ with $q=\bar\pi$ by definition. This is exact for every compact $\Pi$ and every $\mu\in\Delta(\Pi)$: no sample size or limit appears anywhere in the argument.\footnote{The Introduction places this identity in the literature. The right-hand side solves Problem 2.1 of \citet{HM_2018}, reported as equation (16) of \citet[\S2.3]{C_2020}; the left-hand side is \citeauthor{CVHMM_2026}'s criterion (33). The calculation is short. The proof of their Proposition 8, which itself concerns the limits $\xi\to0^+$, $\xi\to\infty$ and $\lambda,\xi\to\infty$, contains the display $F_\lambda(q)=\phi_\lambda^{-1}(\int\phi_\lambda(u(f))\,dq)$; setting $\lambda=\xi$ and applying Fubini finishes it. \citeauthor{HM_2018} reach the right-hand side by a different route, noting that the worst case distorts only the predictive density, and do not exhibit the two-layer structure that Remark \ref{rem:chain-rule} supplies.}
\end{proof}

\begin{remark}
\label{rem:matched-theta}
One feature of Lemma \ref{lem:predictive-reduction} is worth flagging explicitly, since it is easy to read past: $\bar\pi\in\Pi$ is not required. The simplex $\Delta(\Omega)$ is convex but $\Pi$ need not be, so the predictive measure can lie strictly outside the set of models the decision-maker entertains --- the familiar situation of a de Finetti mixture. Read $(\Pi,\mu)$ as the mixing representation of an exchangeable sequence in the sense of \citet{HS_1955}. Then $\bar\pi$ is its one-step-ahead predictive distribution. The mixture $\int_\Pi\pi^{\otimes\infty}\,d\mu(\pi)$ equals the i.i.d.\ law $\bar\pi^{\otimes\infty}$ only when $\mu$ is a point mass, but the one-shot evaluation of an act sees only $\bar\pi$. Theorem \ref{thm:uniqueness}(b) identifies the baseline $q$ with exactly this object, in the predictive spirit of \citet{DP_2022}.
\end{remark}

\begin{remark}[Matched Curvature as a Single Joint Entropy Penalty]
\label{rem:chain-rule}
There is a reason, beyond analytical convenience, why $\lambda=\xi$ is the distinguished configuration. Let $\bar P$ denote the joint baseline on $\Pi\times\Omega$ defined by $\bar P(d\pi,d\omega):=\mu(d\pi)\,\pi(d\omega)$ --- the unique law whose $\Pi$-marginal is $\mu$ and whose $\Omega$-conditional given $\pi$ is $\pi$ itself, so that its $\Omega$-marginal is exactly the predictive measure $\bar\pi$ of Lemma \ref{lem:predictive-reduction}. Note this is \emph{not} a product measure. Suppose the decision-maker penalizes deviations from $\bar P$ once, with a single coefficient $\theta$, as in Problem 2.1 of \citet{HM_2018}:
\begin{equation}
    \min_{P\ll\bar P}\Big\{\mathbb{E}_P[u(f)] + \theta\, D_{\mathrm{KL}}(P\,\|\,\bar P)\Big\}.
    \label{eq:joint-penalty}
\end{equation}
The chain rule for relative entropy splits the penalty into a prior contribution and an expected likelihood contribution,
\begin{align*}
    D_{\mathrm{KL}}(P\|\bar P) \;=\; D_{\mathrm{KL}}(\nu\|\mu) + \int_\Pi D_{\mathrm{KL}}\big(p_\pi\,\|\,\pi\big)\,d\nu(\pi),
\end{align*}
where $\nu\ll\mu$ is the $\Pi$-marginal of $P$ and $(p_\pi)$ a version of its conditionals --- so the \emph{same} $\theta$ necessarily multiplies both. Minimize first over each conditional and then over $\nu$. The inner stage is Lemma \ref{lem:dv} at each fixed $\pi$, and the minimizers $p^*_\pi(\omega)\propto\pi(\omega)\exp(-u(f(\omega))/\theta)$ depend continuously on $\pi$, so they form a legitimate Markov kernel and the interchange of the inner minimization with the $\nu$-integral is justified. The outer stage is the Donsker--Varadhan formula on the Polish space $\Pi$ \citep[Prop.~1.4.2]{DE_1997}, applied to the bounded continuous function $\pi\mapsto v^\theta_\pi(f)$. Together these turn \eqref{eq:joint-penalty} into precisely the matched two-layer criterion $V^{\theta,\theta}_\Pi(f)$ of \eqref{eq:two-parameter}, and hence into $V(f)$ by Lemma \ref{lem:predictive-reduction}.

Matched curvature is therefore not a coincidence between two independently chosen parameters: it is exactly the case in which the two-layer criterion is the value of a problem penalizing a \emph{single} relative entropy on the joint law of model and state. The converse holds too, provided $\mu$ is not a point mass, so ``exactly'' is not rhetorical. Suppose some baseline $\bar P'$ and some coefficient $c>0$ satisfied
\begin{align*}
    c\,D_{\mathrm{KL}}(P\|\bar P') \;=\; \xi D_{\mathrm{KL}}(\nu\|\mu)+\lambda\int_\Pi D_{\mathrm{KL}}(p_\pi\|\pi)\,d\nu(\pi)
    \qquad\text{for every } P .
\end{align*}
Evaluating at $P=\bar P$ makes the right-hand side vanish, so $D_{\mathrm{KL}}(\bar P\|\bar P')=0$ and hence $\bar P'=\bar P$. The chain rule may then be applied to the left-hand side as well, turning it into $c\,D_{\mathrm{KL}}(\nu\|\mu)$ plus $c\int_\Pi D_{\mathrm{KL}}(p_\pi\|\pi)\,d\nu(\pi)$. Taking $p_\pi\equiv\pi$ leaves $c\,D_{\mathrm{KL}}(\nu\|\mu)=\xi D_{\mathrm{KL}}(\nu\|\mu)$, and since $\mu$ is not a point mass some $\nu\ll\mu$ has $D_{\mathrm{KL}}(\nu\|\mu)>0$, forcing $c=\xi$; letting the conditionals vary freely then forces $c=\lambda$. Hence $\lambda=\xi$. The proviso on $\mu$ cannot be dropped, and its failure is instructive rather than awkward: when $\mu=\delta_\pi$ the outer layer is vacuous, $\xi$ is not identified at all, and $V^{\lambda,\xi}_\Pi=v^\lambda_\pi$ for every $\xi$. This sharpens Theorem \ref{thm:mismatch-impossibility}: mismatched curvature corresponds to no single joint entropy penalty whatsoever, and that, rather than any accident of functional form, is why it admits no single-layer entropic value at any target curvature.
\end{remark}

\begin{lemma}[Adversarial Legendre Duality]
\label{lem:legendre}
Under Assumption E3, $-W_{\mathrm{RI}}(-u(f)) = V(f)$.
\end{lemma}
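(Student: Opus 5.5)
The plan is to unfold the definition of $W_{\mathrm{RI}}$ at the payoff profile $w=-u(f)\in\mathbb{R}^\Omega$ and reduce it to the minimization in Lemma \ref{lem:dv}. This is why $W_{\mathrm{RI}}$ was defined on all of $\mathbb{R}^\Omega$: $-u(f)$ need not be the profile of any act, but Lemma \ref{lem:dv} holds verbatim for arbitrary $w\in\mathbb{R}^\Omega$. Under Assumption E3,
\begin{align*}
    W_{\mathrm{RI}}(-u(f)) \;=\; \sup_{p\in\Delta(\Omega)}\big\{-\mathbb{E}_p[u(f)]-\theta D_{\mathrm{KL}}(p\|q)\big\}
    \;=\; -\inf_{p\in\Delta(\Omega)}\big\{\mathbb{E}_p[u(f)]+\theta D_{\mathrm{KL}}(p\|q)\big\}.
\end{align*}
The sign flip $\sup(-g)=-\inf g$ is immediate.

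The one step needing care is that the infimum here runs over all of $\Delta(\Omega)$, while Lemma \ref{lem:dv} minimizes over $\Delta_{++}(\Omega)$. I would settle this with the exact identity from the proof of Lemma \ref{lem:dv}. For every $p\ll q$, which is every $p\in\Delta(\Omega)$ since $q\in\Delta_{++}(\Omega)$, we have
\[
    \mathbb{E}_p[u(f)]+\theta D_{\mathrm{KL}}(p\|q)-V(f)=\theta D_{\mathrm{KL}}(p\|p^*)\ge0 .
\]
Here $D_{\mathrm{KL}}$ on boundary points is defined with the convention $0\log0=0$, and $p^*$ is the Gibbs measure of Lemma \ref{lem:dv}. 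The identity extends to boundary $p$ by the same substitution, because it only uses $\log(dp^*/dq)$ on $\mathrm{supp}(p)$. Since $p^*\in\Delta_{++}(\Omega)$, the infimum over $\Delta(\Omega)$ equals $V(f)$ and is attained uniquely at $p^*$. The boundary adds nothing, and "$\max$" is justified, as asserted after \eqref{eq:wri}.

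Combining the two displays gives $W_{\mathrm{RI}}(-u(f))=-V(f)$, that is, $-W_{\mathrm{RI}}(-u(f))=V(f)$. The maximizer of the robust-control problem is the pessimistic tilt $p^*\propto q\exp(-u(f)/\theta)$. For the record, the same computation at $w=u(f)$ gives the closed form $W_{\mathrm{RI}}(w)=\theta\log\mathbb{E}_q[\exp(w/\theta)]$, the log-sum-exp convex conjugate of $\theta D_{\mathrm{KL}}(\cdot\|q)$; this is the sense in which the duality is Legendre.

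There is no real obstacle. The only point to get right is the domain mismatch ($\Delta(\Omega)$ versus $\Delta_{++}(\Omega)$), and the nonnegativity identity $\theta D_{\mathrm{KL}}(p\|p^*)\ge0$ handles it directly. No compactness or lower-semicontinuity argument is needed.
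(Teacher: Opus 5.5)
Your proof is correct and follows the paper's argument: unfold $W_{\mathrm{RI}}$ at $-u(f)$ under E3, turn the supremum of the negated objective into minus an infimum, and invoke Lemma \ref{lem:dv}. Your extra step reconciles the domain $\Delta(\Omega)$ in \eqref{eq:wri} with the domain $\Delta_{++}(\Omega)$ in Lemma \ref{lem:dv}, using the identity $\mathbb{E}_p[u(f)]+\theta D_{\mathrm{KL}}(p\|q)-V(f)=\theta D_{\mathrm{KL}}(p\|p^*)$ for all $p\ll q$; this step is valid and makes explicit a point the paper's proof leaves implicit.
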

\begin{proof}
By E3, $C(p,q)=\theta D_{\mathrm{KL}}(p\|q)$, so
\begin{align*}
    W_{\mathrm{RI}}(-u(f))
    &= \max_p\big\{-\mathbb{E}_p[u(f)] - \theta D_{\mathrm{KL}}(p\|q)\big\} \\
    &= -\min_p\big\{\mathbb{E}_p[u(f)]+\theta D_{\mathrm{KL}}(p\|q)\big\}
     \;=\; -V(f)
\end{align*}
by Lemma \ref{lem:dv}. Hence, we have $-W_{\mathrm{RI}}(-u(f))=V(f)$.
\end{proof}

\begin{proof}[Proof of Theorem \ref{thm:uniqueness}]
Claim (a) is Lemma \ref{lem:dv}, which uses only E1; claim (b) is Lemma \ref{lem:predictive-reduction}, which uses E1 and E2; and claim (c) is Lemma \ref{lem:legendre}, which uses E1 and E3.
\end{proof}

\begin{lemma}[Affine Rigidity]
\label{lem:affine-rigidity}
Let $I\subset\mathbb{R}$ be an open interval and let $h:I\to\mathbb{R}$ satisfy
$$
h(mz_1+(1-m)z_2) = m\,h(z_1) + (1-m)\,h(z_2) \qquad \text{for every } z_1,z_2\in I \text{ and every } m\in(0,1).
$$
Then, $h(z) = \alpha z + \beta$ for some constants $\alpha,\beta\in\mathbb{R}$; no continuity or measurability of $h$ is assumed or needed.
\end{lemma}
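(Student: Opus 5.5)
The plan is to reduce the statement to the classical Cauchy equation in three steps. First, normalize $h$ so that the constants disappear. Second, derive additivity on a neighbourhood of the origin from the mixture identity. Third, show that such a function is linear using only the \emph{real} mixture weights $m\in(0,1)$. No regularity of $h$ is needed because the hypothesis holds for every real $m$, not only dyadic or rational ones.

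Fix a point $z_0\in I$ and set $g(t):=h(z_0+t)-h(z_0)$ on the open interval $J:=I-z_0$. Then $J$ contains $0$, $g(0)=0$, and $g$ inherits the mixture identity. It suffices to show $g(t)=\alpha t$ on $J$, and then $h(z)=\alpha(z-z_0)+h(z_0)$.

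Taking $z_2=0$ in the identity for $g$ gives $g(mz)=m\,g(z)$ for all $z\in J$ and $m\in(0,1)$. Now fix any nonzero $t\in J$ and set $\alpha_t:=g(t)/t$. For $s$ strictly between $0$ and $t$, write $s=mt$ with $m=s/t\in(0,1)$. Homogeneity then gives $g(s)=(s/t)g(t)=\alpha_t s$, so $g$ is linear with slope $\alpha_t$ on the segment joining $0$ and $t$.

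It remains to show that the slope is the same on both sides of $0$. Pick $t_+>0$ and $t_-<0$ in $J$, which exist because $J$ is open and contains $0$. Choose $m\in(0,1)$ with $mt_++(1-m)t_-=0$. The mixture identity yields $0=g(0)=m\,g(t_+)+(1-m)\,g(t_-)=m\alpha_{t_+}t_++(1-m)\alpha_{t_-}t_-$. Since $mt_+=-(1-m)t_-\neq0$, this forces $\alpha_{t_+}=\alpha_{t_-}$.

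Slopes on the same side of $0$ agree because nested segments overlap: for $0<t<t'$, the point $t$ lies on the segment joining $0$ and $t'$, so $\alpha_t=\alpha_{t'}$. Hence a single $\alpha$ serves for all of $J$.

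The one step that needs care is the matching of slopes across $0$; everything else is routine substitution. That step is the only place where openness of $I$ is used, and it is what rules out a kinked function. No Hamel-basis pathology can arise: homogeneity holds for all real $m\in(0,1)$, which pins down $g$ on each ray directly, without any appeal to continuity.
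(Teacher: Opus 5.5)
Your proof is correct and complete. The translated $g$ inherits the mixture identity because the weights sum to one. Mixing with the anchor $0$ gives $g(mz)=m\,g(z)$, hence linearity on each segment between $0$ and $t$. Nesting makes the slope constant on each side of $0$, and the single mixture $mt_++(1-m)t_-=0$ glues the two sides.

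The paper decomposes the problem differently. It subtracts the affine interpolant $\ell$ through two arbitrary points $a<b$ and observes that $h-\ell$ vanishes on $[a,b]$, because every interior point is a convex combination of the endpoints. That gives affinity on both sides of any interior point in one stroke. It then covers $I$ by enlarging $[a_0,b_0]$ to $[\min\{a_0,z\},\max\{b_0,z\}]$ and using uniqueness of the affine function through two points.

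Your anchored version mixes each point only with $0$. Homogeneity therefore controls each ray separately, and you need the explicit cross-origin step to exclude a kink at $z_0$. That is exactly the work the paper's two-endpoint mixture does implicitly. In exchange, you avoid the interval-enlargement patching, since every $t\in J$ already lies on a segment from $0$. Both arguments rest on the same ingredient, the full continuum of real weights $m\in(0,1)$, which is why no Cauchy or Hamel pathology can arise. Neither route is more general than the other.

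Two small presentational points:
\begin{enumerate}
\item Your opening roadmap announces a reduction to Cauchy's additive equation, but the argument you give never derives additivity. Rewrite the roadmap to match what you actually do.
\item What the gluing step needs is that $z_0$ be an interior point of $I$, not openness of $I$ as such. On a nondegenerate closed interval you would simply choose $z_0$ in the interior.
\end{enumerate}
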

\begin{proof}
Fix $a<b$ in $I$, let $\ell$ be the affine interpolant through $(a,h(a))$ and $(b,h(b))$, and set $g:=h-\ell$, which satisfies the same identity and has $g(a)=g(b)=0$. 
Every $z\in(a,b)$ can be written $z=ma+(1-m)b$ with $m=(b-z)/(b-a)\in(0,1)$, so the identity gives $g(z)=m\,g(a)+(1-m)\,g(b)=0$. Hence $h=\ell$ on $[a,b]$. 
To conclude, fix $a_0<b_0$ in $I$ and let $\ell_0$ be the affine interpolant through $(a_0,h(a_0))$ and $(b_0,h(b_0))$. 
For any $z\in I$, apply the previous step to $a:=\min\{a_0,z\}$ and $b:=\max\{b_0,z\}$: the resulting affine function agrees with $h$ on $[a,b]\ni a_0,b_0,z$, hence agrees with $\ell_0$ at the two distinct points $a_0,b_0$ and therefore equals $\ell_0$ identically. 
So $h(z)=\ell_0(z)$ for every $z\in I$.
\end{proof}

\begin{remark}[Relation to the literature on means]
\label{rem:affine-rigidity-lit}
Lemma \ref{lem:affine-rigidity} relies on the classical equality problem for weighted quasi-arithmetic means: the means generated by $\psi$ and $\chi$ coincide for all arguments and weights if and only if $\chi=\alpha\psi+\beta$ with $\alpha\ne0$ \citep[\S3.2, Thm.~83]{HLP_1934}; see also \citet[\S5.3.2]{A_1966}. While standard results require the generators to be continuous and strictly monotone, Lemma \ref{lem:affine-rigidity} needs neither. It exploits the full continuum of weights $m\in(0,1)$---which our economic setting supplies for free since $\mu=(m,1-m)$ is chosen freely---to preclude the non-affine solutions that arise under fixed weights \citep[see][Chs.~9, 11]{K_2009}. 
In Theorem \ref{thm:necessity}, where $\phi$ is continuous, the classical theorem could be applied on each open square instead; we use Lemma \ref{lem:affine-rigidity} because it keeps the argument self-contained.
The theorem's primary contribution is thus not the rigidity itself, but the reduction to it from an equality of aggregators that holds only on an open set of acts generated by a two-point model set.
\end{remark}

\begin{proof}[Proof of Theorem \ref{thm:necessity}]
Let $n:=|\Omega|\ge2$. Fix any two distinct $\pi_1,\pi_2\in\Delta_{++}(\Omega)$ and set $\Pi:=\{\pi_1,\pi_2\}$, which is finite, hence compact. Because $\pi_1\ne\pi_2$ both lie in the probability simplex, they are linearly independent in $\mathbb{R}^n$: $\pi_2=c\pi_1$ would force $c=1$ upon summing coordinates, contradicting $\pi_1\ne\pi_2$.

Let $w:=u(f)$, so that $w$ ranges over $\mathcal{W}\supseteq(\underline u,\bar u)^n$ as $f$ ranges over $\mathcal{F}$, by Section \ref{sec:primitives}. Let $v_j:=\exp(-w_j/\theta)$ and set $\underline v:=\exp(-\bar u/\theta)$, $\bar v:=\exp(-\underline u/\theta)$; then $v=(v_1,\dots,v_n)$ ranges over the open box $(\underline v,\bar v)^n\subset(0,\infty)^n$, which contains every constant vector $(c,\dots,c)$ with $c\in(\underline v,\bar v)$. For $i=1,2$, $L_i(v):=\mathbb{E}_{\pi_i}[v]=\sum_j\pi_i(j)v_j$ is linear in $v$, and Lemma \ref{lem:dv} applied with $\pi_i$ in place of $q$ gives $v_{\pi_i}(f) = -\theta\log L_i(v)$. Since $\pi_1,\pi_2$ are linearly independent, $v\mapsto(L_1(v),L_2(v))$ is a surjective (hence open) linear map $\mathbb{R}^n\to\mathbb{R}^2$, so its image of the open box is an open set $U\subset(0,\infty)^2$.\footnote{A surjective linear map between finite-dimensional Euclidean spaces is automatically open: pick any complement $S$ of $\ker(L_1,L_2)$ in $\mathbb{R}^n$, so that $(L_1,L_2)|_S:S\to\mathbb{R}^2$ is a linear isomorphism; in coordinates adapted to $\mathbb{R}^n=S\oplus\ker(L_1,L_2)$ the map is that isomorphism composed with the coordinate projection onto the first two coordinates, and both send open sets to open sets. This is elementary linear algebra and does not require the open mapping theorem for Banach spaces.} Composing with the diffeomorphism $(\ell_1,\ell_2)\mapsto(-\theta\log\ell_1,-\theta\log\ell_2)$, the pair $(y_1,y_2):=(v_{\pi_1}(f),v_{\pi_2}(f))$ ranges over a nonempty open set $U'\subset\mathbb{R}^2$ as $f$ ranges over acts.

Fix $\mu=(m,1-m)\in\Delta(\Pi)$, $m\in(0,1)$; the induced predictive measure is $\bar\pi=m\pi_1+(1-m)\pi_2$, so $L_{\bar\pi}(v)=mL_1(v)+(1-m)L_2(v) = m\exp(-y_1/\theta)+(1-m)\exp(-y_2/\theta)$. By hypothesis, the exact reduction applies to this $\Pi$, $\mu$, and $f$:
$$
\phi^{-1}\big(m\phi(y_1)+(1-m)\phi(y_2)\big) = V^\phi_\Pi(f) = V(f) = -\theta\log L_{\bar\pi}(v) = \Psi^{-1}\big(m\Psi(y_1)+(1-m)\Psi(y_2)\big)
$$
for every $(y_1,y_2)\in U'$ and every $m\in(0,1)$, where $\Psi(y):=\exp(-y/\theta)$ and $\Psi^{-1}(z)=-\theta\log z$. Applying $\phi$ to both sides and setting $h:=\phi\circ\Psi^{-1}$, $z_i:=\Psi(y_i)$,
$$
h(mz_1+(1-m)z_2) = m\,h(z_1)+(1-m)\,h(z_2)
$$
for $(z_1,z_2)$ ranging over $\Psi(U')$ and all $m\in(0,1)$. Note that $\Psi(U')=U$ exactly, since $z_i=\Psi(y_i)=\exp(\log L_i(v))=L_i(v)$.

What is needed is an open square inside $U$, and the constant acts supply one.\footnote{It is \emph{not} enough here to fix $z_2$ and vary $z_1$: the resulting identity, holding for a single $z_2$ only, does not force affinity, as $h(z)=|z-z_2|$ shows.} 
For $c\in(\underline v,\bar v)$ the constant vector $v=(c,\dots,c)$ gives $L_1(v)=L_2(v)=c$, so the entire diagonal segment $\{(c,c):c\in(\underline v,\bar v)\}$ lies in $U$; since $U$ is open, each such $c$ admits an open interval $I_c\ni c$ with $I_c\times I_c\subseteq U$. On $I_c$ the displayed identity therefore holds for \emph{every} $z_1,z_2\in I_c$ and every $m\in(0,1)$, and $mz_1+(1-m)z_2\in I_c$ because $I_c$ is an interval. This is exactly the postulate of Lemma \ref{lem:affine-rigidity}, so $h$ is affine on $I_c$.

Affinity propagates from these local intervals to all of $(\underline v,\bar v)$ by connectedness. Fix $c_0$ and let $\ell$ be the affine function agreeing with $h$ on $I_{c_0}$; let $S:=\{c\in(\underline v,\bar v): h=\ell \text{ on a neighbourhood of } c\}$. Then $S$ is nonempty and open, and it is relatively closed: if $c\in\bar S\cap(\underline v,\bar v)$ then $I_c$ meets $S$, so $h$ is affine on $I_c$ and agrees with $\ell$ on a nondegenerate subinterval, forcing the two affine functions to coincide on $I_c$. Hence $S=(\underline v,\bar v)$ and $h(z)=\alpha z+\beta$ on all of $(\underline v,\bar v)$.

Hence, $\phi(y)=h(\Psi(y))=\alpha\exp(-y/\theta)+\beta$ for every $y$ in $\Psi^{-1}((\underline v,\bar v))=(\underline u,\bar u)$, which is precisely the range of $v_\pi(f)$ swept out by acts $f$ and models $\pi\in\Delta_{++}(\Omega)$. 
Since $\phi$ is strictly increasing and $\Psi$ is strictly decreasing, $\alpha<0$; relabeling $\alpha=:a$, $\beta=:b$ gives the claim. 
\end{proof}

\begin{proof}[Proof of Theorem \ref{thm:mismatch-impossibility}]
By the way of contradiction, fix a distinguished state $\omega_1$ and choose $\pi_1,\pi_2\in\Delta_{++}(\Omega)$ with
\begin{align*}
    \pi_1(\omega_1)=\tfrac34, \qquad \pi_2(\omega_1)=\tfrac14,
\end{align*}
their values off $\omega_1$ being arbitrary subject to full support; set $\Pi:=\{\pi_1,\pi_2\}$, which is finite hence compact, and $\mu:=(\tfrac14,\tfrac34)$. 
Write $w:=u(f)$, $v_\omega:=\exp(-w_\omega/\lambda)$, and $L_i(v):=\mathbb{E}_{\pi_i}[v]$, so that $v^\lambda_{\pi_i}(f)=-\lambda\log L_i(v)$ by Lemma \ref{lem:dv}. 
Set $r:=\lambda/\xi\ne1$ and, for a candidate $\theta'$, $s:=\lambda/\theta'>0$. A direct computation using $\phi_\xi(x)=-\exp(-x/\xi)$ gives
\begin{align*}
    V_\Pi^{\lambda,\xi}(f) \;=\; -\xi\log\Big[\tfrac14 L_1(v)^r+\tfrac34 L_2(v)^r\Big],
\end{align*}
while, for a candidate $q$, $-\theta'\log\mathbb{E}_q[\exp(-u(f)/\theta')] = -\theta'\log[\sum_\omega q(\omega)v_\omega^s]$.

Suppose some $(\theta',q)$ had the stated property. Hold $v_\omega\equiv\kappa$ for every $\omega\ne\omega_1$, for a fixed $\kappa$ in the range swept out by acts, and let $v_{\omega_1}=\kappa e^{t}$ with $t$ varying over an open interval around $0$; this is admissible because $\mathcal{W}$ contains an open box. Writing $A(t)$ and $B(t)$ for the resulting bracketed expressions, the hypothesised equality reduces, after the $\kappa$-dependent constants cancel, to
\begin{equation}
    s\log A(t) \;=\; r\log B(t)
    \label{eq:mismatch-identity-body}
\end{equation}
for all $t$ close to 0. Both sides are smooth and vanish at $t=0$, so every derivative at $t=0$ must agree. Appendix \ref{app:mismatch} carries out the first three. The first forces $q(\omega_1)=\bar p=\tfrac38$, independently of $r$ and $s$. The second then forces
\begin{align*}
    s \;=\; 1+(r-1)\cdot\frac{\sigma^2_\mu(p)}{\bar p(1-\bar p)} \;=\; 1+\frac{r-1}{5} \;=\; \frac{r+4}{5},
\end{align*}
using $\sigma^2_\mu(p)=\tfrac14\cdot\tfrac34\cdot(\tfrac12)^2=\tfrac3{64}$ and $\bar p(1-\bar p)=\tfrac{15}{64}$; note $s>0$ for every $r>0$, so this does determine a candidate $\theta'=5\lambda/(r+4)$ rather than an immediate contradiction. The third derivative, evaluated at these forced values, is
\begin{align*}
    \frac{d^3}{dt^3}\Big[s\log A(t)-r\log B(t)\Big]_{t=0}
    \;=\; \frac{3\,r\,(r-1)^2(r+4)}{1600},
\end{align*}
which is strictly positive for every $r>0$ with $r\ne1$. This contradicts \eqref{eq:mismatch-identity-body}, so no such $(\theta',q)$ exists. Thus, the final sentence of the theorem is immediate, since a $\theta'$ working for every $(\Pi,\mu)$ would in particular work for this one.
\end{proof}

\section{Discussion}
\label{sec:discussion}
We discuss the reasons \textit{why} the KL divergence is used, and \textit{why} the CARA form is used specifically (as opposed to some other second-order aggregator $\phi$).

\paragraph{CARA: The self-conjugacy behind Lemma \ref{lem:predictive-reduction}.} The proof of Lemma \ref{lem:predictive-reduction} uses one structural fact about $\phi(x)=-\exp(-x/\theta)$: it is the exact functional inverse of the map $x\mapsto-\theta\log(-x)$ that Lemma \ref{lem:dv} produces at the model-robustification stage, \emph{provided the same $\theta$ is used in both places}. This is what lets the outer aggregation in \eqref{eq:robust-kmm} pass, via Fubini, straight through to a linear mixture $\bar\pi=\int_\Pi\pi\,d\mu(\pi)$ of the underlying \emph{measures}, rather than remaining a nonlinear functional of the robustified values $v_\pi(f)$. Theorem \ref{thm:necessity} shows that no other aggregator has this property. The phenomenon is the entropic face of a familiar one: CARA is the unique utility index, up to affine transformation, whose certainty equivalent is additive across independent risks \citep{G_1974, M_1986}, and \citet{MPST_2024} extend this beyond expected utility, characterizing every monotone additive statistic as a mixture of CARA certainty equivalents.

\begin{remark}[Relation to \citet{PST_2023}]
\label{rem:pst-gap}
Proposition \ref{prop:ri-gap} already shows, at a categorical level, that no cost function $C(p,q)$ defined on single realized posteriors can be \emph{derived} as the value of a genuine information-acquisition problem about a fixed act; the argument there depends on linearity and Bayes-plausibility, not on which $C$ is postulated. A second point is specific to \citet{PST_2023}: even setting that degeneracy aside, their axiomatization does not deliver $C(p,q)=\theta D_{\mathrm{KL}}(p\|q)$ as a special case in the first place. Their Theorem 1 characterizes the cost of an \emph{experiment} $\sigma=(\sigma_i)_{i\in\Theta}$, under invariance across Blackwell-equivalent experiments, additivity across independent experiments, dilution linearity and continuity, as $C(\sigma)=\sum_{i\ne j}\beta_{ij}D_{\mathrm{KL}}(\sigma_i\|\sigma_j)$ for a unique nonnegative $(\beta_{ij})$. In the two-state case with unit weights this is, as the authors note, the $J$-divergence $D_{\mathrm{KL}}(\sigma_1\|\sigma_2)+D_{\mathrm{KL}}(\sigma_2\|\sigma_1)$ of Jeffreys --- a symmetrized object, not the single directed term that Assumption E3 postulates. Their Section VI shows that such a cost is uniformly posterior separable exactly when $\beta_{ij}(q)=b_{ij}q_i$ for prior-independent constants $b_{ij}$. In that case $C(\sigma,q)=\mathbb{E}_{p\sim\tau_\sigma}[F(p)-F(q)]$ with $F(p)=\sum_{i\ne j}b_{ij}\,p_i\log(p_i/p_j)$ (their Eq.~(16)), where $\tau_\sigma$ is the distribution over posteriors induced by $\sigma$, not to be confused with the second-order prior $\mu\in\Delta(\Pi)$. Neither object reduces algebraically to $\theta D_{\mathrm{KL}}(p\|q)$ for a single realized $p$, even in the two-state case: the constants $b_{ij}$ cannot depend on $q$, so $q$ never enters $F$ inside a logarithm. The taxonomy of \citet[Ch.~6]{Str_2025} locates the difficulty. \citeauthor{PST_2023}'s class is prior-independent, hence posterior separable but not uniformly so. The one nearby cost that does average a directed divergence from the prior is mutual information, which they exclude from their class because it is subadditive, not additive, across independent experiments. Assumption E3 is therefore best read as a distinct, simpler postulate --- standard in the robust-control tradition following \citet{HS_2001} and axiomatized at the preference level by \citet{S_2011} --- motivated by, but not a direct corollary of, that experiment-level result. Identifying the precise additional condition (plausibly, restricting to one-directional, perfectly-informative binary experiments) under which that cost specializes exactly to $\theta D_{\mathrm{KL}}(p\|q)$ is left for future work.
\end{remark}

\paragraph{KL Divergence: The rational inattention side.} Given Remark \ref{rem:pst-gap}, we cannot claim that \citet{PST_2023} prove KL to be the \emph{only} cost consistent with additive information costs \emph{in the single-realized-posterior form used here}. Their theorem pins down the LLR family $\sum\beta_{ij}D_{\mathrm{KL}}(\sigma_i\|\sigma_j)$ at the level of experiments, and our $C(p,q)=\theta D_{\mathrm{KL}}(p\|q)$ (Assumption E3) is motivated by that family but not implied by it. What we \emph{can} say is narrower but still meaningful: among single-realized-posterior costs $C(p,q)$, it is $D_{\mathrm{KL}}(p\|q)$ specifically that makes the Legendre--Fenchel dual (Lemma \ref{lem:dv}) collapse to the closed-form log-sum-exp representation $-\theta\log\mathbb{E}_q[\exp(-w/\theta)]$. At the level of preferences this is known in considerable generality: \citet[\S3.4.1]{S_2011} observes that a variational preference whose cost is a statistical distance other than relative entropy has no second-order expected utility representation for any $\phi$, and conversely. What Proposition \ref{prop:kl-uniqueness} adds is quantitative rather than categorical. Within the power-divergence family it says exactly \emph{how} the collapse fails: the $\chi^2$ dual is an exact mean--variance functional, blind to every higher moment, and for general $\gamma$ the discrepancy first appears at the third cumulant, with a coefficient that vanishes precisely at $\gamma=1$.

For the remainder of this section we work directly with payoff profiles $w\in\mathcal{W}\subseteq\mathbb{R}^\Omega$ rather than with the underlying acts, writing $\bar w:=\mathbb{E}_q[w]$, $\kappa_2:=\mathrm{Var}_q(w)$ and $\kappa_3:=\mathbb{E}_q[(w-\bar w)^3]$ for the first three cumulants of $w$ under $q$. Consider the power-divergence (Cressie--Read) family
\begin{equation}
    D_\gamma(p\|q) := \sum_\omega q(\omega)\,f_\gamma\!\big(p(\omega)/q(\omega)\big),
    \qquad
    f_\gamma(t) := \frac{t^\gamma - \gamma t + \gamma - 1}{\gamma(\gamma-1)}\ \ (\gamma\ne0,1),
    \label{eq:power-divergence}
\end{equation}
with $f_1(t):=t\log t-t+1$ and $f_0(t):=t-1-\log t$ the $\gamma\to1$ and $\gamma\to0$ limits, so that $D_1(p\|q)=D_{\mathrm{KL}}(p\|q)$ and $D_0(p\|q)=D_{\mathrm{KL}}(q\|p)$ is reverse KL. The family varies continuously in $\gamma$, connects to Rényi-type divergences, and contains the ordinary $\chi^2$ divergence $D_{\chi^2}(p\|q):=\sum_\omega(p(\omega)-q(\omega))^2/q(\omega)$ at $\gamma=2$, up to a factor of two ($D_{\chi^2}=2D_2$). Write
\begin{align*}
    \Lambda_\gamma(w;\theta) \;:=\; \min_{p\in\Delta_{++}(\Omega)}\big\{\mathbb{E}_p[w] + \theta D_\gamma(p\|q)\big\},
\end{align*}
suppressing $\theta$ when it is clear from context. The first lemma records the shape of the optimal tilt; the proposition then shows that only $\gamma=1$ produces a value function of log-sum-exp form.

\begin{lemma}[Optimal Tilts in the Power-Divergence Family]
\label{lem:power-tilt}
Fix $q\in\Delta_{++}(\Omega)$, $w\in\mathcal{W}$ and $\gamma\in\mathbb{R}$, and let $\theta>0$ satisfy $\theta>(\gamma-1)\,\mathrm{osc}(w)$ when $\gamma>1$, where $\mathrm{osc}(w):=\max_\omega w(\omega)-\min_\omega w(\omega)$; for $\gamma\le1$ no restriction on $\theta$ is needed. Then the minimum defining $\Lambda_\gamma(w)$ is attained uniquely, at
\begin{align*}
    p^*(\omega) &\ \propto\ q(\omega)\exp\big(-w(\omega)/\theta\big) && (\gamma=1), \\
    p^*(\omega) &\ \propto\ q(\omega)\big[1+(\gamma-1)(\nu-w(\omega))/\theta\big]^{1/(\gamma-1)} && (\gamma\ne1),
\end{align*}
where $\nu$ denotes the multiplier on the constraint $\mathbb{E}_q[p^*/q]=1$. The tilt is exponential in $w$ if and only if $\gamma=1$; for every other $\gamma$ it is a \emph{power} tilt.
\end{lemma}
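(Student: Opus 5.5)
The plan is to treat the minimization defining $\Lambda_\gamma(w)$ as a strictly convex program in the likelihood ratio $t(\omega):=p(\omega)/q(\omega)$. The objective becomes $\sum_\omega q(\omega)\{w(\omega)t(\omega)+\theta f_\gamma(t(\omega))\}$, minimized over $t>0$ subject to $\sum_\omega q(\omega)t(\omega)=1$. Each $f_\gamma$ is strictly convex on $(0,\infty)$, since $f_\gamma''(t)=t^{\gamma-2}>0$ for every $\gamma$, including the limits $\gamma=0,1$. The feasible set is convex, so any critical point of the Lagrangian is the unique minimizer. The case $\gamma=1$ is already Lemma \ref{lem:dv}, which gives the Gibbs tilt $p^*\propto q\exp(-w/\theta)$ and uniqueness; I will simply invoke it.

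For $\gamma\ne1$, I form the Lagrangian with multiplier $\nu$ on $\mathbb{E}_q[t]=1$ and write the pointwise first-order condition $w(\omega)+\theta f_\gamma'(t(\omega))=\nu$. Since $f_\gamma'(t)=(t^{\gamma-1}-1)/(\gamma-1)$, this solves to
\[
t(\omega)=\bigl[1+(\gamma-1)(\nu-w(\omega))/\theta\bigr]^{1/(\gamma-1)}.
\]
For $\gamma=0$ the same formula comes from $f_0'(t)=1-1/t$. The formula applies provided the bracket is positive, and $\nu$ must then be pinned down by $\sum_\omega q(\omega)t(\omega;\nu)=1$.

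The main obstacle is existence of a multiplier $\nu$ with every bracket positive and $t$ normalized, which is where the condition $\theta>(\gamma-1)\mathrm{osc}(w)$ enters. Let $g(\nu):=\sum_\omega q(\omega)t(\omega;\nu)$. On its domain of positivity, $g$ is continuous and strictly increasing in $\nu$, since $\partial t/\partial\nu=t^{2-\gamma}/\theta>0$.

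When $\gamma<1$, the domain is $\nu<\min_\omega w+\theta/(1-\gamma)$. As $\nu$ approaches that endpoint, $t\to\infty$ at the minimizing state. As $\nu\to-\infty$, $t\to0$. The intermediate value theorem therefore gives a unique $\nu$, with no condition on $\theta$.

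When $\gamma>1$, the domain is $\nu>\max_\omega w-\theta/(\gamma-1)$, and $g\to\infty$ as $\nu\to\infty$. At the left endpoint the bracket vanishes only at the maximizing state. The condition $\theta>(\gamma-1)\mathrm{osc}(w)$ should keep the other brackets at most $1$ there, so $g<1$ at the endpoint, and again a unique root exists. I expect the bookkeeping at this endpoint, including the possible boundary value $t=0$ excluded by $\Delta_{++}$, to need the most care. If the bound does not work cleanly in this form, I would instead verify directly that $\nu=\mathbb{E}_q[w]$ gives brackets in $(0,2)$, and bracket the root from there.

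Finally, for the dichotomy, the tilt $t(\omega)$ is exponential in $w$ exactly when $\log t$ is affine in $w$. For $\gamma\ne1$, $\log t=\frac{1}{\gamma-1}\log(1+(\gamma-1)(\nu-w)/\theta)$, which is strictly concave in $w$ and not affine on any interval. Since $\mathcal{W}$ contains an open box of profiles, non-affinity shows up on admissible $w$. The tilt is therefore a genuine power tilt, and exponential only in the limit $\gamma\to1$.
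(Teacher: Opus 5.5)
Your proposal is correct and follows the paper's own route: likelihood-ratio variables, strict convexity of $f_\gamma$, and the Lagrangian first-order condition inverted through $f_\gamma'$. For $\gamma>1$, interiority comes from evaluating at the left endpoint $\nu=\max_{\omega'}w(\omega')-\theta/(\gamma-1)$, where every bracket equals $(\gamma-1)(\max_{\omega'}w(\omega')-w(\omega))/\theta<1$; so your hedged step goes through, the fallback is unnecessary, and your monotonicity-plus-intermediate-value argument for $g$ just spells out what the paper states tersely. One small slip in the dichotomy: $x\mapsto\frac{1}{\gamma-1}\log\bigl(1+(\gamma-1)(\nu-x)/\theta\bigr)$ has second derivative $-\frac{\gamma-1}{\theta^2[1+(\gamma-1)(\nu-x)/\theta]^2}$, so it is strictly concave only for $\gamma>1$ and strictly convex for $\gamma<1$; it is non-affine either way, which is all your conclusion needs.
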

\begin{proof}
Let $r:=p/q$, and minimize $\sum_\omega q(\omega)[r(\omega)w(\omega)+\theta f_\gamma(r(\omega))]$ subject to $\sum_\omega q(\omega)r(\omega)=1$. 
Each $f_\gamma$ is strictly convex on $(0,\infty)$ and the constraint set is convex, so a minimizer, if interior, is unique and characterized by the first-order condition $f_\gamma'(r(\omega))=(\nu-w(\omega))/\theta$ for a multiplier $\nu$. 
Interiority is where the restriction on $\theta$ enters. Writing $a:=\gamma-1$, the derivative $f_\gamma'(t)=(t^a-1)/a$ has range $\mathbb{R}$ for $\gamma=1$, $(-\infty,1/(1-\gamma))$ for $\gamma<1$, and $(-1/(\gamma-1),\infty)$ for $\gamma>1$. For $\gamma\le1$ the constraint $\mathbb{E}_q[r]=1$ always admits a root with $r>0$, since $f_\gamma'(0^+)=-\infty$. 
For $\gamma>1$ the generator extends continuously to $t=0$ with $f_\gamma(0)=1/\gamma<\infty$, so the boundary of the simplex carries only a finite penalty and the first-order condition is solvable only if $1+(\gamma-1)(\nu-w(\omega))/\theta>0$ for every $\omega$; evaluating at $\nu=\max_\omega w(\omega)-\theta/(\gamma-1)$ shows that $\theta>(\gamma-1)\mathrm{osc}(w)$ suffices. 
Without it the infimum over $\Delta_{++}(\Omega)$ need not be attained --- for $\Omega=\{\omega_1,\omega_2\}$, $q$ uniform, $w=(1,-1)$ and $\gamma=2$, the minimizer over $\Delta(\Omega)$ is the vertex $(0,1)$ once $\theta\le1$. 
For $\gamma\ne1$, $f_\gamma'(t)=(t^{\gamma-1}-1)/(\gamma-1)$, which inverts to $r(\omega)=[1+(\gamma-1)(\nu-w(\omega))/\theta]^{1/(\gamma-1)}$. 
For $\gamma=1$, $f_1'(t)=\log t$ gives $r(\omega)=\exp((\nu-w(\omega))/\theta)$, and the constraint pins $\exp(\nu/\theta)=1/\mathbb{E}_q[\exp(-w/\theta)]$, reproducing the Gibbs tilt of Lemma \ref{lem:dv}.
\end{proof}

\begin{proposition}[KL Is the Unique Power Divergence with a Log-Sum-Exp Dual]
\label{prop:kl-uniqueness}
Fix $\theta>0$ and $q\in\Delta_{++}(\Omega)$.
\begin{enumerate}
    \item[(i)] For every $w\in\mathcal{W}$, $\ \Lambda_1(w;\theta) = -\theta\log\mathbb{E}_q[\exp(-w/\theta)]$.
    
    \item[(ii)] For the $\chi^2$ divergence, and for every $w\in\mathcal{W}$ with $\max_\omega w(\omega)-\bar w<2\theta$,
    \begin{align*}
        \min_{p}\big\{\mathbb{E}_p[w]+\theta D_{\chi^2}(p\|q)\big\} \;=\; \bar w - \frac{\kappa_2}{4\theta}
    \end{align*}
    \emph{exactly}, with no dependence on any moment of $w$ beyond the second. The restriction on $\theta$ is needed: for $q$ uniform on two states and $w=(c,-c)$ the displayed formula gives $-c^2/(4\theta)$, but the true constrained minimum is $-c+\theta$ once $c\ge2\theta$, and that value depends on $\max_\omega w(\omega)$, which is not a function of the first two cumulants.
    
    \item[(iii)] For every $\gamma\in\mathbb{R}$ and every $w\in\mathcal{W}$, as $\theta\to\infty$ (so that Lemma \ref{lem:power-tilt}'s tilt is eventually interior),
    \begin{align*}
        \Lambda_\gamma(w;\theta) \;=\; \bar w - \frac{\kappa_2}{2\theta} + \frac{2-\gamma}{6}\cdot\frac{\kappa_3}{\theta^2} + O(\theta^{-3}),
    \end{align*}
    
    \item[(iv)] For every $\gamma\ne1$ there are no $\theta,\vartheta>0$ with $\Lambda_\gamma(\,\cdot\,;\theta) = \Lambda_1(\,\cdot\,;\vartheta)$ on all of $\mathcal{W}$, provided $|\Omega|\ge3$, or $|\Omega|=2$ with $q$ non-uniform.
\end{enumerate}

Hence, $\gamma=1$ is, exactly and for every $\gamma\ne1$ --- not merely for $\gamma=2$ --- the unique member of \eqref{eq:power-divergence} whose dual matches the log-sum-exp form of Lemma \ref{lem:dv}.\footnote{The uniqueness itself follows from conjugate duality. Writing $\Lambda_c(w):=\min_p\{\mathbb{E}_p[w]+c(p)\}$ for a proper closed convex cost $c$, one has $\Lambda_c(w)=-c^*(-w)$, and conjugation is injective on such costs \citep[Thm.~12.2]{R_1970}; the same fact makes the cost function of a variational representation unique \citep{MMR_2006}. The content of the proposition is quantitative: the discrepancy first appears at the third cumulant, $\gamma=2$ is exactly mean--variance, and the argument uses the identity on $\mathcal{W}$ only, not on all of $\mathbb{R}^\Omega$.}
\end{proposition}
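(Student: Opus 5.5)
We take the four claims in order; (i)--(iii) are computations and (iv) is the only genuinely structural step.

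\textbf{Claims (i) and (ii).} Claim (i) is Lemma \ref{lem:dv} verbatim, since $D_1=D_{\mathrm{KL}}$ and the lemma holds for any payoff profile $w$.

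For (ii), write $r=p/q$ and use $D_{\chi^2}(p\|q)=\mathbb{E}_q[(r-1)^2]$. We minimize $\mathbb{E}_q[rw]+\theta\,\mathbb{E}_q[(r-1)^2]$ subject to $\mathbb{E}_q[r]=1$, ignoring positivity at first. The first-order condition gives $r=1-(w-\bar w)/(2\theta)$. Substituting yields $\bar w-\kappa_2/(2\theta)+\kappa_2/(4\theta)=\bar w-\kappa_2/(4\theta)$. This $r$ is strictly positive exactly when $\max_\omega w(\omega)-\bar w<2\theta$. Under that condition the unconstrained quadratic minimizer is feasible in $\Delta_{++}(\Omega)$, so by convexity it is the constrained minimizer. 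The two-state counterexample is then checked directly on the segment.

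\textbf{Claim (iii).} Use the tilt of Lemma \ref{lem:power-tilt} with $\epsilon:=1/\theta$, which is interior for large $\theta$. Expand $r=[1+(\gamma-1)\epsilon(\nu-w)]^{1/(\gamma-1)}$ in powers of $\epsilon$ and solve $\mathbb{E}_q[r]=1$ for $\nu$ order by order.

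A cleaner route, which I would use, is conjugate duality: $\Lambda_\gamma(w)=\bar w+\min_{\mathbb{E}_q h=0}\{\mathbb{E}_q[hw]+\theta\,\mathbb{E}_q f_\gamma(1+h)\}$. Use $f_\gamma(1+h)=h^2/2+(2-\gamma)h^3/6+O(h^4)$, since $f_\gamma''(1)=1$ and $f_\gamma'''(1)=\gamma-2$. Substitute the ansatz $h=\epsilon h_1+\epsilon^2h_2+\cdots$ with $h_1=-(w-\bar w)$. The $O(\epsilon)$ term gives $-\kappa_2/2$. At $O(\epsilon^2)$, stationarity in $h_1$ kills the $h_2$ contribution, leaving $-(\gamma-2)\cdot\tfrac16\,\mathbb{E}_q[(w-\bar w)^3]=(2-\gamma)\kappa_3/6$. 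This matches the claimed coefficient, and I would only verify the sign carefully. Uniformity of the remainder follows from smoothness in $\epsilon$ at $\epsilon=0$ (implicit function theorem for $\nu(\epsilon)$).

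\textbf{Claim (iv), the main obstacle.} Suppose $\Lambda_\gamma(\cdot;\theta)=\Lambda_1(\cdot;\vartheta)$ on $\mathcal{W}$. Since $\mathcal{W}$ contains an open box around constants, and both sides are translation-equivariant ($\Lambda(w+c)=\Lambda(w)+c$), we may work with small perturbations $w=c+\epsilon z$ for $\epsilon\to0$. Both duals are smooth near constants, because the tilt is interior there for any $\theta$. Their Taylor expansions in $\epsilon$ must then agree term by term. The expansion in $\epsilon$ at fixed $\theta$ is the same as (iii) with $\epsilon/\theta$ playing the role of $1/\theta$. Matching the second-order terms $-\epsilon^2\mathrm{Var}_q(z)/(2\theta)$ and $-\epsilon^2\mathrm{Var}_q(z)/(2\vartheta)$ forces $\theta=\vartheta$, provided some $z$ has nonzero variance, which holds since $|\Omega|\ge2$. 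Matching the third-order terms then forces $(2-\gamma)\,\mathbb{E}_q[(z-\bar z)^3]=(2-1)\,\mathbb{E}_q[(z-\bar z)^3]$ for all $z$, i.e.\ $(1-\gamma)\kappa_3(z)=0$.

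The remaining work is to exhibit $z$ with nonzero third central moment under $q$, and this is exactly where the proviso enters. On two states with $q$ uniform, every $z$ has $\kappa_3=0$ by symmetry, so the argument gives nothing there. With $q$ non-uniform on two states, the indicator $z=\mathbf 1_{\omega_1}$ has $\kappa_3=p(1-p)(1-2p)\ne0$ for $p=q(\omega_1)\neq\tfrac12$. With $|\Omega|\ge3$, one can always find a $z$ supported on a state set making the Bernoulli parameter differ from $\tfrac12$; for example, take $\mathbf 1_A$ with $q(A)\ne\tfrac12$, which exists among the singletons since at most two singleton masses can equal $\tfrac12$. Hence $\gamma=1$. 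The concluding sentence of the proposition follows by combining (iv) with (i).
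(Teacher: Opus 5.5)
Your proposal is correct and follows the paper's proof in substance. Part (ii) is the same first-order-condition computation. Part (iv) is the same reduction of a fixed-$\theta$ identity to the large-$\Theta$ expansion of (iii) via the scaling $\Lambda_\gamma(\epsilon z;\theta)=\epsilon\Lambda_\gamma(z;\theta/\epsilon)$: you center at a constant by translation equivariance where the paper shifts $u$, and you close with the same two-valued witness having $\kappa_3\ne0$. Your variational treatment of (iii), where stationarity at $h_1=-(w-\bar w)$ kills the $h_2$ term, is a tidier way of seeing why the paper's $c_1$-terms cancel.

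Two small repairs are needed. First, the cubic Taylor term is $f_\gamma(1+h)=h^2/2+(\gamma-2)h^3/6+O(h^4)$, not $(2-\gamma)h^3/6$; your next line in fact uses the correct sign. Second, the implicit function theorem must be applied to $(h,\epsilon\nu)$ or to the rescaled constraint $\epsilon^{-2}(\mathbb{E}_q[r]-1)$, not to $\nu$ directly, because $\partial_\nu(\mathbb{E}_q[r]-1)$ vanishes identically at $\epsilon=0$; the paper's proof makes exactly this point.
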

\begin{proof}
We prove (i) -- (iv), respectively.
\paragraph{Proof of (i)}
(i) is Lemma \ref{lem:dv} with $u(f)$ replaced by $w$.

\paragraph{Proof of (ii)}
$D_{\chi^2}=2D_2$ corresponds to the generator $2f_2(t)=(t-1)^2$, with $2f_2'(t)=2(t-1)$, so Lemma \ref{lem:power-tilt}'s first-order condition reads $r(\omega)=1+(\nu-w(\omega))/(2\theta)$, and $\sum_\omega q(\omega)r(\omega)=1$ forces $\nu=\bar w$. The hypothesis $\max_\omega w(\omega)-\bar w<2\theta$ is exactly what makes the resulting $r(\omega)=1-(w(\omega)-\bar w)/(2\theta)$ strictly positive at every state, so that $p=rq$ is interior. Substituting gives $\mathbb{E}_p[w]=\bar w-\kappa_2/(2\theta)$ and $\theta D_{\chi^2}(p\|q)=\kappa_2/(4\theta)$, which sum to the stated value. On this range $r$ is affine in $w$, so no higher moment of $w$ enters at any order in $\theta$.

\paragraph{Proof of (iii)}
For $\gamma=1$, expand $\log\mathbb{E}_q[\exp(-w/\theta)] = \sum_{n\ge1}\kappa_n(-1/\theta)^n/n!$ in the cumulants of $w$ under $q$ and multiply by $-\theta$; the coefficient of $\theta^{-2}$ is $\kappa_3/6=(2-1)\kappa_3/6$, as claimed. For $\gamma\ne1$, set $\varepsilon:=1/\theta$ and $a:=\gamma-1$. By Lemma \ref{lem:power-tilt}, $r(\omega)=(1+a\delta(\omega))^{1/a}$ with $\delta(\omega):=\varepsilon(\nu-w(\omega))$; write $\tilde w:=w-\bar w$ and substitute $\nu=\bar w+\varepsilon c$. The existence and regularity of the root $c(\varepsilon)$ follow from the implicit function theorem applied not to $G(\nu,\varepsilon):=\mathbb{E}_q[r]-1$ --- which vanishes identically at $\varepsilon=0$, so that $\partial_\nu G(\cdot,0)\equiv0$ --- but to $H(c,\varepsilon):=\varepsilon^{-2}(\mathbb{E}_q[r]-1)$, which extends analytically to $\varepsilon=0$ with $H(c,0)=c+\tfrac{1-a}{2}\kappa_2$ and $\partial_cH(c,0)=1\ne0$. Writing $c_1:=c(0)$, so that $\nu=\bar w+c_1\varepsilon+O(\varepsilon^2)$, we have $\delta(\omega)=-\varepsilon\tilde w(\omega)+c_1\varepsilon^2+O(\varepsilon^3)$. Expanding $(1+a\delta)^{1/a}=\exp[\tfrac1a\log(1+a\delta)]$ in powers of $\delta$ gives $r=1+\delta+\tfrac{1-a}2\delta^2+\tfrac{(a-1)(2a-1)}6\delta^3+O(\delta^4)$; imposing $\mathbb{E}_q[r]=1$ order by order in $\varepsilon$ (using $\mathbb{E}_q[\tilde w]=0$) leaves $c_1$ unconstrained at orders $\varepsilon^0$ and $\varepsilon^1$ and yields $c_1=\tfrac{a-1}{2}\kappa_2$ at order $\varepsilon^2$. With $x:=r-1$, Taylor expansion of $f_\gamma$ at $t=1$ --- using $f_\gamma''(1)=1$ and $f_\gamma'''(1)=\gamma-2$, both computed directly from \eqref{eq:power-divergence} --- gives $f_\gamma(1+x)=\tfrac{x^2}2+\tfrac{\gamma-2}6x^3+O(x^4)$. Composing with $x=\delta+\tfrac{2-\gamma}2\delta^2+O(\delta^3)$ and $\delta=-\varepsilon\tilde w+c_1\varepsilon^2+O(\varepsilon^3)$, then taking $\mathbb{E}_q[\cdot]$ of both $\mathbb{E}_p[w]=\bar w+\mathbb{E}_q[x\tilde w]$ and $\theta\,\mathbb{E}_q[f_\gamma(r)]=\varepsilon^{-1}\mathbb{E}_q[f_\gamma(1+x)]$ and collecting through $O(\varepsilon^2)$ (the $c_1$-dependent terms cancel by $\mathbb{E}_q[\tilde w]=0$), gives the stated expansion.

The formula passes two independent checks. At $\gamma=1$ it reproduces the cumulant expansion just derived, with coefficient $(2-1)/6=1/6$. At $\gamma=2$ the $\kappa_3$ coefficient vanishes identically, matching (ii)'s exact quadratic formula after the normalization $D_{\chi^2}=2f_2$, i.e.\ $\theta\mapsto\theta/2$.

\paragraph{Proof of (iv)}
Suppose $\Lambda_\gamma(\,\cdot\,;\theta)=\Lambda_1(\,\cdot\,;\vartheta)$ on all of $\mathcal{W}$ for some $\gamma\ne1$, with $\theta$ and $\vartheta$ \emph{fixed}. 
The two expansions in (iii) are in different variables, so they cannot be compared directly; the comparison is made available by the exact scaling identity
\begin{align*}
    \Lambda_\gamma(tw;\theta) \;=\; \min_p\big\{t\,\mathbb{E}_p[w]+\theta D_\gamma(p\|q)\big\} \;=\; t\,\Lambda_\gamma\big(w;\theta/t\big),
    \qquad t>0,
\end{align*}
together with the fact that $\mathcal{W}$ contains a box about the origin whenever $\underline u<0<\bar u$, so that $tw\in\mathcal{W}$ for all $t\in(0,1)$ and all $w$ in that box.\footnote{If $0\notin(\underline u,\bar u)$, replace $u$ by $u-c$ for a constant $c$ in its range; both $\Lambda_\gamma$ and $\Lambda_1$ shift by $c$, so the claimed identity is unaffected.} Applying the hypothesised identity at $tw$ and dividing by $t$ gives $\Lambda_\gamma(w;\theta/t)=\Lambda_1(w;\vartheta/t)$ for every $t\in(0,1)$. Writing $\Theta:=\theta/t$ and $\beta:=\vartheta/\theta$, both sides are now expansions in the \emph{single} variable $\Theta\to\infty$:
\begin{align*}
    \bar w-\frac{\kappa_2}{2\Theta}+\frac{2-\gamma}{6}\cdot\frac{\kappa_3}{\Theta^2}+O(\Theta^{-3})
    \;=\;
    \bar w-\frac{\kappa_2}{2\beta\Theta}+\frac{1}{6}\cdot\frac{\kappa_3}{\beta^2\Theta^2}+O(\Theta^{-3}).
\end{align*}
Matching at order $\Theta^{-1}$ with $\kappa_2\ne0$ forces $\beta=1$, i.e.\ $\vartheta=\theta$; matching at order $\Theta^{-2}$ with $\kappa_3\ne0$ then forces $(2-\gamma)/6=1/6$, i.e.\ $\gamma=1$, a contradiction. The argument lives entirely at large $\Theta$, where Lemma \ref{lem:power-tilt}'s tilt is interior for every $\gamma$. It therefore suffices that $\mathcal{W}$ contain some $w$ with $\kappa_2(w)\ne0$ and $\kappa_3(w)\ne0$. For a two-valued profile taking $w_1$ with probability $\varrho$ and $w_2$ with probability $1-\varrho$, writing $d:=w_1-w_2$, one computes $\kappa_2=\varrho(1-\varrho)d^2$ and $\kappa_3=\varrho(1-\varrho)(1-2\varrho)d^3$, so both are nonzero exactly when $d\ne0$ and $\varrho\ne1/2$. If $|\Omega|=2$ with $q=(q_1,1-q_1)$ and $q_1\ne1/2$, any non-constant $w\in\mathcal{W}$ will do. If $|\Omega|\ge3$, at most one coordinate of $q$ can equal $1/2$, so choose $\omega_i$ with $q(\omega_i)\ne1/2$ and set $w:=c\mathbf{1}\{\omega=\omega_i\}$ for a small $c\ne0$; this lies in $\mathcal{W}$ by Section \ref{sec:primitives} and is two-valued with $\varrho=q(\omega_i)$.
\end{proof}

\paragraph{Scope: transport costs.} Proposition \ref{prop:kl-uniqueness} concerns divergences, which depend on $p$ only through the likelihood ratio $dp/dq$ and are therefore invariant under any relabelling of $\Omega$ preserving the law of that ratio. It does not speak to optimal-transport costs, which are defined only relative to an exogenous ground metric on $\Omega$ that the present primitives do not supply \citep{GK_2023}. The duality is also different in kind. A transport-penalized robust value equals $\inf_{\zeta\ge0}\{\zeta\delta+\mathbb{E}_q[\sup_y(w(y)-\zeta c(\cdot,y))]\}$, an infimum over one scalar multiplier of a nominal expectation of a $c$-transform of $w$. The worst case is therefore a \emph{transport} of $q$, not a density-ratio tilt of it, and the value does not close in log-sum-exp form. The one apparent exception confirms the point. In entropically regularized (``Sinkhorn'') transport a log-sum-exp does appear \citep{WGX_2021}, but it is generated by the Kullback--Leibler component of the cost and vanishes with the regularization, leaving the hard supremum of the pure transport dual. For a decision-theoretic treatment of transport-based ambiguity see \citet{PXY_2022}, and for the corresponding contrast in the theory of information costs, \citet{HW_2021}.

KL divergence and the CARA form are thus not two independent ``special'' choices but a single one: whichever divergence and whichever $\phi$ make Lemma \ref{lem:dv} close in log-sum-exp form are, by construction, the pair that makes Lemma \ref{lem:predictive-reduction} close as well.

\section{Concluding Remarks}
\label{sec:concl}

Exponential SOEU lies at the exact intersection of smooth ambiguity, misspecification robustness, and an entropic robust-control cost of information (Theorem \ref{thm:uniqueness}, holding without large-sample limits). This characterization is tight. For aggregators, matched-$\theta$ CARA is the unique path to Exponential SOEU, and mismatched curvature cannot be retargeted (Theorems \ref{thm:necessity} and \ref{thm:mismatch-impossibility}). For costs, KL is the sole power divergence with a log-sum-exp dual (Proposition \ref{prop:kl-uniqueness}). Two boundaries remain distinct: transport costs fall outside this family, and $W_{\mathrm{RI}}$ is a robust-control value, not an information-acquisition one (Proposition \ref{prop:ri-gap}, Remark \ref{rem:pst-gap}). Finally, at a fixed $\theta$, full Bayesian learning substitutes for---rather than complements---entropic ambiguity aversion (Proposition \ref{prop:degeneracy}).

Two questions follow. The first is dynamic: Exponential SOEU is a natural state variable for ambiguity updating through $V_t = -\theta\log\mathbb{E}_t[\exp(-V_{t+1}/\theta)]$, and how a dynamically consistent updating rule applied to it interacts with the learning of Proposition \ref{prop:degeneracy} is the obvious next step.

The second is empirical. Theorem \ref{thm:mismatch-impossibility} settles the theory --- $\lambda\ne\xi$ admits no single-layer entropic value at any target curvature --- but is silent on whether real decision-makers' $\lambda$ and $\xi$ coincide. Identifying model-level robustness separately from cross-model aggregation, so that matching can be checked in choice data rather than assumed, is the natural test.

\appendix
\section*{Appendix}

The appendices collect material that would interrupt the argument of the text. Appendix \ref{app:degeneracy} proves Proposition \ref{prop:degeneracy}, the learning-degeneracy result of Section \ref{sec:learning-env}, through a uniform log-likelihood-ratio bound rather than a Laplace expansion, and treats the misspecified case. Appendix \ref{app:mismatch} supplies the two derivative computations deferred in the proof of Theorem \ref{thm:mismatch-impossibility}: the first- and second-order conditions that exhaust the free parameters $(q_1,\theta')$, and the third-order obstruction that then delivers the contradiction. Appendix \ref{app:binary-richness} carries out the two-point reduction behind Remark \ref{rem:binary-richness}, showing that finiteness of $\Omega$ is a convenience rather than a restriction in Theorems \ref{thm:necessity} and \ref{thm:mismatch-impossibility}. Appendix \ref{app:ri-menu} develops the menu-based comparison between $V$, $W_{\mathrm{RI}}$ and genuine rational inattention promised in Section \ref{sec:ri-gap}, and identifies when the resulting sandwich bound is strict and when it is not.

\section{Proof of Proposition \ref{prop:degeneracy}}
\label{app:degeneracy}

Throughout this appendix we use the exact form Bayes' rule takes for i.i.d.\ sampling. Suppose there is a true model $\pi^\dagger$ generating an i.i.d.\ sequence $\omega_1,\omega_2,\ldots \sim \pi^\dagger$ on the product space $(\Omega^\infty,\mathcal{B},(\pi^\dagger)^{\otimes\infty})$,\footnote{Here $(\pi^\dagger)^{\otimes\infty}$ denotes the countably infinite product measure: $\pi^{\otimes N}$ is the law of $N$ independent draws from $\pi$, and $\pi^{\otimes\infty}$ is the law of the whole sequence $\omega_1,\omega_2,\ldots$ on $(\Omega^\infty,\mathcal{B})$, well defined by Kolmogorov's extension theorem. We carry the superscript because the almost-sure statements below concern sample paths, and $\pi^\dagger$ itself is a measure on $\Omega$ rather than on $\Omega^\infty$.} and a Bayesian with prior $\mu_0\in\Delta(\Pi)$ forms the posterior $\mu_N(\cdot\mid\omega^N)$ in the usual way after observing $\omega^N=(\omega_1,\ldots,\omega_N)$. Writing $L_N := \frac1N\sum_{i=1}^N\delta_{\omega_i}$ for the empirical distribution and using
\begin{align*}
    \sum_\omega L_N(\omega)\log\pi(\omega) \;=\; -H(L_N)-D_{\mathrm{KL}}(L_N\|\pi),
\end{align*}
Bayes' rule reduces to the exact identity
\begin{equation}
    \mu_N(\pi\mid\omega^N) = \frac{\mu_0(\pi)\,\exp(-ND_{\mathrm{KL}}(L_N\|\pi))}{\int_\Pi \mu_0(d\pi')\,\exp(-ND_{\mathrm{KL}}(L_N\|\pi'))}, \qquad \forall N,\ \forall\omega^N,
    \label{eq:bayes-gibbs}
\end{equation}
with $L_N\to\pi^\dagger$ $\pi^\dagger$-a.s.\ by the Strong Law of Large Numbers.

\begin{proof}
\emph{Necessity of the support condition.} For every $N$, \eqref{eq:bayes-gibbs} exhibits $\mu_N$ as a reweighting of $\mu_0$ by a strictly positive density, so $\mu_N\ll\mu_0$ and hence $\mu_N(\mathrm{supp}\,\mu_0)=1$. The set $\Pi\setminus\mathrm{supp}\,\mu_0$ is open in $\Pi$; if it contained $\pi^\dagger$, the portmanteau theorem would require $\liminf_N\mu_N(\Pi\setminus\mathrm{supp}\,\mu_0)\ge\delta_{\pi^\dagger}(\Pi\setminus\mathrm{supp}\,\mu_0)=1$, whereas that probability is $0$ for every $N$.

\emph{Sufficiency.} Since $\Pi$ is compact and contained in $\Delta_{++}(\Omega)$, $c:=\min_{\pi\in\Pi}\min_{\omega\in\Omega}\pi(\omega)>0$. Consider the normalized log-likelihood-ratio process
\begin{align*}
    \Lambda_N(\pi) \;:=\; \frac1N\log\frac{\mathbb{P}_{\pi^\dagger}(\omega^N)}{\mathbb{P}_{\pi}(\omega^N)}
    \;=\; \sum_\omega L_N(\omega)\log\frac{\pi^\dagger(\omega)}{\pi(\omega)}
    \;=\; D_{\mathrm{KL}}(L_N\|\pi)-D_{\mathrm{KL}}(L_N\|\pi^\dagger).
\end{align*}
The two exponents in \eqref{eq:bayes-gibbs} differ by the $\pi$-free quantity $D_{\mathrm{KL}}(L_N\|\pi^\dagger)$, which cancels between numerator and normalizing constant, so $\mu_N(d\pi)\propto\exp(-N\Lambda_N(\pi))\,\mu_0(d\pi)$ exactly. Because $\pi^\dagger,\pi\in\Pi\subseteq[c,1]^\Omega$ gives $|\log(\pi^\dagger(\omega)/\pi(\omega))|\le\log(1/c)$,
\begin{equation}
    \sup_{\pi\in\Pi}\big|\Lambda_N(\pi)-D_{\mathrm{KL}}(\pi^\dagger\|\pi)\big|
    \;=\; \sup_{\pi\in\Pi}\Big|\sum_\omega\big(L_N(\omega)-\pi^\dagger(\omega)\big)\log\tfrac{\pi^\dagger(\omega)}{\pi(\omega)}\Big|
    \;\le\; \log(1/c)\,\|L_N-\pi^\dagger\|_1 \;=:\; \eta_N,
    \label{eq:uniform-llr}
\end{equation}
and $\eta_N\to0$ $(\pi^\dagger)^{\otimes\infty}$-a.s.\ by the Strong Law of Large Numbers on the finite alphabet $\Omega$. Display \eqref{eq:uniform-llr} is the precise, \emph{uniform} content of the statement that the exponent in \eqref{eq:bayes-gibbs} concentrates at $\pi^\dagger$; note that at finite $N$ the exponent is in general maximized not at $\pi^\dagger$ but at the reverse information projection of $L_N$ onto $\Pi$, so the uniformity is what the argument actually needs.

For Borel $A,B\subseteq\Pi$ with $\mu_0(B)>0$, bounding the numerator above and the normalizing constant below using \eqref{eq:uniform-llr} gives
\begin{align*}
    \mu_N(A) \;\le\; \frac{1}{\mu_0(B)}\,
    \exp\!\Big(\!-N\Big[\inf_A D_{\mathrm{KL}}(\pi^\dagger\|\cdot)-\sup_B D_{\mathrm{KL}}(\pi^\dagger\|\cdot)-2\eta_N\Big]\Big).
\end{align*}
Fix $\varepsilon>0$ and put $A:=\Pi\setminus B_\varepsilon(\pi^\dagger)$. The map $D_{\mathrm{KL}}(\pi^\dagger\|\cdot)$ is lower semicontinuous on the compact set $A$ and strictly positive there by Gibbs' inequality (which also supplies the uniqueness of its zero), so $2\delta:=\inf_A D_{\mathrm{KL}}(\pi^\dagger\|\cdot)>0$; by continuity at $\pi^\dagger$ choose $\rho\in(0,\varepsilon)$ with $\sup_{B_\rho(\pi^\dagger)}D_{\mathrm{KL}}(\pi^\dagger\|\cdot)\le\delta/2$ and put $B:=B_\rho(\pi^\dagger)$, so that $\mu_0(B)>0$ precisely because $\pi^\dagger\in\mathrm{supp}\,\mu_0$. On the almost-sure event $\{\eta_N\to0\}$ we eventually have $2\eta_N\le\delta/2$, whence $\mu_N(\Pi\setminus B_\varepsilon(\pi^\dagger))\le\mu_0(B_\rho(\pi^\dagger))^{-1}\exp(-N\delta)\to0$. As $\varepsilon>0$ was arbitrary and $\Pi$ is a compact metric space, portmanteau gives $\mu_N\Rightarrow\delta_{\pi^\dagger}$. The Strong Law supplies a \emph{single} null set, independent of $\varepsilon$ and of the integrand, so the exceptional set does not depend on $f$.

Finally, $\pi\mapsto\phi(\mathbb{E}_\pi[u(f)])$ is bounded and continuous on $\Pi$ for each $f\in\mathcal{F}$, so weak convergence delivers the displayed limit.
\end{proof}

The argument above uses no Laplace expansion, and in particular requires neither a Lebesgue density for $\mu_0$ nor interiority of $\pi^\dagger$ in $\Pi$: what is needed is only the first-order, exponential-rate comparison \eqref{eq:uniform-llr}, i.e.\ a Laplace \emph{principle} in the sense of \citet[Ch.~1]{DE_1997} rather than a Laplace \emph{method}. The identity behind \eqref{eq:bayes-gibbs} is the method of types \citep[\S11.1]{CT_2006}; \citet{GO_1999} establish, for exactly this finite-alphabet setting, a large-deviation principle for $\mu_N$ along any sample path whose empirical distribution converges to a point of $\mathrm{supp}\,\mu_0$, with good rate function $\nu\mapsto D_{\mathrm{KL}}(\pi^\dagger\|\nu)$ on that support and $+\infty$ off it --- so that the support condition enters both as a hypothesis and as the effective domain of the rate function. They emphasize that the finiteness of $\Omega$ is what allows the statement to dispense with further conditions on the prior, since beyond finite alphabets the support condition is no longer sufficient. The support condition is \citeauthor{Sch_1965}'s (\citeyear{Sch_1965}) Kullback--Leibler support condition in the present setting: with $\Pi$ compact in $\Delta_{++}(\Omega)$, the bound $\log x\le x-1$ and Pinsker's inequality give
\begin{align*}
    \tfrac12\|\pi^\dagger-\pi\|_1^2 \;\le\; D_{\mathrm{KL}}(\pi^\dagger\|\pi) \;\le\; \|\pi^\dagger-\pi\|_1/c ,
\end{align*}
so Kullback--Leibler neighborhoods and Euclidean ones are mutually cofinal and the two support notions coincide. For pathwise concentration bounds of this kind in an economic setting, including priors without full support, see \citet{FLS_2023}.

\paragraph{The misspecified case.} Under misspecification --- that is, when the data are generated by some $\pi^*$ not belonging to $\mathrm{supp}\,\mu_0$ --- the uniform estimate \eqref{eq:uniform-llr} above shows that $\mu_N$ concentrates instead on
\begin{align*}
    \argmin_{\pi\in\mathrm{supp}\,\mu_0} D_{\mathrm{KL}}(\pi^*\|\pi),
\end{align*}
the information projection of the true law $\pi^*$ onto the support of the prior --- the classical fact of \citet{B_1966}.\footnote{Apply \eqref{eq:uniform-llr} to $-\sum_\omega L_N(\omega)\log\pi(\omega)$ in place of $\Lambda_N$; the additive entropy term $H(\pi^*)$ does not depend on $\pi$ and again cancels between numerator and normalizing constant.} Two qualifications are worth recording. The minimization is over $\mathrm{supp}\,\mu_0$ and not over $\Pi$, for the absolute-continuity reason given in the proof above; and when that argmin is not a singleton the posterior need not converge at all, though it does concentrate on the set. Uniqueness therefore requires a further condition; convexity of $\mathrm{supp}\,\mu_0$ suffices provided $\pi^*\in\Delta_{++}(\Omega)$, since $\pi\mapsto D_{\mathrm{KL}}(\pi^*\|\pi)$ is then strictly convex on $\Delta_{++}(\Omega)$. Full support of $\pi^*$ cannot be dropped: for $\pi^*=(\tfrac12,\tfrac12,0,0)$ the map is constant along the segment joining $(0.4,0.4,0.15,0.05)$ to $(0.4,0.4,0.05,0.15)$, so a convex support can still leave the minimizer non-unique. This limiting belief is the belief condition of a Berk--Nash equilibrium in the sense of \citet{EP_2016}, specialized to passive learning: because the decision-maker's actions here do not affect the data-generating process, the Kullback--Leibler objective is exogenous and no fixed point in (action, belief) arises.

\section{Second-Order Expansion for Theorem \ref{thm:mismatch-impossibility}}
\label{app:mismatch}

This appendix supplies the two derivative computations deferred in the proof of
Theorem \ref{thm:mismatch-impossibility}. Throughout, $\Pi=\{\pi_1,\pi_2\}$ and
$\mu=(m,1-m)$ with $m\in(0,1)$; we write $p_i:=\pi_i(\omega_1)$,
$\bar p:=mp_1+(1-m)p_2$, $\mathbb{E}_\mu[p^2]:=mp_1^2+(1-m)p_2^2$, and
\begin{align*}
    \sigma^2_\mu(p) \;:=\; \mathbb{E}_\mu[p^2]-\bar p^{\,2}
\end{align*}
for the $\mu$-variance of $\pi\mapsto\pi(\omega_1)$. We hold $v_\omega\equiv1$ for
every $\omega\ne\omega_1$ and set $t:=\log v_{\omega_1}$, so that
$L_i(t)=p_i\,\exp(t)+(1-p_i)$ for $i=1,2$, with
$L_i(0)=1$ and $L_i'(0)=L_i''(0)=p_i$.

\paragraph{Reduction to a single identity.} With $r=\lambda/\xi$ and
$s=\lambda/\theta'$, put
\begin{align*}
    A(t):=mL_1(t)^r+(1-m)L_2(t)^r,
    \qquad
    B(t):=q_1\,\exp(st)+(1-q_1).
\end{align*}
The hypothesised equality
$V^{\lambda,\xi}_\Pi(f)=-\theta'\log\mathbb{E}_q[\exp(-u(f)/\theta')]$ reads
$-\xi\log A(t)=-\theta'\log B(t)$; dividing by $-\lambda$ and using
$\xi/\lambda=1/r$ and $\theta'/\lambda=1/s$,
\begin{equation}
    s\log A(t) \;=\; r\log B(t)
    \label{eq:mismatch-identity}
\end{equation}
for all $t$ in an open interval around $0$. Both sides vanish at $t=0$, since
$A(0)=B(0)=1$. As $A$ and $B$ are real-analytic near $t=0$,
\eqref{eq:mismatch-identity} forces equality of every derivative at $t=0$.

\paragraph{First derivative: $q_1=\bar p$.} Differentiating $A$ once,
\begin{align*}
    A'(t)=r\big[mL_1^{r-1}L_1'+(1-m)L_2^{r-1}L_2'\big],
    \qquad A'(0)=r\bar p,
\end{align*}
so $(\log A)'(0)=A'(0)/A(0)=r\bar p$, while $B'(0)=sq_1$ gives
$(\log B)'(0)=sq_1$. Differentiating \eqref{eq:mismatch-identity} once at $t=0$
yields $s\,r\bar p=r\,sq_1$, and since $r,s\ne0$,
\begin{align*}
    q_1=\bar p,
\end{align*}
independently of $r$ and $s$.

\paragraph{Second derivative: the formula for $s$.} Since $A(0)=1$, we have
$(\log A)''(0)=A''(0)-(A'(0))^2$. Differentiating $A$ twice,
\begin{align*}
    A''(t)=r(r-1)\big[mL_1^{r-2}(L_1')^2+(1-m)L_2^{r-2}(L_2')^2\big]
    +r\big[mL_1^{r-1}L_1''+(1-m)L_2^{r-1}L_2''\big],
\end{align*}
so that $A''(0)=r(r-1)\mathbb{E}_\mu[p^2]+r\bar p$. Substituting
$\mathbb{E}_\mu[p^2]=\sigma^2_\mu(p)+\bar p^{\,2}$ and $A'(0)=r\bar p$,
\begin{align*}
    (\log A)''(0)
    &= r(r-1)\big(\sigma^2_\mu(p)+\bar p^{\,2}\big)+r\bar p-r^2\bar p^{\,2} \\
    &= r(r-1)\,\sigma^2_\mu(p)+r\,\bar p\,(1-\bar p).
\end{align*}
On the other side, $B(0)=1$, $B'(0)=sq_1$ and $B''(0)=s^2q_1$ give
\begin{align*}
    (\log B)''(0)=s^2q_1-(sq_1)^2=s^2q_1(1-q_1).
\end{align*}
Differentiating \eqref{eq:mismatch-identity} twice at $t=0$ and inserting
$q_1=\bar p$ from the previous step,
\begin{align*}
    s\,r\big[(r-1)\sigma^2_\mu(p)+\bar p(1-\bar p)\big]
    \;=\; r\,s^2\,\bar p(1-\bar p).
\end{align*}
Dividing by $rs\ne0$ and solving for $s$,
\begin{equation}
    s \;=\; 1+(r-1)\cdot\frac{\sigma^2_\mu(p)}{\bar p(1-\bar p)}.
    \label{eq:s-formula}
\end{equation}
The denominator never vanishes: $\pi_1,\pi_2\in\Delta_{++}(\Omega)$ forces
$\bar p\in(0,1)$.

\paragraph{Third derivative: the obstruction.} The first two derivatives leave a
candidate $(\theta',q)$ standing, so a third is needed. Differentiating
\eqref{eq:mismatch-identity} once more and evaluating at $t=0$ with $q_1=\bar p$
and $s$ given by \eqref{eq:s-formula}, one obtains, after simplification,
\begin{align*}
    \frac{d^3}{dt^3}\Big[s\log A(t)-r\log B(t)\Big]_{t=0}
    \;=\;
    \frac{m(1-m)\,r\,(r-1)\,(p_1-p_2)^2}{\big[\bar p(1-\bar p)\big]^2}\,
    \Phi(p_1,p_2,m,r),
\end{align*}
where $\Phi$ is a polynomial whose explicit form we do not need. What matters is
that it does not vanish identically: at the configuration used in the proof of
Theorem \ref{thm:mismatch-impossibility}, namely $p_1=\tfrac34$, $p_2=\tfrac14$
and $m=\tfrac14$, the whole expression evaluates to
\begin{equation}
    \frac{3\,r\,(r-1)^2\,(r+4)}{1600},
    \label{eq:third-derivative}
\end{equation}
which is strictly positive whenever $r>0$ and $r\ne1$, since each of $r$,
$(r-1)^2$ and $(r+4)$ is then positive. This is the contradiction.

\paragraph{Why an asymmetric prior.} The choice $m=\tfrac14$ rather than
$m=\tfrac12$ is not incidental. At $m=\tfrac12$ with $p_1+p_2=1$ the family
$A(t)$ is symmetric under $t\mapsto-t$ combined with exchanging the two models,
which kills every odd-order obstruction: the third derivative vanishes
identically there, and the argument would have to be pushed to fourth order.
Taking the prior asymmetric breaks that symmetry at the cheapest available
order. The third derivative does degenerate at other
configurations too --- it is a rational function of $(p_1,p_2,m,r)$ whose
numerator has zeros off the diagonal $p_1=p_2$ --- so the configuration in the
proof is chosen, not generic. Remark \ref{rem:binary-richness} accordingly falls
back on the second-order argument when the models are given rather than chosen.

\section{The Two-Point Reduction Behind Remark \ref{rem:binary-richness}}
\label{app:binary-richness}

This appendix verifies the claim of Remark \ref{rem:binary-richness}: under binary richness, Theorem \ref{thm:necessity} holds verbatim on an arbitrary measurable state space, and Theorem \ref{thm:mismatch-impossibility} holds in the uniform form stated as its second conclusion.

Let $\mathcal{G}:=\{\emptyset,A,A^c,\Omega\}$ and restrict attention to $\mathcal{G}$-measurable acts, that is, to $f=\ell_1\mathbf{1}_A+\ell_2\mathbf{1}_{A^c}$ with $\ell_1,\ell_2\in\Delta(X)$. For such an act $u(f)$ takes the two values $w_1:=u(\ell_1)$ and $w_2:=u(\ell_2)$, and $(w_1,w_2)$ sweeps the open square $(\underline u,\bar u)^2$ as $\ell_1,\ell_2$ vary independently. By Lemma \ref{lem:dv},
\begin{align*}
    v^\lambda_\pi(f) \;=\; -\lambda\log\Big(\pi(A)\exp(-w_1/\lambda)+\big(1-\pi(A)\big)\exp(-w_2/\lambda)\Big),
\end{align*}
which depends on $\pi$ only through the scalar $\pi(A)$; identically, $V(f)$ depends on the baseline only through $q(A)$, and the predictive baseline satisfies $\bar\pi(A)=m\pi_1(A)+(1-m)\pi_2(A)$. Hence, for $\Pi=\{\pi_1,\pi_2\}$ and $\mu=(m,1-m)$, the hypothesis of either theorem restricted to $\mathcal{G}$-measurable acts coincides exactly with the corresponding hypothesis on the two-point state space $\{A,A^c\}$, carrying the models $(\pi_i(A),1-\pi_i(A))\in\Delta_{++}(\{A,A^c\})$ and the same $\mu$. The proofs of Theorems \ref{thm:necessity} and \ref{thm:mismatch-impossibility} apply verbatim to that two-point problem. For Theorem \ref{thm:necessity} the conclusion is again obtained on all of $(\underline u,\bar u)$, since $w_1=w_2=c$ gives $v^\theta_\pi(f)=c$ for every $c$ in that interval.

For Theorem \ref{thm:mismatch-impossibility} one qualification is needed. 
Binary richness requires a \emph{single} pair of models, whereas the proof of Theorem \ref{thm:mismatch-impossibility} \emph{selects} $\pi_1(\omega_1)=\tfrac34$ and $\pi_2(\omega_1)=\tfrac14$ in order to make the third-order obstruction explicit. 
With the models given rather than chosen, what survives verbatim is the uniform conclusion --- that no single $\theta'$ works across all priors --- and it survives at second order alone. 
Letting $a:=\pi_1(A)$, $b:=\pi_2(A)$ and $\bar p:=ma+(1-m)b$, one has $\sigma^2_\mu(p)=m(1-m)(a-b)^2$, so \eqref{eq:s-formula} reads $s=1+(r-1)R(m)$ with
\begin{align*}
    R(m) \;=\; \frac{m(1-m)(a-b)^2}{\bar p\,(1-\bar p)}.
\end{align*}
As $m\to0^+$ we have $\bar p\to b\in(0,1)$ and hence $R(m)\to0$, while $R(\tfrac12)=(a-b)^2/\big(4\bar p(1-\bar p)\big)>0$ because $a\ne b$. Since $r\ne1$, two values of $m$ already force two values of $s$: either some $s$ is non-positive, in which case no admissible $\theta'=\lambda/s$ exists for that prior, or the two priors demand two different positive $\theta'$. 
Either way no single $\theta'$ serves both. Whether the \emph{stronger} single-prior conclusion also holds depends on the given pair $(a,b)$, since the third derivative can degenerate for particular values; we do not pursue that here. (When $r=1$ the same formula gives $s=1$ and $\theta'=\lambda$ for every $m$, consistently with Theorem \ref{thm:uniqueness}(b).)
\section{A Menu-Based Comparison with Genuine Rational Inattention}
\label{app:ri-menu}

Section \ref{sec:ri-gap} shows that $V(f)$ and $W_{\mathrm{RI}}(f)$ --- despite the latter's suggestive name --- are not values of any genuine information-acquisition problem about the single fixed act $f$: they are, respectively, the robust-control (\citealp{HS_2001}; \citealp{S_2011}) and Wishful Thinking (\citealp{CL_2019}; \citeauthor{RSS_2023}, \citeyear{RSS_2023}, Prop.~1) faces of one Donsker--Varadhan identity. 
Genuine Rational Inattention (\citealp{S_2003}; \citealp{MM_2015}) only becomes a non-trivial problem once there is a real \emph{menu} of acts for a stochastic choice rule to match to the acquired signal --- Proposition \ref{prop:ri-gap}'s degeneracy argument is precisely the observation that a single act leaves nothing for a choice rule to do. 
This appendix makes that comparison precise: it defines the genuine menu-level Rational Inattention value, gives it an exact closed form via Lemma \ref{lem:dv}, and shows that $V$ and $W_{\mathrm{RI}}$, maximized over the same menu, bound it from either side.

\subsection{The Genuine Rational Inattention Value}

Fix a finite menu of acts $F=\{f_1,\dots,f_K\}$ and a baseline prior $q\in\Delta_{++}(\Omega)$. A \emph{stochastic choice rule} is a map $\rho:\Omega\to\Delta(F)$, written $\rho(f\mid\omega)$, interpreted as the decision-maker choosing act $f\in F$ with probability $\rho(f\mid\omega)$ after observing a signal correlated with the true state $\omega$ (the signal itself is not modeled explicitly, following \citet{MM_2015}: only the joint distribution $q(\omega)\rho(f\mid\omega)$ over $(\omega,f)$ that some signal could induce matters for payoffs and for cost, since attention cost is posited to depend only on this joint distribution). Write $\rho(f):=\sum_{\omega}q(\omega)\rho(f\mid\omega)$ for the induced (unconditional) choice probability of $f$, and
$$
I(\rho) := \sum_{\omega\in\Omega} q(\omega)\, D_{\mathrm{KL}}\big(\rho(\cdot\mid\omega)\,\big\|\,\rho(\cdot)\big)
$$
for the mutual information between the state and the chosen act under $\rho$. The genuine Rational Inattention value of the menu $F$ at cost coefficient $\theta>0$ is
\begin{equation}
    \mathcal{RI}(q;F,\theta) := \max_{\rho:\,\Omega\to\Delta(F)} \left\{ \sum_{\omega\in\Omega} q(\omega) \sum_{f\in F} \rho(f\mid\omega)\, u(f(\omega)) \;-\; \theta\, I(\rho) \right\}.
    \label{eq:ri-menu}
\end{equation}
Unlike $\mathcal{V}^{\deg}_{\mathrm{RI}}(q)$ in \eqref{eq:ri-degenerate}, \eqref{eq:ri-menu} is a genuine choice problem: because $u(f(\omega))$ varies across $f\in F$, a choice rule that shifts probability toward the state's best act as $\omega$ varies raises the first term, and the mutual-information cost is exactly what makes such state-contingent adjustment costly rather than free. When $|F|=1$, $I_\rho\equiv0$ for the only feasible $\rho$ (there is nothing to condition choice on), and \eqref{eq:ri-menu} collapses to $\mathbb{E}_q[u(f)] = \mathcal{V}^{\deg}_{\mathrm{RI}}(q)$, consistently with Proposition \ref{prop:ri-gap}.

\subsection{An Exact Reformulation via Lemma \ref{lem:dv}}

The value \eqref{eq:ri-menu} has no closed form as it stands, because $I(\rho)$ is a nonlinear function of $\rho$ through its own marginal $\rho(\cdot)$. The next lemma removes this circularity by writing mutual information as a minimum, over an \emph{external} reference distribution $\alpha\in\Delta(F)$, of an average KL divergence --- a standard information-theoretic identity (the ``golden formula''; see, e.g., \citealp[Thm.~2.4.3]{CT_2006}) --- which we verify directly for completeness.

\begin{lemma}[Mutual Information as a Minimal Average Divergence]
\label{lem:mi-variational}
For every $\rho:\Omega\to\Delta(F)$,
$$
I(\rho) = \min_{\alpha\in\Delta(F)} \sum_{\omega\in\Omega} q(\omega)\, D_{\mathrm{KL}}\big(\rho(\cdot\mid\omega)\,\big\|\,\alpha\big),
$$
with the minimum attained uniquely at $\alpha=\rho(\cdot)$, the $\rho$-induced marginal.
\end{lemma}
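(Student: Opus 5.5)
The plan is to prove the identity by the same ``exact gap'' device used in the proof of Lemma \ref{lem:dv}. I will show that the average divergence to an arbitrary reference $\alpha$ exceeds $I(\rho)$ by exactly a single relative entropy that is nonnegative. Fix $\rho$ and write $\bar\rho:=\rho(\cdot)=\sum_\omega q(\omega)\rho(\cdot\mid\omega)$ for its induced marginal. Take any $\alpha\in\Delta(F)$ for which the objective is finite. Finiteness forces $\rho(\cdot\mid\omega)\ll\alpha$ for every $\omega$ with $q(\omega)>0$; since $q\in\Delta_{++}(\Omega)$, this gives $\bar\rho\ll\alpha$. If $\rho(\cdot\mid\omega)\not\ll\alpha$ for some $\omega$, the objective is $+\infty$, and such $\alpha$ can be discarded.

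For admissible $\alpha$, I will split the logarithm inside each divergence as
\begin{align*}
\log\frac{\rho(f\mid\omega)}{\alpha(f)}=\log\frac{\rho(f\mid\omega)}{\bar\rho(f)}+\log\frac{\bar\rho(f)}{\alpha(f)},
\end{align*}
summing only over $f$ with $\rho(f\mid\omega)>0$, so that $\bar\rho(f)>0$ there. Averaging against $q(\omega)\rho(f\mid\omega)$, the first term gives $I(\rho)$ by definition. In the second term the $\omega$-sum can be carried out first, because $\log(\bar\rho(f)/\alpha(f))$ does not depend on $\omega$. This gives $\sum_f\bar\rho(f)\log(\bar\rho(f)/\alpha(f))=D_{\mathrm{KL}}(\bar\rho\|\alpha)$. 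Hence
\begin{align*}
\sum_{\omega} q(\omega)\,D_{\mathrm{KL}}\big(\rho(\cdot\mid\omega)\,\|\,\alpha\big)=I(\rho)+D_{\mathrm{KL}}(\bar\rho\,\|\,\alpha).
\end{align*}
Since all sums are finite, the rearrangement needs no justification.

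The conclusion then follows from Gibbs' inequality. The relative entropy satisfies $D_{\mathrm{KL}}(\bar\rho\|\alpha)\ge0$, with equality iff $\alpha=\bar\rho$. So the minimum equals $I(\rho)$, it is attained at $\alpha=\bar\rho$, and any other $\alpha$, whether admissible or not, gives a strictly larger value. This establishes uniqueness.

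The only delicate point is the bookkeeping of zero-probability acts. $\alpha$ and $\bar\rho$ may vanish on some $f$, and $\rho(\cdot\mid\omega)$ may not have full support. I will handle this with the conventions $0\log0=0$ and $0\log(0/0)=0$ together with the absolute-continuity reduction above. The key observation is that $\rho(f\mid\omega)>0$ for some $\omega$ implies $\bar\rho(f)>0$, which uses the full support of $q$. With that in place, the remaining steps are routine algebra.
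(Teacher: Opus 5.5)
Your proof is correct and follows essentially the same route as the paper. You discard $\alpha$ with $\rho(\cdot\mid\omega)\not\ll\alpha$ (objective $+\infty$), split $\log(\rho(f\mid\omega)/\alpha(f))$ through the induced marginal, and average to get the exact-gap identity $\sum_\omega q(\omega)D_{\mathrm{KL}}(\rho(\cdot\mid\omega)\|\alpha)=I(\rho)+D_{\mathrm{KL}}(\rho(\cdot)\|\alpha)$, then conclude by Gibbs' inequality, with the same zero-probability bookkeeping via full support of $q$ that the paper uses.
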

\begin{proof}
If $\rho(\cdot\mid\omega)\not\ll\alpha$ for some $\omega$ then both sides below are $+\infty$ and such $\alpha$ cannot be minimizers, so assume $\rho(\cdot\mid\omega)\ll\alpha$ for every $\omega$. Note also that $I(\rho)$ is always finite: since $q\in\Delta_{++}(\Omega)$, $\rho(f)=0$ forces $\rho(f\mid\omega)=0$ at every $\omega$, so that $f$ contributes nothing to either side under the convention $0\log0=0$. Then
$$
\sum_f \rho(f\mid\omega)\log\frac{\rho(f\mid\omega)}{\alpha(f)} = \sum_f \rho(f\mid\omega)\log\frac{\rho(f\mid\omega)}{\rho(f)} + \sum_f \rho(f\mid\omega)\log\frac{\rho(f)}{\alpha(f)}.
$$
Averaging both sides over $\omega$ with weights $q(\omega)$ and using $\sum_\omega q(\omega)\rho(f\mid\omega) = \rho(f)$ termwise on the right,
$$
\sum_\omega q(\omega)D_{\mathrm{KL}}\big(\rho(\cdot\mid\omega)\big\|\alpha\big) = I(\rho) + \sum_f \rho(f)\log\frac{\rho(f)}{\alpha(f)} = I(\rho) + D_{\mathrm{KL}}\big(\rho(\cdot)\big\|\alpha\big).
$$
Since $D_{\mathrm{KL}}(\rho(\cdot)\|\alpha)\ge0$ with equality iff $\alpha=\rho(\cdot)$, the left side is minimized over $\alpha$ exactly at $\alpha=\rho(\cdot)$, where it equals $I(\rho)$.
\end{proof}

\begin{proposition}[Genuine Rational Inattention in Closed Form]
\label{prop:ri-menu-bridge}
For every finite menu $F$, every $q\in\Delta_{++}(\Omega)$, and every $\theta>0$,
\begin{equation}
    \mathcal{RI}(q;F,\theta) = \theta \max_{\alpha\in\Delta(F)} \mathbb{E}_q\Big[ \log \mathbb{E}_\alpha\big[\exp(u(f(\omega))/\theta)\big] \Big],
    \label{eq:ri-menu-closed-form}
\end{equation}
where, for each $\omega$, $\mathbb{E}_\alpha[\,\cdot\,]$ denotes $\sum_{f\in F}\alpha(f)\exp(u(f(\omega))/\theta)$. Any maximizing pair coincides at the optimum: if $\alpha^*$ attains the maximum on the right and $\rho^*$ attains the maximum in \eqref{eq:ri-menu}, then $\alpha^*=\rho^*(\cdot)$ and $\rho^*(f\mid\omega)\propto\alpha^*(f)\exp(u(f(\omega))/\theta)$. Neither maximizer need be unique --- duplicate acts in $F$ already break uniqueness.
\end{proposition}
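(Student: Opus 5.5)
The plan is to use Lemma \ref{lem:mi-variational} to turn the single maximization in \eqref{eq:ri-menu} into a joint maximization over $(\rho,\alpha)$, and then to perform the inner maximization over $\rho$ state by state with Lemma \ref{lem:dv}.

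Substituting Lemma \ref{lem:mi-variational} into \eqref{eq:ri-menu} gives
\begin{align*}
\mathcal{RI}(q;F,\theta)=\max_{\rho}\max_{\alpha\in\Delta(F)}\sum_\omega q(\omega)\Big\{\sum_f\rho(f\mid\omega)u(f(\omega))-\theta D_{\mathrm{KL}}\big(\rho(\cdot\mid\omega)\|\alpha\big)\Big\}.
\end{align*}
The sign flips the minimum in $I(\rho)$ into a maximum. Since both operations are suprema, I would then exchange their order and maximize over $\rho$ first with $\alpha$ held fixed.

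With $\alpha$ fixed, the objective separates across states. For each $\omega$ one chooses $\rho(\cdot\mid\omega)\in\Delta(F)$ to maximize $\mathbb{E}_{\rho(\cdot\mid\omega)}[w_\omega]-\theta D_{\mathrm{KL}}(\rho(\cdot\mid\omega)\|\alpha)$, where $w_\omega(f):=u(f(\omega))$. This is exactly the $W_{\mathrm{RI}}$-type dual of Lemma \ref{lem:dv}, applied on the finite space $F$ with baseline $\alpha$ and payoff profile $w_\omega$, as in the remark after that lemma that it holds for any payoff profile. Its value is $\theta\log\mathbb{E}_\alpha[\exp(w_\omega/\theta)]$, and it is attained at the Gibbs tilt $\rho(f\mid\omega)\propto\alpha(f)\exp(u(f(\omega))/\theta)$. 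Averaging over $\omega$ with weights $q$ yields the objective in \eqref{eq:ri-menu-closed-form}. Taking the maximum over $\alpha$ then gives the identity.

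The main obstacle is that Lemma \ref{lem:dv} is stated for a full-support baseline. Here $\alpha$ may have zeros, and optimal choice rules typically do, since dominated acts receive no probability. I would handle this by restricting, for each $\alpha$, to the subspace $\mathrm{supp}\,\alpha$. Any $\rho(\cdot\mid\omega)\not\ll\alpha$ has infinite cost, so the problem lives entirely on $\mathrm{supp}\,\alpha$. On that subspace $\alpha$ has full support, and the proof of Lemma \ref{lem:dv} (the identity with $\theta D_{\mathrm{KL}}(p\|p^*)$) applies verbatim to all $p\ll\alpha$, boundary points included.

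Attainment also needs an argument. On the right-hand side, $\alpha\mapsto\mathbb{E}_q[\log\mathbb{E}_\alpha[\cdot]]$ is continuous on the compact set $\Delta(F)$, because the inner sum is bounded below by $\min_f\exp(\underline u/\theta)>0$. So a maximizer $\alpha^*$ exists, and its Gibbs tilt $\rho^*$ attains the left-hand side.

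The remaining step is to show that the two maximizers coincide in the sense claimed. Let $\rho^*$ be optimal in \eqref{eq:ri-menu}. Then the pair $(\rho^*,\rho^*(\cdot))$ is jointly optimal, because Lemma \ref{lem:mi-variational}'s minimizer is exactly $\rho^*(\cdot)$. Hence $\rho^*(\cdot)$ is an optimal $\alpha$, and $\rho^*$ must be the Gibbs tilt of it. The tilt is unique, since $D_{\mathrm{KL}}(p\|p^*)=0$ forces $p=p^*$.

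Conversely, let $\alpha^*$ be optimal on the right, and let $\rho^*$ be its tilt, which is then optimal. The pair $(\rho^*,\alpha^*)$ attains the joint maximum. By the uniqueness in Lemma \ref{lem:mi-variational} at fixed $\rho^*$, this forces $\alpha^*=\rho^*(\cdot)$. Thus any maximizers pair up as stated.

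Non-uniqueness is established by the example in the statement: duplicating an act lets mass be split arbitrarily between the copies without changing the value.
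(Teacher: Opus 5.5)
Your proposal is correct and follows the paper's proof essentially step for step: you use Lemma \ref{lem:mi-variational} to pass to a joint maximization, interchange the two maxima, apply Lemma \ref{lem:dv} state by state after restricting to $\mathrm{supp}\,\alpha$, and use the uniqueness in Lemma \ref{lem:mi-variational} to force $\alpha^*=\rho^*(\cdot)$. Your explicit attainment argument and your two-directional treatment of the pairing claim are slightly more careful than the paper, which simply takes any maximizing pair to attain the joint maximum. In your treatment, each optimal $\rho^*$ is the tilt of an optimal $\alpha=\rho^*(\cdot)$, and each optimal $\alpha^*$ has an optimal tilt whose marginal is $\alpha^*$.
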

\begin{proof}
By Lemma \ref{lem:mi-variational}, $-\theta I(\rho) = \max_{\alpha\in\Delta(F)}\big\{-\theta\sum_\omega q(\omega) D_{\mathrm{KL}}(\rho(\cdot\mid\omega)\|\alpha)\big\}$ for every $\rho$, so \eqref{eq:ri-menu} becomes a joint maximization,
$$
\mathcal{RI}(q;F,\theta) = \max_{\rho}\max_{\alpha\in\Delta(F)} \sum_\omega q(\omega) \Big\{ \sum_f \rho(f\mid\omega)\,u(f(\omega)) - \theta\, D_{\mathrm{KL}}\big(\rho(\cdot\mid\omega)\big\|\alpha\big) \Big\},
$$
with no minimization anywhere, so the order of the two maxima is immaterial and, for fixed $\alpha$, the inner maximization over $\rho$ separates across $\omega\in\Omega$ (each $\rho(\cdot\mid\omega)\in\Delta(F)$ is chosen independently, and $q(\omega)\ge0$ are fixed weights). Fix $\alpha$ and $\omega$. Lemma \ref{lem:dv} requires a full-support reference measure, and $\alpha$ may lie on the boundary of $\Delta(F)$ --- as the equality discussion below shows it sometimes must. This costs nothing: writing $F_\alpha:=\mathrm{supp}\,\alpha$, we have $D_{\mathrm{KL}}(p\|\alpha)=+\infty$ unless $p\ll\alpha$, so the maximum over $\Delta(F)$ equals the maximum over $\Delta(F_\alpha)$, on which $\alpha$ has full support. Applying Lemma \ref{lem:dv} there, with $\Omega$ relabeled as $F_\alpha$, $q$ relabeled as $\alpha$, and $u(f(\omega))$ (a fixed real number for each $f$, since $\omega$ is held fixed) in place of $u(f)$, together with $u\mapsto-u$ to convert the min in \eqref{eq:anchor} to a max,
$$
\max_{p\in\Delta(F)} \left\{ \mathbb{E}_p\big[u(f(\omega))\big] - \theta D_{\mathrm{KL}}(p\|\alpha) \right\} = \theta \log \mathbb{E}_\alpha\big[\exp(u(f(\omega))/\theta)\big],
$$
attained uniquely at $p^*(f) \propto \alpha(f)\,\exp(u(f(\omega))/\theta)$. Taking $\rho(\cdot\mid\omega):=p^*$ at every $\omega$ and summing the resulting identity against $q(\omega)$ gives
\begin{align*}
    \mathcal{RI}(q;F,\theta)
    &= \max_{\alpha\in\Delta(F)} \sum_\omega q(\omega)\,\theta\log\mathbb{E}_\alpha\big[\exp(u(f(\omega))/\theta)\big] \\
    &= \theta\max_{\alpha\in\Delta(F)}\mathbb{E}_q\Big[\log\mathbb{E}_\alpha\big[\exp(u(f(\omega))/\theta)\big]\Big],
\end{align*}
which is \eqref{eq:ri-menu-closed-form}. For the final claim, write $J(\rho,\alpha):=\sum_\omega q(\omega)\{\sum_f\rho(f\mid\omega)u(f(\omega))-\theta D_{\mathrm{KL}}(\rho(\cdot\mid\omega)\|\alpha)\}$, so that any maximizing pair $(\rho^*,\alpha^*)$ attains $\max_{\rho,\alpha}J$. In particular $\alpha^*$ maximizes $\alpha\mapsto J(\rho^*,\alpha)$ with $\rho^*$ held fixed. But that map equals a constant depending only on $\rho^*$ minus $\theta\sum_\omega q(\omega)D_{\mathrm{KL}}(\rho^*(\cdot\mid\omega)\|\alpha)$, which by Lemma \ref{lem:mi-variational} is uniquely maximized at $\alpha=\rho^*(\cdot)$. Hence $\alpha^*=\rho^*(\cdot)$.
\end{proof}

Equation \eqref{eq:ri-menu-closed-form} is exactly the ``expected-log'' value that appears as \citeauthor{RSS_2023}'s (\citeyear{RSS_2023}) equation (7); reading the menu as a set of assets, their Proposition 2 identifies the solutions of the rational-inattention problem with the growth-optimal (Kelly) portfolios, by the same Donsker--Varadhan mechanics. They note that this equivalence is in turn mathematically equivalent to Lemma 2 of \citet{MM_2015}. The derivation above reaches \eqref{eq:ri-menu-closed-form} directly from Lemma \ref{lem:dv}, already established for the main text, rather than importing the portfolio analogy as a black box. Note that \citeauthor{RSS_2023} normalize the entropy coefficient to one and write $p$ for the reference prior and $q$ for the stochastic choice rule, the reverse of the convention used here. Its defining feature --- an outer expectation over $\omega$ of an inner log of an $\alpha$-mixture over $f\in F$ --- is structurally different from $V$ and $W_{\mathrm{RI}}$, both of which are a single log-sum-exp (equivalently, KL-tilt) over $\Omega$ with no menu-mixture inside and no further expectation over $\omega$ outside; this is the sense in which genuine Rational Inattention is not a relabeling of \eqref{eq:wri}, and why an exact identity between them should not be expected once $|F|>1$.

\subsection{A Sandwich Bound}

Although $\mathcal{RI}(q;F,\theta)$ and $\max_{f\in F}W_{\mathrm{RI}}(f)$ are generically distinct once $|F|>1$, they are not unrelated: the same menu $F$ bounds one below and above the other, with $V$ closing off the other side.

\begin{proposition}[$V$ and $W_{\mathrm{RI}}$ Sandwich Genuine Rational Inattention]
\label{prop:ri-sandwich}
For every finite menu $F$, every $q\in\Delta_{++}(\Omega)$, and every $\theta>0$,
\begin{equation}
    \max_{f\in F} V(f) \;\le\; \max_{f\in F}\mathbb{E}_q[u(f)] \;\le\; \mathcal{RI}(q;F,\theta) \;\le\; \max_{f\in F} W_{\mathrm{RI}}(f).
    \label{eq:sandwich}
\end{equation}
All four quantities coincide when $|F|=1$ (Proposition \ref{prop:ri-gap}). For $|F|>1$ the second and third inequalities may be strict or may bind; the proof records exactly when each holds with equality.
\end{proposition}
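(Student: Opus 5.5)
I will prove the three inequalities in \eqref{eq:sandwich} separately, since each uses a different earlier result. The first is immediate from Proposition \ref{prop:ri-gap}(ii): $V(f)\le\mathbb{E}_q[u(f)]$ for every $f\in F$, and taking the maximum over $F$ on both sides preserves the inequality.

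For the second inequality I will restrict the choice rule in \eqref{eq:ri-menu}. Take $f^\circ\in\argmax_{f\in F}\mathbb{E}_q[u(f)]$ and the state-independent rule $\rho(\cdot\mid\omega)=\delta_{f^\circ}$. Its marginal is $\delta_{f^\circ}$, so $I(\rho)=0$, and the payoff term is $\mathbb{E}_q[u(f^\circ)]$. Since this rule is feasible, the maximum in \eqref{eq:ri-menu} is at least this value. Equivalently, in the closed form \eqref{eq:ri-menu-closed-form} I can take $\alpha=\delta_{f^\circ}$.

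For the third inequality I will use the closed form \eqref{eq:ri-menu-closed-form} together with concavity of $\log$. For any $\alpha\in\Delta(F)$, Jensen applied to the outer $\mathbb{E}_q$ gives
\begin{align*}
\theta\,\mathbb{E}_q\big[\log\mathbb{E}_\alpha[\exp(u(f(\omega))/\theta)]\big]
&\le \theta\log\mathbb{E}_q\mathbb{E}_\alpha[\exp(u(f)/\theta)]\\
&= \theta\log\textstyle\sum_{f}\alpha(f)\,\mathbb{E}_q[\exp(u(f)/\theta)].
\end{align*}
A convex combination is bounded by its largest term, so this is at most $\max_{f\in F}\theta\log\mathbb{E}_q[\exp(u(f)/\theta)]$. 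By Lemma \ref{lem:dv} applied to $-u(f)$, as in Lemma \ref{lem:legendre}, that last quantity equals $\max_{f\in F}W_{\mathrm{RI}}(f)$. Maximizing over $\alpha$ then completes the chain. When $|F|=1$, $I\equiv0$, and Proposition \ref{prop:ri-gap}(i) collapses the two middle terms. Note, however, that $V$ and $W_{\mathrm{RI}}$ coincide with them only when $u(f)$ is constant, so I read "coincide" in the statement as referring to the middle terms, per Proposition \ref{prop:ri-gap}.

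The main work is the equality accounting. For the third inequality, equality requires the Jensen step to be tight at an optimal $\alpha^*$: $\mathbb{E}_{\alpha^*}[\exp(u(f(\omega))/\theta)]$ must be $q$-a.s. constant in $\omega$. It also requires $\alpha^*$ to be supported on maximizers of $\mathbb{E}_q[\exp(u(f)/\theta)]$. For the second inequality, equality holds iff a degenerate $\alpha=\delta_{f^\circ}$ is optimal in \eqref{eq:ri-menu-closed-form}. I would check this via the first-order (Kuhn--Tucker) conditions of the concave program over $\Delta(F)$: $\mathbb{E}_q\big[\exp(u(g)/\theta)/\exp(u(f^\circ)/\theta)\big]\le1$ for all $g\in F$. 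I expect this characterization, and exhibiting both strict and binding menus, to be the fiddliest step. The inequalities themselves are routine.
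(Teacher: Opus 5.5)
Your proof of the three inequalities is the paper's own: the state-independent rule with $I(\rho)=0$, then Jensen on $\log$, Fubini, the convex-combination bound, and $W_{\mathrm{RI}}(f)=\theta\log\mathbb{E}_q[\exp(u(f)/\theta)]$; your equality conditions for the third inequality also match the paper's, and your Kuhn--Tucker test $\mathbb{E}_q[\exp((u(g)-u(f^\circ))/\theta)]\le1$ for all $g\in F$ sharpens the paper's condition for the second into a checkable one, reproducing its threshold $\theta\ge1/\log(7/3)\approx1.18$ for $q=(0.7,0.3)$. Your caveat about $|F|=1$ is correct and corrects the paper: by Proposition~\ref{prop:ri-gap}(ii), $V(f)<\mathbb{E}_q[u(f)]=\mathcal{RI}(q;\{f\},\theta)<W_{\mathrm{RI}}(f)$ whenever $u(f)$ is non-constant, so in general only the two middle terms collapse, not all four as the statement and its proof assert.
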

\begin{proof}
\emph{First inequality.} By Lemma \ref{lem:dv}, $V(f)\le\mathbb{E}_q[u(f)]$ for every act $f$ (Proposition \ref{prop:ri-gap}'s proof, applied to each $f\in F$ individually), so taking the maximum over $f\in F$ on both sides preserves the inequality.

\emph{Second inequality.} Let $f^*\in\arg\max_{f\in F}\mathbb{E}_q[u(f)]$ and take $\rho^{\deg}(f\mid\omega):=\mathbf{1}\{f=f^*\}$ for every $\omega$ (choose $f^*$ regardless of the state). Then $I(\rho^{\deg})=0$, since $\rho^{\deg}(\cdot\mid\omega)=\rho^{\deg}(\cdot)=\delta_{f^*}$ for every $\omega$, so $\rho^{\deg}$ is feasible in \eqref{eq:ri-menu} at zero cost and attains $\sum_\omega q(\omega)u(f^*(\omega)) = \mathbb{E}_q[u(f^*)] = \max_{f\in F}\mathbb{E}_q[u(f)]$; since $\mathcal{RI}(q;F,\theta)$ maximizes over \emph{all} feasible $\rho$, it is at least this value.

\emph{Third inequality.} By \eqref{eq:ri-menu-closed-form} and Jensen's inequality (twice), for any $\alpha\in\Delta(F)$,
\begin{align*}
    \mathbb{E}_q\Big[\log\mathbb{E}_\alpha\big[\exp(u(f(\omega))/\theta)\big]\Big]
    &\;\le\; \log\,\mathbb{E}_q\Big[\mathbb{E}_\alpha\big[\exp(u(f(\omega))/\theta)\big]\Big] \\
    &\;=\; \log \sum_{f\in F}\alpha(f)\,\mathbb{E}_q\big[\exp(u(f)/\theta)\big] \\
    &\;\le\; \log \max_{f\in F}\mathbb{E}_q\big[\exp(u(f)/\theta)\big],
\end{align*}
where the first step is Jensen's inequality applied to the concave function $\log(\cdot)$ (over the randomness in $\omega\sim q$, for fixed $\alpha$), the middle step is Fubini (both expectations are finite sums), and the last step bounds a convex combination over $f\in F$ by its largest term. The right side does not depend on $\alpha$, so taking $\max_\alpha$ on the left preserves the bound: $\max_\alpha\mathbb{E}_q[\log\mathbb{E}_\alpha[\exp(u(f(\omega))/\theta)]] \le \log\max_{f\in F}\mathbb{E}_q[\exp(u(f)/\theta)] = \max_{f\in F}\log\mathbb{E}_q[\exp(u(f)/\theta)]$. Multiplying by $\theta$ and using \eqref{eq:ri-menu-closed-form} on the left and, by Lemma \ref{lem:dv} applied to $-u(f)$ (as in the proof of Proposition \ref{prop:ri-gap}), $W_{\mathrm{RI}}(f) = \theta\log\mathbb{E}_q[\exp(u(f)/\theta)]$ on the right, gives $\mathcal{RI}(q;F,\theta) \le \max_{f\in F}W_{\mathrm{RI}}(f)$.

\emph{Collapse at $|F|=1$, and the equality cases.} When $F=\{f\}$, $\mathcal{RI}(q;F,\theta)=\mathbb{E}_q[u(f)]$ (noted after \eqref{eq:ri-menu}), so all four terms in \eqref{eq:sandwich} coincide with $\mathbb{E}_q[u(f)]$, consistently with Proposition \ref{prop:ri-gap}. For $|F|>1$ both middle inequalities can still bind, and it is worth recording exactly when.

The second is an equality precisely when the rational-inattention problem is solved by a state-independent rule, that is, when $\rho^{\deg}$ itself attains the maximum in \eqref{eq:ri-menu}. It is therefore not enough that $\argmax_{f\in F} u(f(\omega))$ vary with $\omega$: the attention cost must also be low enough relative to the payoff spread for a state-contingent rule to pay for itself. For $\Omega=\{\omega_1,\omega_2\}$ with $q=(0.7,0.3)$, $u(f_1)=(1,0)$ and $u(f_2)=(0,1)$, the optimum acquires no information at all once $\theta$ exceeds approximately $1.18$, and the second inequality is then an equality even though the state-optimal act varies with $\omega$.

The third is an equality precisely when both of the last two steps of its proof bind --- the application of Jensen's inequality and the bounding of a convex combination by its largest term --- that is, when
\begin{align*}
    \mathrm{supp}\,\alpha^* \ &\subseteq\ \argmax_{f\in F}\ \mathbb{E}_q\big[\exp(u(f)/\theta)\big], \\
    \omega\ \mapsto &\sum_{f\in F}\alpha^*(f)\exp\big(u(f(\omega))/\theta\big)\ \text{ is $q$-almost surely constant.}
\end{align*} (These conditions are stated at some, equivalently any, maximizer $\alpha^*$.) Symmetric menus satisfy both simultaneously: for $\Omega=\{\omega_1,\omega_2\}$ with $q$ uniform, $u(f_1)=(1,0)$ and $u(f_2)=(0,1)$, the maximizing $\alpha^*$ is uniform, the displayed mixture does not vary with $\omega$, and $\mathcal{RI}(q;F,\theta)=\max_{f\in F}W_{\mathrm{RI}}(f)$ at every $\theta>0$. We do not characterize the strict cases beyond this; \eqref{eq:sandwich} is what Proposition \ref{prop:ri-sandwich} asserts, and the two displayed conditions are exactly when it holds with equality.
\end{proof}

Proposition \ref{prop:ri-sandwich} is the precise sense in which $V$ and $W_{\mathrm{RI}}$, though neither is itself a genuine Rational Inattention value even once a menu is introduced, are not unrelated to one: maximized over the same menu, they are respectively a lower and an upper bound on it, meeting only in the degenerate single-act case of Proposition \ref{prop:ri-gap}. 
The gap in \eqref{eq:sandwich} captures the value of \emph{state-contingent} attention (absent for singletons, per Proposition \ref{prop:ri-gap}). Closing it requires abandoning the uniquely compatible log-sum-exp form (Theorem \ref{thm:uniqueness}(c) and Proposition \ref{prop:kl-uniqueness}).
We leave a full behavioral characterization of $\mathcal{RI}(q;F,\theta)$ itself, in the spirit of Assumption E1--E3 above, for future work.

\section*{Acknowledgements}
The author declares that he has no known competing financial interests or personal relationships that could have appeared to influence the work reported in this paper. All remaining errors are my own.

\subsection*{Declaration of generative AI and AI-assisted technologies in the writing process}
During the preparation of this work, the author used Claude (Anthropic), and Google Gemini, in order to edit the draft, including English editing, proofread it, and verify references.
After using these tools, the author reviewed and edited the content as needed and takes full responsibility for the content of this paper.

\bibliographystyle{econ-econometrica.bst}
\bibliography{literature}

\end{document}